\documentclass[12pt]{amsart}
\usepackage{epsfig}
\usepackage{amsfonts}
\usepackage{amsmath}
\setcounter{MaxMatrixCols}{10}
\textheight=600pt

\textwidth=450pt
\hoffset=-50pt
\newtheorem{thm}{Theorem}[section]

\newtheorem{cor}[thm]{Corollary}
\newtheorem{lem}[thm]{Lemma}
\newtheorem{prop}[thm]{Proposition}
\newtheorem*{prob*}{Problem}
\newtheorem*{thm*}{Theorem}

\theoremstyle{definition}

\newtheorem*{defn*}{Definition}
\newtheorem{rem}[thm]{Remark}
\newtheorem*{Remarks}{Remarks}
\newtheorem*{rem*}{Remark}
\numberwithin{equation}{section}

\newcommand{\C}{\mathbb C}

\newcommand{\R}{\mathbb R}

\newcommand{\X}{\mathfrak{X}}

\newcommand{\Pc}{\mathcal{P}}

\DeclareMathOperator{\const}{const}

\DeclareMathOperator{\E}{\mathbb{E}}
\DeclareMathOperator{\Ginibre}{Ginibre}
\DeclareMathOperator{\interpol}{interpol}

\newcommand{\re}{\mathop{\mathrm{Re}}}

\newcommand{\Tr}{\mathop{\mathrm{Tr}}}

\begin{document}
\title[Product matrix processes for  multi-matrix models]
 {\bf{Product matrix processes for coupled multi-matrix models and their hard edge scaling limits}}

\author{Gernot Akemann}
\address{Faculty of Physics,  Bielefeld University,  
P.O. Box 100131, 33501 Bielefeld, Germany
} \email{akemann@physik.uni-bielefeld.de}

\author{Eugene Strahov}
\address{Department of Mathematics, The Hebrew University of
Jerusalem, Givat Ram, Jerusalem
91904, Israel}\email{strahov@math.huji.ac.il}

\keywords{Products of  random matrices, multi-matrix models, multi-level determinantal point processes,
interpolating ensembles}

\commby{}
\begin{abstract}

Product matrix processes are multi-level point processes formed by the singular values of random matrix products.
In this paper we study such processes where the products of up to $m$ complex random matrices are no longer independent,
by introducing a coupling term and potentials for each product.
We show that such a process
still forms a multi-level determinantal point processes,
and  give formulae for the relevant correlation functions in terms of the corresponding kernels.

For a special choice of potential, leading to a Gaussian coupling between the $m$th matrix 
and the product of all previous $m-1$ matrices,
we derive a contour integral representation for the correlation kernels suitable for an asymptotic analysis of large matrix size $n$.
Here, the correlations between the first $m-1$ levels equal that of the product of $m-1$ independent matrices,
whereas all correlations with the $m$th level are modified.
In the hard edge scaling limit at the origin of the spectra of all products
we find three different asymptotic regimes.
The first regime corresponding to weak coupling agrees with the multi-level process for the product of $m$
independent complex Gaussian matrices for all levels, including the $m$-th. This process was introduced by one of the authors and
can  be understood as a multi-level extension of the Meijer $G$-kernel introduced by Kuijlaars and Zhang.
In the second asymptotic regime at strong coupling the point process on level $m$ collapses onto level $m-1$,
thus leading to the process of $m-1$ independent matrices.
Finally, in an intermediate regime where the coupling is proportional to $n^{\frac12}$,
we obtain a family of parameter dependent kernels, 
interpolating between the limiting processes in the weak and strong coupling regime.
These findings generalise previous results of the authors and their coworkers for $m=2$.
\end{abstract}

\maketitle
\tableofcontents
\section{Introduction}\label{SectionIntroduction}
By a random \textit{multi-matrix model} one usually means a probability measure defined on a space  formed by a collection of matrices
of the same type. Probably the best known example is that of Hermitian matrices coupled in a chain
considered by Eynard and Mehta \cite{EynardMehta}. This example represents the simplest case of multi-matrix models known from applications in quantum field theory.
Another class of examples consists of multi-matrix models of positive Hermitian matrices subject to the Cauchy interaction. This class of multi-matrix models
was introduced in  Bertola, Gekhtman, and  Szmigielski \cite{BertolaGekhtmanSzmigielski}, and studied
further in  Bertola, Gekhtman, and  Szmigielski \cite{BertolaGekhtmanSzmigielski1,BertolaGekhtmanSzmigielski2}, and in
 Bertola and  Bothner \cite{BertolaBothner}. For other examples of multi-matrix models, and for an explanation of their relevance
 to quantum field theory and to statistical mechanics we refer the reader to Eynard,  Kimura, and Ribault \cite[Section 2.2.]{EynardKimuraRibault},
 Filev and O'Connor \cite{FilevO'Connor}, Bertola,  Eynard,  Harnad \cite{BertolaEynardHarnad}, and references therein.

 The starting point of the present work is the observation that the problem about the distribution of singular values for a product of independent complex
 Gaussian matrices can be reformulated in terms of a multi-matrix model. This multi-matrix model can be defined by the probability measure
 \begin{equation}\label{MultiMatrixGinibreProbabilityMeasure}
 \begin{split}
\frac{1}{Z_n}&e^{-\sum\limits_{l=1}^m\Tr\left[G_l^*G_l\right]}\prod\limits_{l=1}^mdG_l.
\end{split}
\end{equation}
over the set of rectangular complex matrices $\left(G_1,\ldots,G_m\right)$, where $G_l$ is of size
$\left(n+\nu_l\right)\times \left(n+\nu_{l-1}\right)$, $\nu_0=0$,  $\nu_1\geq 0$, $\ldots$, $\nu_{m-1}\geq 0$,
$dG_l$ is the corresponding flat complex Lebesgue measure, and $Z_n$ is a normalisation constant.
Given this probability measure one can ask about the distribution of complex eigenvalues and of squared singular values of the total product matrix
$Y_m=G_m\cdots G_1$. It turns out that the eigenvalues of $Y_m$ form a determinantal point process on the complex plane
which can be understood as a generalisation of the classical Ginibre ensemble. This fact was first proved in Akemann and Burda \cite{Akemann1}, see
Adhikari,  Kishore Reddy,  Ram Reddy, and  Saha \cite{Adhikari}, Forrester \cite{Forrester}, Ipsen \cite{Ipsen},
 Akemann,  Ipsen, and Strahov \cite{AkemannIpsenStrahov}, Forrester and Ipsen \cite{ForresterIpsen} for different proofs and extensions of this result.
Moreover, it was shown in Akemann,  Kieburg, and   Wei \cite{AkemannKieburgWei}, and in Akemann,  Ipsen, and Kieburg \cite{AkemannIpsenKieburg}
that the joint probability density function of the squared singular values of $Y_m$ forms a
determinantal point process, representing a special
polynomial ensemble, a notion introduced later by Kuijlaars and Stivigny \cite{ArnoDries}.
A contour integral representation for the correlation kernel of this ensemble was derived in Kuijlaars and Zhang \cite{KuijlaarsZhang},
which enabled a detailed analysis of different scaling limits, see   Kuijlaars and Zhang \cite{KuijlaarsZhang},   Liu,  Wang, and Zhang \cite{LiuWangZhang}.

 A natural (multi-level) generalisation of the point process formed by the squared singular values of  the total product matrix
$Y_m$ can be constructed as follows. For each $l\in\left\{1,\ldots,m\right\}$, denote by $y_j^l$, $j=1,\ldots,n$,
the squared singular values of the partial product $Y_l=G_l\cdots G_1$, that is the eigenvalues of $Y_l^*Y_l$. The configuration
\begin{equation}\label{configuration}
\left\{\left(l,y_j^l\right)\vert l=1,\ldots,m;j=1,\ldots,n\right\}
\end{equation}
of all eigenvalues
forms a point process on $\left\{1,\ldots,m\right\}\times\R_{>0}$. This process was introduced and  studied in Strahov \cite{StrahovD}, calling it the \textit{Ginibre product process}. It was shown in \cite{StrahovD} that it is a multi-level determinantal point process. Furthermore, a contour integral representation for the correlation kernel of this process was derived, and its hard edge scaling limit computed.

Here, we  will drop the assumption that the matrices $G_1$, $\ldots$, $G_m$ are independent as in (\ref{MultiMatrixGinibreProbabilityMeasure}),
but assume instead that  these or their products becomes coupled. We can still  construct and investigate the multi-level point process formed by configurations (\ref{configuration}) and refer to it as the \textit{product matrix process} associated with
$G_1$, $\ldots$, $G_m$. It is an interesting general problem to describe statistical properties of such product matrix processes,
their relevant correlation functions, and their scaling limits. Of course it is desirable that it still forms a multi-level determinantal point process, with explicit formulae for the correlation kernel and thus the correlation functions.
An example for such a setup is the multi-matrix model for Hurwitz numbers \cite{AmbjornChekov}. In this paper we introduce and study a different multi-matrix model of statistically dependent random matrices, also satisfying these requirements. We show  (see Theorem \ref{TheoremMostGerneralCorrelationKernel}) that these product matrix processes generalise the
Ginibre product process studied in Strahov \cite{StrahovD}.
For a particular choice of potentials
the first $m-1$ levels remain that of products of $m-1$ independent random matrices, to which the $m$-th level is coupled.
For this example we derive contour integral representations for the relevant correlation kernels,
see Theorem \ref{TheoremDoubleIntegralRepresentationExactKernel}.

In the hard edge scaling limit  at large matrix sizes we distinguish three different asymptotic regimes.  In the first regime corresponding to
 weak coupling,  the limiting  product matrix process is the same as for the product of $m$
independent complex Gaussian matrices, as described in Strahov \cite{StrahovD}. This process can be called the $m$-level Meijer $G$-kernel process, a multi-level extension of the Meijer $G$-kernel process introduced by Kuijlaars and Zhang \cite{KuijlaarsZhang}.
In the second asymptotic regime at strong coupling the $m$-th level collapses to the point process on the $(m-1)$-th level.
In other words, the correlations between the $m$-th level and the levels up to $m-2$ are the same as the respective correlations between the $(m-1)$-th and the levels up to $m-2$.
The correlations between level $m$ and $m-1$ are as if they are on the same level, except at colliding points where we find a contact interaction in form of a Dirac delta function. Finally, at intermediate coupling we obtain a  limiting process of $m$ levels, \textit{interpolating} between that in the weak coupling regime and that in the strong coupling regime.
All three hard edge scaling limits described above are given by Theorem \ref{TheoremHardEdgeGinibreCouplingProcess}
that we consider as the main achievement of the present work.

We note that interpolating ensembles are of great interest in Random Matrix Theory. In the literature one can find examples interpolating between Gaussian ensembles of different symmetry classes - one classical example is that of Pandey and Mehta \cite{PandeyMehta}
that interpolates between the Gaussian Orthogonal and Gaussian Unitary Ensemble. Determinantal processes whose edge behaviour
interpolates between the Poisson process and the Airy process
were considered in Moshe,  Neuberger, and Shapiro \cite{Moshe}, and studied
further by Johansson \cite{JohanssonGumbel}. In the context of products of random matrices, a determinantal process with $m=2$
interpolating between the classical Bessel-kernel process (at $m=1$), and  the Meijer $G$-kernel process for the product of $m=2$
independent Gaussian matrices was obtained by the authors in \cite{AkemannStrahov,AkemannStrahov1}, and further extended most recently by Liu \cite{DZLm2} and a joint work \cite{ACLS}.
However, we are not aware of further examples of multi-level interpolating ensembles, with properties described
by Theorem \ref{TheoremInterpolation}.
In particular, the results mentioned above lead to three different scaling limits for the biorthogonal ensemble
formed by the squared singular values of the total product matrix $Y_m$: that of the Meijer $G$-kernel process for the product of
  $m$ ($m-1$) independent rectangular matrices with complex Gaussian entries at weak (strong) coupling, and  a
  determinantal process interpolating  between these correlation kernels,
  see Corollary \ref{CorollaryTheoremHardEdgeGinibreCouplingProcess}.

 This paper is organised as follows. In Section  \ref{SectionExactResults} we present exact results at finite matrix sizes.
 In particular, in Section  \ref{SectionExactResults} we define a family of multi-matrix models,
 and introduce the product matrix process associated with this family. We show that the product matrix process under considerations
 is a multi-level determinantal process, and give formulae for the correlation kernels.
 In Section \ref{SectionHardEdgeScalingLimits} we compute the hard edge scaling limits in our example,
 corresponding to different asymptotic regimes, and in Section \ref{SectionInterpolatingProcess} we describe the properties
 of the obtained interpolating multi-level determinantal process. Sections \ref{SectionIntegrationFormula}-\ref{SectionProofTheoremInterpolation}
 contain the proofs our statements.\\[2ex]
 
 \textbf{Acknowledgements.} We are very grateful to Marco Bertola, Tomasz Checinski and Mario Kieburg for discussions and useful comments. The anonymous referee is also thanked for several comments and corrections.
This work was supported by the 
DFG through grant AK35/2-1 
and CRC 1283 "Taming uncertainty and profiting from randomness 
and low regularity in analysis, stochastics and their applications"
(G.A.).

\newpage

 \section{Exact results for general coupling and finite matrix size}\label{SectionExactResults}
 \subsection{Coupled multi-matrix models with general potential functions}\label{SectionExactResults1}
 Fix $a>0$, $b>0$, and  consider a multi-matrix model defined by the probability distribution
 \begin{equation}\label{MainProbabilityMeasure}
 \begin{split}
&\frac{1}{Z_n}
\exp\left[-a\sum\limits_{l=1}^m\Tr\left[G_l^*G_l\right]+b\Tr\left[G_m\cdots G_1+\left(G_m\cdots G_1\right)^*\right]
\right]
\\
&\times
\exp\left[
-\sum\limits_{l=1}^m\Tr \left[V_l\left(\left(G_{l}\cdots G_1\right)^*\left(G_{l}\cdots G_1\right)\right)\right]\right]
\prod\limits_{l=1}^mdG_l,
\end{split}
\end{equation}
normalised by the constant $Z_n$. It depends on $m$
rectangular complex matrices 
$G_l$ is  of size
$\left(n+\nu_l\right)\times \left(n+\nu_{l-1}\right)$ for $l=1,\ldots,m$, with
$\nu_0=0$, $\nu_m=0$, $\nu_1\geq 0$, $\ldots$, $\nu_{m-1}\geq 0$,
and where $dG_l$ is the corresponding flat complex Lebesgue measure. Here,  $V_l$ (the potentials) are some scalar positive functions
which are continuous and grow fast enough at infinity to guarantee the convergence of the corresponding matrix measure (\ref{MainProbabilityMeasure}).

Note that the potentials and in particular the
parameter $b$ introduce a \textit{coupling} amongst the matrices: if $b=0$, and $V_l=0$ $\forall l=1,\ldots,m$, then  the matrices $G_1$, $\ldots$, $G_m$ become independent
Gaussian random matrices.

The multi-level point process associated with the multi-matrix model (\ref{MainProbabilityMeasure}) of coupled random matrices 
can be constructed 
as that for the Ginibre product process \cite{StrahovD} of independent matrices.
Namely
denote by $\{y_j^l\}_{j=1,\ldots,n}$ the set of squared singular values of matrix $Y_l=G_l\cdots G_1$, these are the eigenvalues of $Y_l^*Y_l$. The configuration of all these eigenvalues
forms a point process on $\left\{1,\ldots,m\right\}\times\R_{>0}$. We will call this point process the \textit{product matrix
process} corresponding to the multi-matrix model (\ref{MainProbabilityMeasure}).

Let us indicate some particular cases of the multi-matrix model (\ref{MainProbabilityMeasure}) studied previously.
At zero coupling $b=0$, and $V_1=\ldots=V_m=0$,  we are dealing with $m$ independent complex Ginibre matrices, and
the product matrix process turns into the Ginibre product process studied in \cite{StrahovD} (by rescaling all matrices $G_l$ we can set $a=1$).
A second example with $b=0$,  $V_1=\ldots=V_{m-1}=0$ and non-vanishing potential $V_m$ was considered in \cite{AmbjornChekov} in order to study so-called hypergeometric Hurwitz numbers. As  a further feature a non-trivial covariance matrix $\Sigma^{-1}$ is introduced there in the measure for the first matrix, replacing $\Tr [G_1^*G_1]$ by $\Tr[\Sigma^{-1}G_1^*G_1]$. While in \cite{AmbjornChekov} loop equations were applied we will indicate below that this model defines a multi-level point process.

For $m=2$ matrices with non vanishing coupling $b\neq0$,
with both potential functions equal to zero, $V_1=V_2=0$,
and parameters given by $a=\frac{1+\mu}{2\mu}$, $b=\frac{1-\mu}{2\mu}$, we
obtain a model of two coupled matrices introduced by Osborn \cite{Osborn} in the context of QCD with a baryon chemical potential $\mu$.
For this model, the distribution of squared singular values of the product matrix $Y_2$ was studied by the authors in \cite{AkemannStrahov,AkemannStrahov1}. By its construction in \cite{Osborn},
the two coupled matrices can be defined in terms of linear combinations of independent Gaussian matrices, too.
Namely, let $A$, $B$ be two independent matrices of sizes $n\times L$, with i.i.d. standard complex Gaussian entries.
Define random matrices $G_1$ and $G_2$ by
\begin{equation}
G_1=\frac{1}{\sqrt{2}}\left(A-i\sqrt{\mu}B\right),\;\; G_2=\frac{1}{\sqrt{2}}\left(A^*-i\sqrt{\mu}B^*\right).
\label{m=2-model}
\end{equation}
Assume that $L\geq n$. Then the joint distribution of $G_1$ and $G_2$ is given by probability measure (\ref{MainProbabilityMeasure})
with  $a=\frac{1+\mu}{2\mu}$ and $b=\frac{1-\mu}{2\mu}$ for $0<\mu<1$, $m=2$, $V_1=V_2=0$, and $\nu_1=L-n$. The same model with $m=2$ and $V_1=V_2=0$ was generalised in \cite{DZLm2,ACLS} to include non-trivial covariance matrices instead of the scalar parameters $a$ and $b$. Here, again only the squared singular values of the product matrix $Y_2$ were studied.

Without loss of generality we can always assume that the parameter $a$ in formula (\ref{MainProbabilityMeasure}) can be set to $a=1$.
Indeed, this can be achieved by defining
new matrices $\widetilde{G_1}$, $\ldots$, $\widetilde{G_m}$ given by
$$
\widetilde{G_l}=\sqrt{a}G_l,\;\; 1\leq l\leq m,
$$
together with  a replacement of the parameter $b$ by ${b}\,{a^{-\frac{m}{2}}}$, and a modification of the potential functions accordingly.
From now on, we restrict our considerations to the probability measures
defined by expression (\ref{MainProbabilityMeasure}) with $a=1$.
\subsection{Density of product matrix processes and related biorthogonal ensembles}
By definition, the density of the product matrix process corresponding to the multi-matrix model (\ref{MainProbabilityMeasure}) is that of the vector
$$
\underline{y}=\left(y^m,\ldots,y^1\right)\in\left(\R_{>0}^n\right)^m,
$$
where $y^l$ is the vector of squared singular values of $G_l\cdots G_1$.
Our first results gives the joint probability density  of $\underline{y}$ explicitly.
\begin{thm}\label{TheoremGinibreCouplingDensity}
Consider a multi-matrix model defined by the probability measure (\ref{MainProbabilityMeasure}), with $a=1$.
Denote by $\{y_j^l\}_{j=1,\ldots,n}$ the set of squared singular values of the matrix
$Y_l=G_l\cdots G_1$ for each $l=1,\ldots,m$.
Thus for each $l$  the vector $y^l=\left(y^l_1,\ldots, y^l_n\right)$ contains the
eigenvalues of the matrix $Y_l^* Y_l$. The joint probability density of $\underline{y}=\left(y^m,\ldots,y^1\right)$ reads
\begin{equation}\label{CouplingDensityFormula}
\begin{split}
P_{n,m}(\underline{y})=&\frac{1}{Z_{n,m}}\det\left[\left(y_j^m\right)^{\frac{k-1}{2}}I_{k-1}\left(2b\left(y_j^m\right)^\frac{1}{2}\right)e^{-V_m\left(y_j^m\right)}\right]_{j,k=1}^n\\
&\times
\prod\limits_{l=1}^{m-1}\det\left[\frac{\left(y_j^{l+1}\right)^{\nu_{l+1}}}{\left(y_k^l\right)^{\nu_{l+1}+1}}e^{-\frac{y_j^{l+1}}{y_k^l}
-V_{l}\left(y_k^{l}\right)}\right]_{j,k=1}^n
\det\left[(y_j^1)^{\nu_1+k-1}e^{-y_j^1}\right]_{j,k=1}^n,
\end{split}
\end{equation}
where
\begin{equation}
\label{Znmgen}
Z_{n,m}=\left(n!\right)^m\det\left[a_{i,j}\right]_{i,j=1}^n,
\end{equation}
the matrix elements $a_{i,j}$ are given by
\begin{equation}\label{GeneralMatrixElements}
\begin{split}
&a_{i,j}=\int\limits_0^{\infty}\ldots\int\limits_0^{\infty}
t_1^{\nu_1+i-1}\left(\frac{t_2}{t_1}\right)^{\nu_2}\ldots\left(\frac{t_{m-1}}{t_{m-2}}\right)^{\nu_{m-1}}
\left(\frac{t_m}{t_{m-1}}\right)^{\nu_m}
\left(t_m\right)^{\frac{j-1}{2}}I_{j-1}\left(2b\left(t_m\right)^{\frac{1}{2}}\right)
\\
&\times e^{-t_1-\frac{t_2}{t_1}-\ldots-\frac{t_{m-1}}{t_{m-2}}-\frac{t_m}{t_{m-1}}
-V_1\left(t_1\right)-V_2\left(t_2\right)-\ldots-V_{m-1}\left(t_{m-1}\right)-V_{m}\left(t_{m}\right)}
\frac{dt_1}{t_1}\frac{dt_2}{t_2}\ldots \frac{dt_{m-1}}{t_{m-1}} dt_m,
\end{split}
\end{equation}
and where
$I_{\mu}(z)$ denotes the modified Bessel function of the first kind.
\end{thm}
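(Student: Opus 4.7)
My plan is to reduce the computation to the previously known case of $m$ independent complex Ginibre matrices, treated in Strahov \cite{StrahovD}, and then account for the two new ingredients: the scalar potentials $V_l$ and the Gaussian--type coupling term with parameter $b$. First, I would change variables recursively from $(G_1,\ldots,G_m)$ to the partial--product variables $(Y_1,Y_2,\ldots,Y_m)$ with $Y_l=G_l Y_{l-1}$; the associated Jacobian is a product of determinants of $Y_l^* Y_l$ raised to appropriate powers and has already been computed in the context of product ensembles (see e.g.\ \cite{AkemannKieburgWei,StrahovD}). Then I would apply the singular value decomposition $Y_l=U_l\Sigma_l V_l^*$ at each level, which brings out the Vandermonde factors $\prod_{i<j}(y^l_i-y^l_j)^2$ together with Haar measures on the unitary groups.

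Next I would integrate out all the angular variables. For the Gaussian weights $\exp[-\Tr G_l^*G_l]$ this produces, exactly as in \cite{StrahovD}, the chain--of--determinants structure with transition weights $e^{-y^{l+1}_j/y^l_k}$ that arise from an Itzykson--Zuber--type integral at each intermediate level and carry the shift exponents $\nu_{l+1}$. The potential terms $\exp[-\Tr V_l(Y_l^*Y_l)]$ are unitarily invariant class functions, so they contribute simply the factors $\prod_j e^{-V_l(y^l_j)}$; these can be absorbed into the transition determinants in precisely the positions displayed in (\ref{CouplingDensityFormula}).

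The genuinely new ingredient is the coupling term $\exp\bigl[b\Tr(Y_m+Y_m^*)\bigr]$. Substituting the SVD $Y_m=U_m\Sigma_m V_m^*$ (note $\nu_m=0$, so $Y_m$ is square), this becomes a bi--unitary integral that I would evaluate by expanding $\exp[b\Tr(\cdot)]$ in characters of $U(n)$ and using the known integral representations of the modified Bessel function $I_{k-1}$. The net effect is to produce a determinant of the form $\det\bigl[(y^m_j)^{(k-1)/2}I_{k-1}(2b\sqrt{y^m_j})\bigr]_{j,k=1}^n$, with an auxiliary Vandermonde factor that cancels against the corresponding factor from the SVD Jacobian for $Y_m$. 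Combining the three blocks (initial weight, transitions, and coupling block) yields the claimed joint density.

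Finally, the normalisation identity $Z_{n,m}=(n!)^m\det[a_{i,j}]_{i,j=1}^n$ follows by integrating (\ref{CouplingDensityFormula}) over $(\R_{>0}^n)^m$ and iterating the Andr\'eief identity, once per level: each application collapses two neighbouring determinants into a single determinant whose entries are integrated against the linking kernel, while the symmetrisation at each level contributes a factor of $n!$. After $m$ iterations the entries reduce exactly to the integrals (\ref{GeneralMatrixElements}). The main obstacle I expect is the careful bookkeeping in the bi--unitary integral for the coupling term: one must track the Bessel--function normalisations and the cancellation of Vandermonde factors precisely so that the top--level determinant emerges in the clean form stated in the theorem.
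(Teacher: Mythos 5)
Your proposal follows essentially the same route as the paper: the paper proves an integration lemma (Lemma \ref{LemmaCoupling}) that it applies recursively, matrix by matrix from $G_m$ down to $G_1$, using the reduction to square blocks, the singular value decomposition, the Harish-Chandra--Itzykson--Zuber integral for the transition weights, and the Brower--Rossi--Tan/Leutwyler--Smilga group integral $\int_{U(n)}e^{b\Tr[\Sigma(U+U^*)]}dU\propto\det[y_j^{(k-1)/2}I_{k-1}(2by_j^{1/2})]/\Delta_n(\{y_j\})$ for the coupling block, with the potentials entering as unitarily invariant class functions and the normalisation obtained by $m$-fold Andr\'eief; your character-expansion derivation of the Bessel determinant is just one standard proof of that same group integral. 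The only differences are organisational (all-at-once change of variables versus the paper's one-level-at-a-time recursion), so the argument is correct and matches the paper's proof in substance.
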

Recall that the modified Bessel function of the first kind $I_{\mu}(z)$ is defined by
\begin{equation}
I_{\mu}(z)=\sum\limits_{m=0}^{\infty}\frac{1}{m!\Gamma(\mu+m+1)}\left(\frac{z}{2}\right)^{2m+\mu},
\label{I-def}
\end{equation}
and that the function $y^{\frac{\mu}{2}}I_{\mu}\left(2by^{\frac{1}{2}}\right)$ has the following asymptotic behaviour
\begin{equation}\label{ISmallb}
y^{\frac{\mu}{2}}I_{\mu}\left(2by^{\frac{1}{2}}\right)\sim\frac{\left(by\right)^{\mu}}{\Gamma(\mu+1)},\;\; b\rightarrow 0.
\end{equation}
From (\ref{CouplingDensityFormula}) and (\ref{ISmallb})
we conclude that as $V_1=\ldots=V_m=0$, and as $b$ approaches  zero, the joint probability density $P_{n,m}(\underline{y})$ becomes equal to
\begin{equation}\label{GinibreDensityFormula}
\frac{1}{Z_{n,m}^{\Ginibre}}
\Delta_n(\{y_j^m\})
\prod\limits_{l=1}^{m-1}
\det\left[\frac{\left(y_j^{l+1}\right)^{\nu_{l+1}}}{\left(y_k^l\right)^{\nu_{l+1}+1}}e^{-\frac{y_j^{l+1}}{y_k^l}}\right]_{j,k=1}^n
\det\left[(y_j^1)^{\nu_1+k-1}e^{-y_j^1}\right]_{j,k=1}^n,
\end{equation}
where
\begin{equation}
\label{Znm}
Z_{n,m}^{\Ginibre}=\left(n!\right)^m\prod\limits_{j=1}^n\prod\limits_{l=0}^m\Gamma\left(j+\nu_l\right),
\end{equation}
and we have defined the Vandermonde determinant
\begin{equation}
\label{Vandermonde}
\Delta_n(\{ y_j^l\})=\det\left[\left(y_j^l\right)^{k-1}\right]_{j,k=1}^n=\prod_{1\leq j<k\leq n}(y_k^l-y_j^l)\ .
\end{equation}
This is the density for the Ginibre product process, cf. Strahov \cite[Proposition 4.2]{StrahovD} (see also  \cite{AkemannIpsenKieburg}).
Note that the last determinant in \eqref{GinibreDensityFormula} is proportional to the Vandermonde determinant $\Delta_n(\{y_j^1\})$.

The vector $y^m=\left(y_1^m,\ldots,y^m_n\right)$ is that of the squared singular values of the total product matrix
$Y_m=G_m\cdots G_1$. In view of  the recent results on distributions of squared singular values for products of random matrices
mentioned in the introduction, the probability distribution of $y^m$ is of special interest. Corollary \ref{Corrolary1} from Theorem \ref{TheoremGinibreCouplingDensity}
gives the joint probability distribution of $y^m$ explicitly.
\begin{cor}\label{Corrolary1} Consider the multi-matrix model defined by the probability distribution (\ref{MainProbabilityMeasure}), and form the total product
$Y_m=G_m\cdots G_1$.
The squared singular values of $Y_m$ form a biorthogonal ensemble with the joint probability density function given by
$$
P_{n,m}\left(y_1^m,\ldots,y_n^m\right)=
\frac{1}{Z_{n,m}'}\det\left[\varphi_i\left(y_j^m\right)\right]_{i,j=1}^n
\det\left[\psi_i\left(y_j^m\right)\right]_{i,j=1}^n,
$$
with the normalisation $Z_{n,m}'=n!\det\left[a_{i,j}\right]_{i,j=1}^n$ resulting from \eqref{Znmgen}. The functions $\varphi_i$, with $i=1,\ldots, n$
are given by
$$
\varphi_i\left(y\right)=y^{\frac{i-1}{2}}I_{i-1}\left(2by^{\frac{1}{2}}\right)e^{-V_m(y)},
$$
and the functions $\psi_i$
with $=1,\ldots,n$ are given by
\begin{equation}
\begin{split}
\psi_i(y)=&\int\limits_0^{\infty}\ldots\int\limits_0^{\infty}
t_1^{\nu_1+i-1}\left(\frac{t_2}{t_1}\right)^{\nu_2}\ldots\left(\frac{t_{m-1}}{t_{m-2}}\right)^{\nu_{m-1}}
\left(\frac{y}{t_{m-1}}\right)^{\nu_m}
\\
&\times e^{-t_1-\frac{t_2}{t_1}-\ldots-\frac{t_{m-1}}{t_{m-2}}-\frac{y}{t_{m-1}}
-V_1\left(t_1\right)-V_2\left(t_2\right)-\ldots-V_{m-1}\left(t_{m-1}\right)}
\frac{dt_1}{t_1}\frac{dt_2}{t_2}\ldots \frac{dt_{m-1}}{t_{m-1}}.
\end{split}
\end{equation}
\end{cor}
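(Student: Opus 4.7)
The plan is to obtain $P_{n,m}(y^m)$ from Theorem \ref{TheoremGinibreCouplingDensity} by marginalising the joint density (\ref{CouplingDensityFormula}) over $y^1,\ldots,y^{m-1}$, via $m-1$ iterated applications of the Andreief (generalised Cauchy--Binet) identity
\begin{equation*}
\int \det[F_i(x_j)]_{i,j=1}^n \det[G_i(x_j)]_{i,j=1}^n \prod_{j=1}^n dx_j \;=\; n!\,\det\Bigl[\int F_i(x)G_j(x)\,dx\Bigr]_{i,j=1}^n.
\end{equation*}
The density factorises into $m+1$ determinants: a \emph{terminal} determinant with entries $\varphi_k(y_j^m) = (y_j^m)^{(k-1)/2} I_{k-1}(2b(y_j^m)^{1/2}) e^{-V_m(y_j^m)}$, a chain of $m-1$ coupling determinants with kernels
\begin{equation*}
K_l(y,x) \;=\; \frac{y^{\nu_{l+1}}}{x^{\nu_{l+1}+1}}\,e^{-y/x - V_l(x)}, \qquad l=1,\ldots,m-1,
\end{equation*}
connecting consecutive levels $y^{l+1}$ and $y^l$, and an \emph{initial} determinant with entries $h_k(y_j^1) = (y_j^1)^{\nu_1+k-1} e^{-y_j^1}$.

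The first step combines the initial $y^1$-determinant with the $l=1$ coupling determinant. After minor relabellings that put both in the Andreief form $\det[\,\cdot\,(y_j^1)]_{i,j}$, integration over $y^1$ yields $n!\,\det[\phi_i^{(1)}(y_j^2)]_{i,j}$ with
\begin{equation*}
\phi_i^{(1)}(y) \;=\; \int_0^\infty h_i(t_1)\,K_1(y,t_1)\,dt_1.
\end{equation*}
Repeating the argument inductively --- at step $l$ one pairs $\det[\phi_i^{(l-1)}(y_j^l)]_{i,j}$ (already in Andreief form) with the $l$-th coupling determinant and integrates over $y^l$ --- collapses the entire chain after $m-1$ iterations into $(n!)^{m-1}\,\det[\phi_i^{(m-1)}(y_j^m)]_{i,j}$, where $\phi_i^{(m-1)}(y)$ is the iterated integral $\int\cdots\int h_i(t_1) K_1(t_2,t_1)\cdots K_{m-1}(y,t_{m-1})\,dt_1\cdots dt_{m-1}$. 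Inserting the explicit forms of $h_i$ and $K_1,\ldots,K_{m-1}$ and setting $t_m = y$ identifies $\phi_i^{(m-1)}$ with the function $\psi_i$ in the statement of the corollary.

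What survives of the original density is the terminal determinant, which by transpose invariance equals $\det[\varphi_i(y_j^m)]_{i,j}$, multiplied by $\det[\psi_i(y_j^m)]_{i,j}$ from the collapsed chain. Dividing by $Z_{n,m} = (n!)^m \det[a_{i,j}]$ from (\ref{Znmgen}) and accounting for the $(n!)^{m-1}$ prefactor produced by the iterated Andreief gives $Z'_{n,m} = n!\,\det[a_{i,j}]$. As a consistency check, one more application of Andreief to the biorthogonal form reduces the total mass to $n!\,\det\bigl[\int \varphi_i(y)\psi_j(y)\,dy\bigr]_{i,j}$, and substituting $t_m = y$ in the definition of $\psi_j$ brings the inner integral into precisely the form of (\ref{GeneralMatrixElements}), confirming $\int \varphi_i(y)\psi_j(y)\,dy = a_{i,j}$. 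The main technical nuisance is keeping careful track of which index of each matrix represents the integration variable so that each adjacent pair of determinants is brought into matching Andreief form; no further analytic input is required.
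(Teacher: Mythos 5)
Your proof is correct and follows essentially the same route as the paper: the authors also obtain this marginal density by integrating the joint density of Theorem \ref{TheoremGinibreCouplingDensity} over $y^1,\ldots,y^{m-1}$ via $m-1$ successive applications of the Andr\'eief identity \eqref{Andre} (this is stated explicitly in the proofs of Proposition \ref{PropositionJointDensityTotalProductMatrix} and of the marginalisation proposition in Section \ref{SECTIONEynardMehta}). Your bookkeeping of the $(n!)^{m-1}$ factor against $Z_{n,m}=(n!)^m\det[a_{i,j}]$ and the consistency check $\int\varphi_j\psi_i\,dy=a_{i,j}$ against \eqref{GeneralMatrixElements} are both accurate.
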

For some particular choices of potentials $V_1$, $\ldots$, $V_m$ we can obtain explicit expressions
for the functions $\psi_i(y)$, and for the normalisation constant $Z_{n,m}'$, as we will show below.

First, let us suppose that we could set all potentials
$V_1=\ldots=V_m=0$. Then the functions $\psi_i(y)$ could be written as Meijer $G$-functions\footnote{We
refer the reader to the book by Luke \cite{Luke} for the definition of Meijer $G$-functions, and for their exact and asymptotic properties.},
$$
\psi_i(y)=G^{m,0}_{0,m}\left(\begin{array}{cccc}
                                & - &  &  \\
                               \nu_1+i-1, & \nu_2, & \ldots, & \nu_m
                             \end{array}
\biggl|y\right).
$$
For the resulting normalisation we would obtain from Andr\'eief's integral identity
\begin{equation}\label{NormalizationConstant1}
Z_{n,m}'=n!\det\left[\int\limits_0^{\infty}y^{\frac{i-1}{2}}
I_{i-1}\left(2by^{\frac{1}{2}}\right)
G^{m,0}_{0,m}\left(\begin{array}{cccc}
                                & - &  &  \\
                               \nu_1+j-1, & \nu_2, & \ldots, & \nu_m
                             \end{array}
\biggl|y\right)
dy\right]_{i,j=1}^{n}.
\end{equation}
However, for $m>2$ the integrals inside the determinant do not converge. The reason is that the modified Bessel function of the second kind \eqref{I-def} has the following asymptotic
\begin{equation}\label{A1}
I_{i-1}\left(2by^{\frac{1}{2}}\right)\sim\frac{e^{2by^{\frac{1}{2}}}}{2\pi^{\frac{1}{2}}b^{\frac{1}{2}}y^{\frac{1}{4}}},\;\; y\rightarrow\infty,
\end{equation}
c.f. \cite{NIST},
which has to be compared to the asymptotic of the Meijer $G$-function \cite{Luke}
\begin{equation}
G^{m,0}_{0,m}\left(\begin{array}{cccc}
                                & - &  &  \\
                               \nu_1+j-1, & \nu_2, & \ldots, & \nu_m
                             \end{array}
\biggl|y\right)\sim
\frac{(2\pi)^{\frac{m-1}{2}}}{m^{\frac12}} y^\theta \exp[-my^{\frac{1}{m}}]
,\;\; y\rightarrow\infty,
\label{Gasympt}
\end{equation}
with $\theta=\frac{1}{m}(\frac12(1-m)+\nu_1+j-1+\sum_{l=2}^m\nu_l)$. Only for $m=2$ the two exponentials in eqs. \eqref{A1} and \eqref{Gasympt}
together lead to convergent integrals for $b<1$,
and we obtain the biorthogonal ensemble equivalent to that studied by the authors in \cite{AkemannStrahov,AkemannStrahov1}\footnote{For $m=1$ the corresponding one-matrix model is convergent for any $b$.}.
Indeed for $m=2$ we  have (recall that $\nu_2=0$ then)
$$
G^{2,0}_{0,2}\left(\begin{array}{cc}
                                & -   \\
                               \nu_1+j-1, & 0
                             \end{array}
\biggl|y\right)=2y^{\frac{\nu_1+j-1}{2}}K_{\nu_1+j-1}\left(2y^{\frac{1}{2}}\right),
$$
where $K_{\kappa}(z)$ denotes the modified Bessel function of the second kind.
It can be defined by the integral
\begin{equation}
\label{K-def1}
K_{\kappa}(z)=\frac{\Gamma\left(\kappa+\frac{1}{2}\right)(2z)^{\kappa}}{\sqrt{\pi}}
\int\limits_0^{\infty}\frac{\cos(t)dt}{\left(t^2+z^2\right)^{\kappa+\frac{1}{2}}}\,,\;\;\Re(\kappa)>-\frac12\,,
\end{equation}
which is more convenient for complex contour integrals over $\kappa$, or alternatively \cite{NIST} as
\begin{equation}
\label{K-def2}
K_\kappa(2z)=\frac12 z^\kappa \int_0^\infty dt\,t^{-\kappa-1} \exp\left[-t-\frac{z^2}{t}\right]\, .
\end{equation}
From the latter is is easy to obtain the following asymptotic expansion \cite[10.25.3]{NIST}, corresponding to \eqref{Gasympt} for $m=2$:
\begin{equation}\label{A2}
K_{\kappa}\left(2by^{\frac{1}{2}}\right)\sim\frac{\pi^{\frac{1}{2}}}{2^{\frac{1}{2}}\left(2by^{\frac{1}{2}}\right)^{\frac{1}{2}}}e^{-2by^{\frac{1}{2}}},
\;\; y\rightarrow\infty.
\end{equation}
It is not difficult to see that the model for $m=2$ studied in \cite{AkemannStrahov} in the parametrisation given in \eqref{m=2-model} can be mapped to our parametrisation \eqref{MainProbabilityMeasure}, with $a=1$ and $b>0$ by rescaling in \eqref{m=2-model} $G_1\to\alpha_1G_1$ and $G_2\to\alpha_2 G_2$, with $\alpha_1^2=\frac{2}{1+\mu}$ and $\alpha_2^2=\frac{1+\mu}{2\mu}$, leading to $b=\frac{1-\mu}{\sqrt{4\mu}}$. Therefore, in order to  guarantee the existence  of the corresponding matrix measure (\ref{MainProbabilityMeasure}),
for $m>2$ some non-zero potentials $V_l$ should be added for non-zero coupling $b>0$, to ensure convergence of the integrals.

In this paper the case corresponding to
\begin{equation}
\label{V-cond}
V_1(t)=\ldots=V_{m-2}(t)=0\ , \ \ 
V_{m-1}(t)=b^2t\ , \ \ V_m(t)=0\ ,
\end{equation}
will be considered in detail\footnote{Note that for $m=2$ the additional potential 
$V_1(t)=bt^2$ simply corresponds to a shift of the Gaussian term $\Tr [G_1^*G_1]$ which is already present.}. Here, we are dealing with coupled matrices with the single coupling 
constant
$b>0$. In this case, a simple contour integral representation for the correlation kernel suitable for an asymptotic analysis can be derived. Inserting the above conditions on the potentials \eqref{V-cond} into
the matrix measure (\ref{MainProbabilityMeasure}), it turns into
\begin{equation}\label{MainProbabilityMeasure1}
 \begin{split}
&\frac{1}{Z_n(b)}e^{-\sum\limits_{l=1}^m\Tr\left[G_l^*G_l\right]+b\Tr\left[G_m\cdots G_1+\left(G_m\cdots G_1\right)^*\right]
-b^2\Tr\left[\left(G_{m-1}\cdots G_1\right)^*\left(G_{m-1}\cdots G_1\right)\right]}\prod\limits_{l=1}^mdG_l\\
=&
\frac{1}{Z_n(b)}e^{-\sum\limits_{l=1}^{m-1}\Tr\left[G_l^*G_l\right]
-\Tr\left[\left(G_m-b(G_{m-1}\cdots G_1)^*\right)
\left(G_m^*-bG_{m-1}\cdots G_1\right)
\right]}\prod\limits_{l=1}^mdG_l\,.
\end{split}
\end{equation}
The normalised probability density of $\underline{y}=\left(y^m,\ldots,y^1\right)$ that follows from Theorem \ref{TheoremGinibreCouplingDensity} by inserting \eqref{V-cond} into \eqref{CouplingDensityFormula} is given by
\begin{equation}\label{CouplingDensityFormula1}
\begin{split}
P_{n,m}(\underline{y};b)=
&\frac{1}{Z_{n,m}(b)}\det\left[\left(y_j^m\right)^{\frac{k-1}{2}}I_{k-1}\left(2b\left(y_j^m\right)^\frac{1}{2}\right)\right]_{j,k=1}^n
\det\left[\frac{1}{y_k^{m-1}}e^{-\frac{y_j^{m}}{y_k^{m-1}}-b^2y_k^{m-1}}\right]_{j,k=1}^n\\
&\times
\prod\limits_{l=1}^{m-2}\det\left[\frac{\left(y_j^{l+1}\right)^{\nu_{l+1}}}{\left(y_k^l\right)^{\nu_{l+1}+1}}e^{-\frac{y_j^{l+1}}{y_k^l}}\right]_{j,k=1}^n
\det\left[(y_j^1)^{\nu_1+k-1}e^{-y_j^1}\right]_{j,k=1}^n,
\end{split}
\end{equation}
where the normalisation constant $Z_{n,m}(b)$ following from \eqref{Znmgen} remains to be determined, see \eqref{Znmb} below.
In what follows we will refer to the point process formed by random  configuration $\underline{y}$
with the joint probability density given by equation (\ref{CouplingDensityFormula1}) as the \textit{Ginibre product process with coupling}.
The Ginibre product process with coupling is an one-parameter \textit{deformation} of the Ginibre product process, and the coupling constant $b$ is the parameter of this deformation.
\begin{rem}
The model defined by equation \eqref{MainProbabilityMeasure1} has the following interpretation. According to equation \eqref{MainProbabilityMeasure1} the matrices
$G_1$, $\ldots$, $G_m$ and $G_m'=G_m-b\left(G_{m-1}\ldots G_1\right)^*$ can be understood as independent Ginibre matrices (each of the size
$\left(n+\nu_l\right)\times\left(n+\nu_{l-1}\right)$, where $l=1,\ldots,m$). Therefore, we will have the Ginibre product process on the first $m-1$ levels which will be independent 
of $b$. As $b\rightarrow 0$, the matrix $G_m-b\left(G_{m-1}\ldots G_1\right)^*$ turns into the Ginibre matrix $G_m$ independent 
of $G_1$, $\ldots$, $G_{m-1}$, and we obtain the Ginibre product process on all $m$ levels. As $b\rightarrow\infty$, the matrix $G_m'$ is dominated by 
$\left(G_{m-1}\ldots G_1\right)^*$, so the singular values of $G_m'G_{m-1}\ldots G_1$ will approach the squares of the singular values of $G_{m-1}\ldots G_1$.
\end{rem}
\begin{rem}
If in the second line of \eqref{MainProbabilityMeasure1} we rescale $G_m\to b G_m$ and then send $b\to\infty$ this leads to a delta function constraint,
due to $\lim_{b\to\infty}b\exp[-b^2(x-y)^2]/\sqrt{\pi}=\delta(x-y)$,
enforcing
$G_m=(G_{m-1}\cdots G_1)^*$ at finite matrix sizes. It is precisely this limit that leads to the collapse of the Ginibre product process with coupling having $m$ levels to a Ginibre product process having only $m-1$ levels, as we will see below on the level of the joint probability density \eqref{CouplingDensityFormula1}
for finite $n$, and in the large-$n$ limit in Theorem \ref{TheoremHardEdgeGinibreCouplingProcess} (C).
\end{rem}

It is instructive to consider separately the distribution of the squared singular values of the total product matrix
$Y_m=G_m\cdots G_1$, in the case when the joint distribution of $G_m$, $\ldots$, $G_1$ is given by
formula (\ref{MainProbabilityMeasure1}), with joint probability density \eqref{CouplingDensityFormula1}. We obtain the following result.
\begin{prop}\label{PropositionJointDensityTotalProductMatrix}
Consider the probability measure defined by equation (\ref{MainProbabilityMeasure1}), and
let $Y_m=G_m\cdots G_1$ be the total product matrix of the corresponding multi-matrix model.
The squared singular values $y_1^m,\ldots,y_n^m$  of $Y_m$ form a determinantal point process on $\R_{>0}$.
This determinantal point process is a biorthogonal ensemble defined by the joint probability density
$P\left(y_1^m,\ldots,y_n^m;b\right)$. 
The explicit formula for $P\left(y_1^m,\ldots,y_n^m;b\right)$ reads
\begin{equation}\label{PExact}
P\left(y_1^m,\ldots,y_n^m;b\right)=\frac{(n!)^{m-1}}{Z_{n,m}(b)}\det\left[\left(y_j^m\right)^{\frac{k-1}{2}}I_{k-1}\left(2b\left(y_j^m\right)^{\frac{1}{2}}\right)\right]_{k,j=1}^n
\det\left[\psi_k(y_j^m)\right]_{k,j=1}^n,
\end{equation}
where
\begin{equation}\label{PSIK}
\begin{split}
&\psi_k\left(y\right)=\frac{1}{2\pi i}\int\limits_{c-i\infty}^{c+i\infty}\prod\limits_{l=2}^m\Gamma\left(u+\nu_l\right)
\Gamma\left(u+\nu_1+k-1\right)\frac{2\left(by^{\frac{1}{2}}\right)^uK_u\left(2by^{\frac{1}{2}}\right)}{\Gamma(u)}y^{-u}du,
\end{split}
\end{equation}
with $c>0$ such that the poles of the Gamma-functions are to the right of the contour. The normalisation constant is
given by
\begin{equation}
\label{Znmb}
Z_{n,m}(b)=\left(n!\right)^mb^{\frac{n(n-1)}{2}}\prod\limits_{j=1}^n\prod\limits_{l=1}^{m-1}\Gamma\left(j+\nu_l\right).
\end{equation}
\end{prop}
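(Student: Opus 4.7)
My plan is to derive Proposition \ref{PropositionJointDensityTotalProductMatrix} directly from Corollary \ref{Corrolary1} by specialising the potentials to those in (\ref{V-cond}). With $V_m=0$ the Corollary immediately produces $\varphi_k(y)=y^{(k-1)/2}I_{k-1}(2by^{1/2})$ and a biorthogonal ensemble structure, so the determinantal point process property is automatic. Two technical tasks remain: (i) converting the $(m-1)$-fold integral for $\psi_k$ into the contour integral (\ref{PSIK}), and (ii) evaluating the normalisation $Z_{n,m}(b)$.

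For (i) I would first substitute the chain variables $s_l=t_l/t_{l-1}$ (with $t_0:=1$). The measure transforms cleanly, $\prod_{l=1}^{m-1}dt_l/t_l=\prod_{l=1}^{m-1}ds_l/s_l$, the polynomial prefactor becomes $s_1^{\nu_1+k-1}\prod_{l\geq 2}s_l^{\nu_l}$, and the chain of exponentials factorises to $e^{-s_1-\ldots-s_{m-1}}$. The only coupling left between the $s_l$ is through the product $T:=s_1\cdots s_{m-1}$ appearing in $e^{-y/T-b^2T}$. Applying the Mellin--Barnes identity $e^{-z}=\frac{1}{2\pi i}\int_{c-i\infty}^{c+i\infty}\Gamma(w)z^{-w}dw$ to $e^{-b^2 T}$ decouples the $s_l$-integrals, each of which is then a single Gamma function. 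Combined with a Mellin transform in $y$, this produces a double contour integral whose inner contour can be collapsed via the identity
\begin{equation*}
\frac{1}{2\pi i}\int_{c-i\infty}^{c+i\infty}\Gamma(s)\Gamma(s-u)\,z^{-2s}\,ds=2z^{-u}K_u(2z),
\end{equation*}
which follows from (\ref{K-def2}) by inserting a Mellin--Barnes representation of $e^{-z^2/t}$. Choosing $z=by^{1/2}$ and collecting the remaining factors reproduces (\ref{PSIK}) exactly; the denominator $1/\Gamma(u)$ arises from the identification $\Gamma(u+\nu_m)=\Gamma(u)$ in $\prod_{l=2}^m\Gamma(u+\nu_l)$.

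For (ii) I would compute $Z_{n,m}(b)=(n!)^m\det[a_{i,j}]_{i,j=1}^n$ from (\ref{Znmgen}) and (\ref{GeneralMatrixElements}). The key step is to perform the $t_m$-integral first: using the power series (\ref{I-def}) termwise one obtains
\begin{equation*}
\int_0^\infty t_m^{(j-1)/2}\,I_{j-1}(2b\sqrt{t_m})\,e^{-t_m/t_{m-1}}\,dt_m=b^{j-1}\,t_{m-1}^{\,j}\,e^{b^2 t_{m-1}},
\end{equation*}
whose exponential precisely cancels the $e^{-b^2 t_{m-1}}$ coming from $V_{m-1}$. The remaining $(m-1)$-fold integral factorises in the chain variables $s_l$ into a product of Gamma functions, yielding $a_{i,j}=b^{j-1}\Gamma(\nu_1+i+j-1)\prod_{l=2}^{m-1}\Gamma(\nu_l+j)$. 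Extracting the $j$-dependent factors from each column reduces $\det[a_{i,j}]$ to the Gram-type determinant $\det[\Gamma(\nu_1+i+j-1)]$, which in turn reduces to a Vandermonde in $\{1,\ldots,n\}$ after factoring out $\Gamma(\nu_1+j)$ columnwise. Assembling the pieces yields the formula (\ref{Znmb}).

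The step I expect to require the most care is the Mellin--Barnes manipulation in part (i): both choosing the vertical contours so that the poles of $\Gamma(s-u)$ are correctly separated from those of $\Gamma(\nu_l+u)$ and $\Gamma(\nu_1+k-1+u)$, and verifying absolute convergence so that Fubini justifies the exchange of orders of integration. Once the contours are set consistently, the remainder of the argument is a sequence of algebraic identifications using standard Gamma-function manipulations and the Bessel $K$ integral (\ref{K-def2}).
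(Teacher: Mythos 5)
Your proposal is correct and follows essentially the same route as the paper: marginalise via Andr\'eief (equivalently, specialise Corollary \ref{Corrolary1}), turn the chain integral for $\psi_k$ into a Mellin--Barnes contour integral closed with the $K$-Bessel integral \eqref{K-def2} (the paper packages the intermediate $t_1,\ldots,t_{m-2}$ integrations as a Meijer $G$-function and then uses its standard contour representation, whereas you carry the Mellin--Barnes variables explicitly, but the manipulations are the same), and evaluate the normalisation by performing the Bessel-$I$ integral \eqref{I-int1} first so that the resulting $e^{b^2t_{m-1}}$ cancels the potential and the remaining matrix is exactly the Gamma--Hankel matrix \eqref{aij}. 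One caveat: carrying out your own final assembly carefully, $\det[a_{i,j}]=b^{n(n-1)/2}\prod_{j=1}^{n}\Gamma(j)\prod_{j=1}^{n}\prod_{l=1}^{m-1}\Gamma(j+\nu_l)$, i.e.\ the Vandermonde of $\{1,\ldots,n\}$ that you mention contributes an extra factor $\prod_{j=1}^{n}\Gamma(j)$ not displayed in \eqref{Znmb} (consistent with the $b\to0$ comparison against \eqref{Znm}); this is a discrepancy with the stated constant rather than with your method, so you should not assert without comment that the pieces ``yield \eqref{Znmb}''.
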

Note that following \cite{NIST} we have
\begin{equation}
\label{KSmallb}
\frac{2\left(by^{\frac{1}{2}}\right)^uK_u\left(2by^{\frac{1}{2}}\right)}{\Gamma(u)}\sim1,\;\; b\rightarrow 0,\;\; \mbox{for}\;\; \Re(u)>0.
\end{equation}
Taking this into account, and using the asymptotic expression 
$y^{\frac{j-1}{2}}I_{j-1}\left(2by^{\frac{1}{2}}\right)$
(see equation (\ref{ISmallb}), contributing with an extra factor $1/\Gamma(j)$), we find that the joint probability density function
$P\left(y_1,\ldots,y_n;b\right)$  defined by equation  (\ref{PExact}) has a limit as $b\rightarrow 0$.
Namely, we have
\begin{equation}
\begin{split}
\underset{b\rightarrow 0}{\lim}&P\left(y_1^m,\ldots,y_n^m;b\right)=
\frac{1}{n!\prod\limits_{i=1}^n\prod\limits_{j=0}^m\Gamma\left(j+\nu_j\right)}\\
&\times\Delta_n(\{y_j^m\})
\det\left[G^{m,0}_{0,m}\left(
\begin{array}{cccc}
   & - &  &  \\
  \nu_1+k-1, & \nu_2, & \ldots, & \nu_m
\end{array}
\biggl|y_j^m\right)\right]_{j,k=1}^n.
\end{split}
\end{equation}
This is the joint probability density function obtained by Akemann,  Ipsen, and Kieburg \cite{AkemannIpsenKieburg} for the squared singular values of the ensemble of the products of $m$ independent rectangular matrices with independent complex Gaussian entries.

Now let us consider the behaviour of $P\left(y_1^m,\ldots,y_n^m;b\right)$ 
as $b\rightarrow\infty$. Using the asymptotic formulae \eqref{A1} and \eqref{A2} of the modified Bessel functions involved,  we find that as $b\rightarrow\infty$, the joint probability density function
$P\left(y_1^m,\ldots,y_n^m;b\right)$  defined by equation  (\ref{PExact}) becomes approximately equal to
\begin{equation}
\begin{split}
&\sim\frac{1}{2^nb^{\frac{n(n+1)}{2}}n!
\prod\limits_{i=1}^n\prod\limits_{j=0}^{m-1}\Gamma\left(j+\nu_j\right)
\prod\limits_{j=1}^ny_j^{\frac{1}{2}}}\\
&\times\Delta_n(\{y_j^m\})
\det\left[
G^{m-1,0}_{0,m-1}\left(
\begin{array}{cccc}
   & - &  &  \\
  \nu_1+k-1, & \nu_2, & \ldots, & \nu_{m-1}
\end{array}
\biggl|\frac{1}{b}{\left(y_j^m\right)^{\frac{1}{2}}}\right)\right]_{k,j=1}^n.
\end{split}
\nonumber
\end{equation}
Set $x_j=\frac{1}{b}(y_j^m)^{\frac{1}{2}}$. We then find that as $b\rightarrow\infty$, the joint probability density function of the new variables $x_1$, $\ldots$, $x_n$ converges to
\begin{equation}
\frac{1}{n!\prod\limits_{i=1}^n\prod\limits_{j=0}^{m-1}\Gamma\left(j+\nu_j\right)}\\
\Delta_n(\{x_j\})
\det\left[G^{m-1,0}_{0,m-1}\left(
\begin{array}{cccc}
   & - &  &  \\
  \nu_1+k-1, & \nu_2, & \ldots, & \nu_{m-1}
\end{array}
\biggl|x_j\right)\right]_{j,k=1}^n.
\end{equation}
This is the joint probability density function for the ensemble of the squared singular values of the product of $m-1$ rectangular
matrices with independent complex Gaussian entries. We conclude that the biorthogonal ensemble defined by equation (\ref{PExact})
is an interpolating ensemble: it interpolates between the process of  squared singular values from the ensemble of the products of $m$ independent rectangular
matrices with independent complex Gaussian entries, and that of $m-1$ independent rectangular
matrices with independent complex Gaussian entries.

As a final example we give the joint density of the product matrix process related to Hurwitz numbers via \cite{AmbjornChekov}, as mentioned in the introduction. Compared to
\eqref{MainProbabilityMeasure} it reads
 \begin{equation}\label{jpdfHurwitz}
\frac{1}{Z_n}
\exp\left[-\Tr\left[QG_1^*G_1\right] -\sum\limits_{l=2}^m\Tr\left[G_l^*G_l\right]
-\Tr \left[V_m\left(\left(G_{m}\cdots G_1\right)^*G_{m}\cdots G_1\right)\right]\right]
\prod\limits_{l=1}^mdG_l,
\end{equation}
that is $V_1=\ldots=V_{m-1}=0$ vanish, and in addition the Gaussian distribution of matrix $G_1$ now includes a nontrivial covariance matrix $Q$ of size $n\times n$ with positive
eigenvalues $q_1,\ldots q_n>0$. 
\begin{prop}
Consider a multi-matrix model defined by probability measure \eqref{jpdfHurwitz}. Denote by $\left\{y_j^l\right\}_{j=1,\ldots,m}$ the set of the squared singular values of the matrix $Y_l=G_l\ldots G_1$ for each $l=1,\ldots,m$. The joint probability density of $\underline{y}=\left(y^m,\ldots,y^1\right)$ is proportional to
\end{prop}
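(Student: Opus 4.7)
The plan is to follow the same strategy used for Theorem \ref{TheoremGinibreCouplingDensity}, namely to write each matrix $G_l$ in its singular value decomposition $G_l = U_l D_l^{1/2} V_l^*$ (with the appropriate rectangular convention) and successively integrate out the unitary degrees of freedom. The present setting differs from \eqref{MainProbabilityMeasure} in three ways: the coupling parameter $b$ is absent, only the terminal potential $V_m$ is non-trivial, and the Gaussian weight on $G_1$ carries the non-scalar covariance $Q$. So the proof splits into three independent pieces, corresponding to the initial level, the intermediate levels and the final level.

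For the intermediate steps I would condition on $y^l$ and compute the conditional density $P(y^{l+1}\vert y^l)$ coming from integrating out $G_{l+1}$. Because $G_{l+1}$ (for $l\geq 1$) enters the measure only through the isotropic Gaussian weight $e^{-\Tr[G_{l+1}^*G_{l+1}]}$, this computation is identical to the one performed for the Ginibre product process in \cite{AkemannIpsenKieburg,StrahovD}, and yields the familiar Bessel-type determinant
\begin{equation*}
\det\!\left[\frac{(y_j^{l+1})^{\nu_{l+1}}}{(y_k^l)^{\nu_{l+1}+1}}\,e^{-y_j^{l+1}/y_k^l}\right]_{j,k=1}^n
\end{equation*}
together with a ratio of Vandermondes $\Delta_n(\{y_j^{l+1}\})/\Delta_n(\{y_j^l\})$. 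The terminal potential $V_m$ then contributes a diagonal factor $\prod_j e^{-V_m(y_j^m)}$, which can be absorbed into a determinant $\det[(y_j^m)^{k-1}e^{-V_m(y_j^m)}]_{j,k=1}^n$ by expanding the Vandermonde.

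The novel step is the marginal density $P(y^1)$. Writing $G_1 = U_1 D_1^{1/2} V_1^*$ with $D_1=\diag(y_1^1,\ldots,y_n^1)$ and using the rectangular SVD Jacobian $\Delta_n(\{y_j^1\})^2 \prod_j (y_j^1)^{\nu_1}$, the quadratic form in the exponent becomes $\Tr[QV_1 D_1 V_1^*]$, and the Harish-Chandra--Itzykson--Zuber formula
\begin{equation*}
\int_{U(n)} dV_1\, e^{-\Tr[QV_1 D_1 V_1^*]} \;=\; c_n\,\frac{\det\!\left[e^{-q_i y_j^1}\right]_{i,j=1}^n}{\Delta_n(\{q_i\})\,\Delta_n(\{y_j^1\})}
\end{equation*}
converts this Gaussian into a determinantal object in the $q_i$ and $y_j^1$. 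Multiplying out, one power of $\Delta_n(\{y_j^1\})$ cancels and the first level becomes
\begin{equation*}
P(y^1) \;\propto\; \frac{\Delta_n(\{y_j^1\})}{\Delta_n(\{q_i\})}\,\prod_{j=1}^n (y_j^1)^{\nu_1}\,\det\!\left[e^{-q_i y_j^1}\right]_{i,j=1}^n.
\end{equation*}
Assembling the three ingredients and noting that the Vandermonde ratios from the conditional densities telescope (leaving a net $\Delta_n(\{y_j^m\})/\Delta_n(\{y_j^1\})$ that is absorbed into the terminal and initial determinants), one obtains the announced multi-determinantal formula.

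The main obstacle I anticipate is the justification of the HCIZ step when $Q$ may have repeated eigenvalues, since then both $\Delta_n(\{q_i\})$ and the determinant vanish. This is handled by treating both sides as analytic functions of $q_1,\ldots,q_n$ and taking an appropriate L'Hospital limit, so that the apparent $1/\Delta_n(\{q_i\})$ singularity becomes a confluent HCIZ expression that is finite. A secondary issue is the bookkeeping of the overall normalisation constant, which is routine once the density is in the stated multi-determinantal form: an application of Andr\'eief's identity yields $Z_n$ as an $n\times n$ determinant of integrals, exactly as in \eqref{Znmgen}.
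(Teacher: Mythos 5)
Your proposal is correct and follows essentially the same route as the paper, which itself only sketches the argument as ``similar to the proof of Theorem \ref{TheoremGinibreCouplingDensity}, using the standard Harish-Chandra--Itzykson--Zuber formula'': the intermediate levels reproduce the transition determinants of Lemma \ref{LemmaCoupling}(B), the terminal potential is absorbed into the Vandermonde of $y^m$, and the only new ingredient is the HCIZ integral over the unitary diagonalising $G_1^*G_1$ against $Q$, exactly as you describe. Your telescoping of the Vandermonde ratios and the resulting cancellation against the squared-Vandermonde Jacobian of the first level correctly reproduce \eqref{HurwitzDensityFormula}.
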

\begin{equation}\label{HurwitzDensityFormula}
\det\left[\left(y_j^m\right)^{k-1}e^{-V_m(y_j^m)}\right]_{j,k=1}^n
\prod\limits_{l=1}^{m-1}
\det\left[\frac{\left(y_j^{l+1}\right)^{\nu_{l+1}}}{\left(y_k^l\right)^{\nu_{l+1}+1}}e^{-\frac{y_j^{l+1}}{y_k^l}}\right]_{j,k=1}^n
\det\left[(y_j^1)^{\nu_1}e^{-y_j^1q_k}\right]_{j,k=1}^n.
\end{equation}
The joint density of $\underline{y}$ for the model defined by \eqref{jpdfHurwitz} can be obtained similarly to the proof of Theorem 2.1,  
using the standard Harish-Chandra-Itzykson-Zuber formula. In the following we will not pursue this example further.

\subsection{Exact formulae for the correlation kernels}
Theorem \ref{TheoremGinibreCouplingDensity} states that the density of the product matrix process associated
with the multi-matrix model (\ref{MainProbabilityMeasure}) can be written as a product of determinants.
This enables us to apply the result by  Eynard and Mehta \cite{EynardMehta}, and to give  a formula for all correlation functions. Namely,
the Eynard-Mehta Theorem implies that the configuration of the squared singular values of all product matrices associated with the multi-matrix model
(\ref{MainProbabilityMeasure}) is  a \textit{determinantal point process} on
 $\left\{1,\ldots,m\right\}\times\R_{>0}$. The correlation kernel of this determinantal point process is given by the next theorem.
\begin{thm}\label{TheoremMostGerneralCorrelationKernel}
Consider the multi-matrix model defined by the probability measure (\ref{MainProbabilityMeasure}), with $a=1$.
Denote by the set $\{y_1^l,\ldots,y_n^l\}$ the squared singular values of the matrix
$Y_l=G_l\cdots G_1$, that are the eigenvalues of the matrix $Y_l^*Y_l$.
The configuration of all these eigenvalues
\begin{equation}
\left\{\left(l,y_j^l\right)\vert l=1,\ldots,m;j=1,\ldots,n\right\}
\nonumber
\end{equation}
forms  a determinantal point process
on $\left\{1,\ldots,m\right\}\times\R_{>0}$.  The correlation kernel of this determinantal point process, $K_{n,m}^{V}(r,x;s,y)$,
where $r,s\in\left\{1,\ldots,m\right\}$, and $x, y\in \R_{>0}$, can be written as
\begin{equation}
\label{KnmV}
K_{n,m}^{V}(r,x;s,y)=-\phi_{r,s}(x,y)+\sum\limits_{i,j=1}^n\phi_{r,m+1}(x,i)\left(A^{-1}\right)_{i,j}\phi_{0,s}(j,y).
\end{equation}
Here, the elements of matrix $A=(a_{i,j})$, with $i,j=1,\ldots,n$, are defined in \eqref{GeneralMatrixElements}.
The kernel depends on three sets of functions which are given as follows.

\textit{(i)}
For the first set of functions $\phi_{r,s}(x,y)$ we distinguish the following cases:
\begin{itemize}
\item for $r=1,\ldots,m-1$ we have:
\begin{equation}
\label{PHI1rr+1}
\phi_{r,r+1}(x,y)=\left(\frac{y}{x}\right)^{\nu_{r+1}}\frac{e^{-\frac{y}{x}-V_{r}(x)}}{x}.
\end{equation}

\item for $1\leq r\leq m-2$ and $r+2\leq s\leq m$ we have:
\begin{equation}
\label{PHI1rs}
\begin{split}
\phi_{r,s}(x,y)=&\frac{e^{-V_r(x)}}{x}\int\limits_0^{\infty}\ldots\int\limits_0^{\infty}
\left(\frac{t_{r+1}}{x}\right)^{\nu_{r+1}}
\left(\frac{t_{r+2}}{t_{r+1}}\right)^{\nu_{r+2}}
\ldots
\left(\frac{y}{t_{s-1}}\right)^{\nu_{s}}
\\
&\times e^{-\frac{t_{r+1}}{x}-\frac{t_{r+2}}{t_{r+1}}-\ldots-\frac{t_{s-1}}{t_{s-2}}-\frac{y}{t_{s-1}}-V_{r+1}\left(t_{r+1}\right)
-V_{r+2}\left(t_{r+2}\right)\ldots-V_{s-1}\left(t_{s-1}\right)}
\\
&\times
\frac{dt_{r+1}}{t_{r+1}}\frac{dt_{r+2}}{t_{r+2}}\ldots \frac{dt_{s-2}}{t_{s-2}}\frac{dt_{s-1}}{t_{s-1}}.
\end{split}
\end{equation}

\item for $1\leq s\leq r\leq m$ we have:
\begin{equation}
\label{PHI1s<=r}
\phi_{r,s}(x,y)=0.
\end{equation}

\end{itemize}

\textit{(ii)}
For the second set of functions $\phi_{r,m+1}(x,j)$ we distinguish the following cases:
\begin{itemize}
\item for $1\leq r\leq m-1$  we have :
\begin{equation}
\label{PHI2rm+1}
\begin{split}
\phi_{r,m+1}(x,j)=&\frac{e^{-V_r(x)}}{x}\int\limits_0^{\infty}\ldots\int\limits_0^{\infty}
\left(\frac{t_{r+1}}{x}\right)^{\nu_{r+1}}
\left(\frac{t_{r+2}}{t_{r+1}}\right)^{\nu_{r+2}}
\ldots
\left(\frac{t_m}{t_{m-1}}\right)^{\nu_{m}}
t_m^{\frac{j-1}{2}}I_{j-1}\left(2bt_m^{\frac{1}{2}}\right)
\\
&\times e^{-\frac{t_{r+1}}{x}-\frac{t_{r+2}}{t_{r+1}}-\ldots-\frac{t_{m-1}}{t_{m-2}}-\frac{t_m}{t_{m-1}}-V_{r+1}\left(t_{r+1}\right)
-V_{r+2}\left(t_{r+2}\right)\ldots-V_{m-1}\left(t_{m-1}\right)-V_{m}\left(t_{m}\right)}\\
&\times\frac{dt_{r+1}}{t_{r+1}}\frac{dt_{r+2}}{t_{r+2}}\ldots \frac{dt_{m-1}}{t_{m-1}}dt_m.
\end{split}
\end{equation}

\item for $r=m$ we have:
\begin{equation}
\label{PHI2mm+1}
\phi_{m,m+1}(x,j)=x^{\frac{j-1}{2}}I_{j-1}\left(2bx^{\frac{1}{2}}\right)e^{-V_m(x)}.
\end{equation}
\end{itemize}

\textit{(iii)}
For the third set of functions $\phi_{0,s}(i,y)$ we distinguish the following cases:
\begin{itemize}
\item for $2\leq s\leq m$ we have:
\begin{equation}
\label{PHI30s}
\begin{split}
\phi_{0,s}(i,y)=&\int\limits_0^{\infty}\ldots\int\limits_0^{\infty}
t_1^{\nu_1+i-1}\left(\frac{t_2}{t_1}\right)^{\nu_2}\ldots\left(\frac{t_{s-1}}{t_{s-2}}\right)^{\nu_{s-1}}
\left(\frac{y}{t_{s-1}}\right)^{\nu_s}
\\
&\times e^{-t_1-\frac{t_2}{t_1}-\ldots-\frac{t_{s-1}}{t_{s-2}}-\frac{y}{t_{s-1}}-
V_1\left(t_1\right)-V_2\left(t_2\right)-\ldots-V_{s-2}\left(t_{s-2}\right)-V_{s-1}\left(t_{s-1}\right)}\\
&\times
\frac{dt_1}{t_1}\frac{dt_2}{t_2}\ldots \frac{dt_{s-2}}{t_{s-2}}\frac{dt_{s-1}}{t_{s-1}}.
\end{split}
\end{equation}

\item for $s=1$ we have:
\begin{equation}
\label{PHI301}
\phi_{0,1}(i,y)=y^{\nu_1+i-1}e^{-y}.
\end{equation}
\end{itemize}

\end{thm}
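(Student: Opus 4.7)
The plan is to recognize the joint density (\ref{CouplingDensityFormula}) as an instance of the Eynard--Mehta chain structure and invoke the Eynard--Mehta theorem directly. Reading off the density, the transition between consecutive levels is governed by
\begin{equation*}
\phi_{l,l+1}(x,y)=\frac{y^{\nu_{l+1}}}{x^{\nu_{l+1}+1}}\,e^{-\frac{y}{x}-V_l(x)},\qquad l=1,\ldots,m-1,
\end{equation*}
together with the boundary data $\phi_{0,1}(i,y)=y^{\nu_1+i-1}e^{-y}$ on the left and $\phi_{m,m+1}(x,j)=x^{(j-1)/2}I_{j-1}(2bx^{1/2})e^{-V_m(x)}$ on the right. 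One checks immediately that each determinant in (\ref{CouplingDensityFormula}) is (up to transposition inside the determinant, which does not affect its value) a determinant of these functions evaluated at the appropriate variables, so that (\ref{CouplingDensityFormula}) is precisely of Eynard--Mehta form with $m$ levels.

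Next I would apply the Eynard--Mehta theorem \cite{EynardMehta} in its standard formulation: the configuration of points on the $m$ levels is a determinantal point process whose correlation kernel is
\begin{equation*}
K(r,x;s,y)=-\phi_{r,s}(x,y)+\sum_{i,j=1}^{n}\phi_{r,m+1}(x,i)(A^{-1})_{i,j}\phi_{0,s}(j,y),
\end{equation*}
where $\phi_{r,s}$ denotes the iterated convolution of the transition kernels (taken to be zero if $s\le r$), and the Gram matrix is $A_{i,j}=\phi_{0,m+1}(i,j)$. Comparing with the definition (\ref{GeneralMatrixElements}) shows that $a_{i,j}$ is indeed this total convolution, so (\ref{KnmV}) follows at once. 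The case $s\le r$ gives (\ref{PHI1s<=r}), and the boundary cases $s=r+1$, $r=m$, $s=1$ reproduce (\ref{PHI1rr+1}), (\ref{PHI2mm+1}), (\ref{PHI301}).

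The remaining work is to compute the multi-step convolutions
\begin{equation*}
\phi_{r,s}(x,y)=\int_{0}^{\infty}\!\!\cdots\!\int_{0}^{\infty}\phi_{r,r+1}(x,t_{r+1})\phi_{r+1,r+2}(t_{r+1},t_{r+2})\cdots\phi_{s-1,s}(t_{s-1},y)\,dt_{r+1}\cdots dt_{s-1},
\end{equation*}
and similarly for $\phi_{r,m+1}$ and $\phi_{0,s}$. Substituting the explicit form of each $\phi_{l,l+1}$ and rearranging the Jacobian $\prod_l dt_l/t_l$ against the ratios $(t_{l+1}/t_l)^{\nu_{l+1}}$, the products telescope into exactly the structures displayed in (\ref{PHI1rs}), (\ref{PHI2rm+1}), and (\ref{PHI30s}); the boundary factors $e^{-V_r(x)}/x$ or $t_1^{\nu_1+i-1}$ appear from the left endpoint of the chain, and the Bessel factor $t_m^{(j-1)/2}I_{j-1}(2bt_m^{1/2})$ enters only through $\phi_{m,m+1}$ at the right end.

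The main technical care required is bookkeeping: the density (\ref{CouplingDensityFormula}) contains the variable of level $l$ in the \emph{column} index of one determinant and in the \emph{row} index of the next, so one has to track which determinants to transpose before quoting Eynard--Mehta, and one has to verify that the potential $V_l(x)$ associated to level $l$ is attached to the outgoing transition $\phi_{l,l+1}$ (not to the incoming one). Once this is set up consistently, no analytic difficulty arises and the formulas (\ref{PHI1rr+1})--(\ref{PHI301}) fall out by inspection.
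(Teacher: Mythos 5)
Your proposal is correct and follows essentially the same route as the paper: the paper's Section on the Eynard--Mehta theorem likewise identifies the one-step transition functions $\phi_{0,1}(i,x)=x^{\nu_1+i-1}e^{-x}$, $\phi_{r,r+1}(x,y)=(y/x)^{\nu_{r+1}}x^{-1}e^{-y/x-V_r(x)}$, $\phi_{m,m+1}(x,j)=x^{(j-1)/2}I_{j-1}(2bx^{1/2})e^{-V_m(x)}$ from the density \eqref{CouplingDensityFormula}, applies the Eynard--Mehta theorem in Johansson's formulation, and writes out the iterated convolutions to obtain \eqref{PHI1rs}, \eqref{PHI2rm+1}, \eqref{PHI30s} and the identification $a_{i,j}=\phi_{0,m+1}(i,j)$. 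Your remarks on the transposition bookkeeping and on attaching $V_l$ to the outgoing transition are exactly the points the paper handles implicitly when matching the determinants to the chain structure.
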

Recall that  the case corresponding to $V_1(t)=\ldots=V_{m-2}(t)=0$, 
$V_{m-1}(t)=b^2t$, and $V_m(t)=0$
corresponds to the Ginibre product process with coupling. Let us denote the relevant correlation kernel by
$K_{n,m}(r,x;s,y;b)$, to emphasise its dependence on the coupling constant $b$.
 For the Ginibre product process with coupling
 we are able to reduce all multiple integrals in the statement of Theorem \ref{TheoremMostGerneralCorrelationKernel}
 to at most single integrals, expressed
 in terms of special functions, and to find the matrix $A^{-1}$ explicitly. This leads to a contour integral
 representation for the correlation kernel  $K_{n,m}(r,x;s,y;b)$ as follows.
 \begin{thm}\label{TheoremDoubleIntegralRepresentationExactKernel}
For the special case $V_1(t)=\ldots= V_{m-2}(t)=0$, 
$V_{m-1}(t)=b^2t$, and $V_m(t)=0$ the correlation kernel
of Theorem \ref{TheoremMostGerneralCorrelationKernel} can be written  as
\begin{equation}\label{CorrelationKernelFormulaContour}
\begin{split}
&K_{n,m}(r,x;s,y;b)=-\phi_{r,s}(x,y;b)+S_{n,m}(r,x;s,y;b).
\end{split}
\end{equation}
The functions $\phi_{r,s}(x,y;b)$ are given by
\begin{equation}\label{PHI(x,y,b)}
\begin{split}
&\phi_{r,s}(x,y;b)\\
&=\left\{
                  \begin{array}{llll}
                  \frac{e^{-\frac{y}{x}-b^2x}}{x}, & r=m-1, s=m,\\
                    \frac{1}{x}
G^{s-r,0}_{0,s-r}\left(\begin{array}{cccc}
                - \\
               \nu_{r+1},\ldots,\nu_{s}
             \end{array}\biggl|\frac{y}{x}\right), &  1\leq r<s\leq m-1,\\
                    \frac{1}{x}
\int\limits_{0}^{\infty}
G^{m-r-1,0}_{0,m-r-1}\left(\begin{array}{cccc}
                - \\
               \nu_{r+1},\ldots,\nu_{m-1}
             \end{array}\biggl|\frac{t}{x}\right)
             e^{-\frac{y}{t}-b^2t}\frac{dt}{t}, & 1\leq r\leq m-2, s=m,\\
0, & 1\leq s\leq r\leq m.
                  \end{array}
                \right.
\end{split}
\end{equation}
The functions $S_{n,m}(r,x;s,y;b)$ can be written as
\begin{equation}
\label{Snm}
\begin{split}
S_{n,m}(r,x;s,y;b)=&\frac{1}{(2\pi i)^2}\int\limits_{-\frac{1}{2}-i\infty}^{-\frac{1}{2}+i\infty}du
\oint\limits_{\Sigma_n}dt\frac{\Gamma(t-n+1)}{\Gamma(u-n+1)}
\frac{\prod_{j=0}^s\Gamma(u+\nu_j+1)}{\prod_{j=0}^r\Gamma(t+\nu_j+1)}\\
&\times\frac{x^tp_r\left(t,x;b\right)y^{-u-1}q_s\left(u+1,y;b\right)}{u-t}.
\end{split}
\end{equation}
Here,  $\Sigma_n$ is a closed contour encircling $0,1,\ldots,n$ in positive direction
and such that $\re t>-\frac{1}{2}$ for $t\in\Sigma_n$.
In \eqref{Snm}
the functions $p_r(t,x;b)$ are defined by
\begin{equation}\label{Functionp}
p_r(t,x;b)=\left\{
             \begin{array}{ll}
               1, & r\in\left\{1,\ldots,m-1\right\}, \\
               \frac{\Gamma(t+1)I_t\left(2bx^{\frac{1}{2}}\right)}{\left(bx^{\frac{1}{2}}\right)^t}, & r=m,
             \end{array}
           \right.
\end{equation}
and the functions  $q_s\left(u,y;b\right)$ are defined by
\begin{equation}\label{Functionq}
q_s\left(u,y;b\right)=\left\{
             \begin{array}{ll}
               1, & s\in\left\{1,\ldots,m-1\right\}, \\
               \frac{2\left(by^{\frac{1}{2}}\right)^uK_u\left(2by^{\frac{1}{2}}\right)}{\Gamma(u)}, & s=m.
             \end{array}
           \right.
\end{equation}
\end{thm}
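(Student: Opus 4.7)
The plan is to specialise the general kernel formula \eqref{KnmV} of Theorem \ref{TheoremMostGerneralCorrelationKernel} to the potentials \eqref{V-cond}, to evaluate all the multiple integrals defining the building blocks in closed form, and to recast the remaining finite double sum as the claimed double contour integral.

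The first step is to perform the iterated integrals in \eqref{PHI1rs}, \eqref{PHI2rm+1}, \eqref{PHI30s} and \eqref{GeneralMatrixElements}. Under the change of variables $u_0=t_1$ and $u_\ell=t_{\ell+1}/t_\ell$ for $\ell\geq 1$, the measure $\prod_\ell dt_\ell/t_\ell$ is preserved and the exponential $e^{-t_1-\sum t_{\ell+1}/t_\ell}$ decouples into $e^{-\sum u_\ell}$. Where all intermediate potentials vanish, the resulting products of independent Gamma integrals combine into a Meijer $G^{p,0}_{0,p}$-function, giving the first three cases of \eqref{PHI(x,y,b)}. Whenever the Bessel factor $I_{k-1}$ appears, the elementary identity
\[
\int_0^\infty t^{(k-1)/2}\,I_{k-1}\!\left(2b\sqrt{t}\right)\,e^{-t/s}\,dt = b^{k-1}s^{k}\,e^{b^{2}s},
\]
derived term-by-term from \eqref{I-def}, integrates out $t_m$; the generated exponential $e^{b^{2}t_{m-1}}$ then cancels exactly against $e^{-V_{m-1}(t_{m-1})}=e^{-b^{2}t_{m-1}}$, leaving decoupled Gamma-function integrals. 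This produces the closed forms $a_{i,j}=b^{j-1}\,\Gamma(i+j+\nu_1-1)\prod_{\ell=2}^{m-1}\Gamma(j+\nu_\ell)$ and $\phi_{r,m+1}(x,i)=b^{i-1}x^{i-1}\prod_{\ell=r+1}^{m-1}\Gamma(i+\nu_\ell)$ for $r\leq m-1$, while $\phi_{m,m+1}$ retains its original Bessel-$I$ form. The corresponding single-integral representation of $\phi_{0,s}(j,y)$ for $s=m$ is obtained from \eqref{K-def2}, producing a Bessel $K$.

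The matrix $A=(a_{i,j})$ factors as a Hankel matrix $G_{i,j}=\Gamma(i+j+\nu_1-1)$ times a diagonal matrix. The Hankel factor is the moment matrix of the Laguerre weight $t^{\nu_1}e^{-t}$ on $(0,\infty)$, so its inverse is given explicitly through the coefficients of the generalised Laguerre polynomials $L_k^{(\nu_1)}$ via their orthogonality, and the diagonal factor is trivially inverted; combining them one recovers the normalisation \eqref{Znmb}. To assemble the resulting double sum $\sum_{i,j=1}^n\phi_{r,m+1}(x,i)(A^{-1})_{i,j}\phi_{0,s}(j,y)$ into the integral \eqref{Snm}, one introduces a continuous variable $u$ via the Mellin-Barnes representation $e^{-y/t_{m-1}}=\frac{1}{2\pi i}\int\Gamma(u)(y/t_{m-1})^{-u}du$, which after the residual $t$-integrations produces the product $\prod_{j=0}^{s}\Gamma(u+\nu_j+1)$ together with the factor $q_s(u+1,y;b)$; the function $q_m$ absorbs the Bessel $K$ arising on the last level through \eqref{K-def2}, while $q_s\equiv 1$ otherwise. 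Dually, the discrete summation index $i$ is lifted to a contour variable $t$ via the residue identity $\sum_{i=1}^nf(i-1)=\frac{1}{2\pi i}\oint_{\Sigma_n}\Gamma(t-n+1)f(t)\,dt$, whose residues $(-1)^{n-1-k}/(n-1-k)!$ at $t=0,1,\ldots,n-1$ reproduce the finite summation; the Bessel-$I$ factor of $\phi_{m,m+1}$ is absorbed into $p_m(t,x;b)$, and $p_r\equiv 1$ for $r<m$. The Christoffel-Darboux completeness relation for $L_k^{(\nu_1)}$ then collapses the resulting double sum into the Cauchy kernel $1/(u-t)$, producing precisely \eqref{Snm}. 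The main obstacle is to carry out this last assembly rigorously: one has to justify the contour deformations and the interchange of sums with contour integrals, locate $\Sigma_n$ and $\{\re u=-\tfrac12\}$ so that the Gamma-function poles sit on the correct sides, and check that for $r,s\leq m-1$ the result agrees with the known Meijer $G$-kernel of the product of $m-1$ independent rectangular Ginibre matrices.
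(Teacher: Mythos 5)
Your proposal follows essentially the same route as the paper: specialise the transition functions of Theorem \ref{TheoremMostGerneralCorrelationKernel} to the potentials \eqref{V-cond} and evaluate the iterated integrals into Meijer $G$- and Bessel functions (using exactly the identity \eqref{I-int1} and the cancellation of $e^{b^2t_{m-1}}$ against $e^{-V_{m-1}(t_{m-1})}$), invert the Hankel-times-diagonal matrix $A$ via the Laguerre-weight moment-matrix inverse, lift the $u$-sum to a Mellin--Barnes contour and the $t$-sum to a residue contour, and collapse the remaining sum over the Laguerre index into the Cauchy factor $1/(u-t)$ (the paper does this last step via Chu--Vandermonde plus the Kuijlaars--Zhang summation formula \eqref{SumGammaFormula}, which is the concrete form of the Christoffel--Darboux telescoping you invoke). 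The only cosmetic caveat is that your displayed identity $\sum_{i=1}^n f(i-1)=\frac{1}{2\pi i}\oint_{\Sigma_n}\Gamma(t-n+1)f(t)\,dt$ omits the residue weights $(-1)^{n-1-k}/(n-1-k)!$ that you correctly list immediately afterwards, so it should be read as absorbing those weights into $f$.
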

\begin{Remarks}
(a) As a determinantal point process between the first $m-1$ levels, that is on
 $\left\{1,\ldots,m-1\right\}\times\R_{>0}$, the kernel  \eqref{CorrelationKernelFormulaContour} is independent of $b$ and agrees with
kernel $K_{n,m-1}^{\Ginibre}(r,x;s,y)$ of the Ginibre product process  \textit{without coupling} found by Strahov \cite[Prop.2.3]{StrahovD}:
\begin{equation}\label{K1}
\begin{split}
K_{n,m-1}^{\Ginibre}(r,x;s,y)=&-\frac{1}{x}G^{s-r,0}_{0,s-r}\left(\begin{array}{ccc}
                - \\
               \nu_{r+1},\ldots,\nu_s
             \end{array}\biggl|\frac{y}{x}\right)\mathbf{1}_{s>r}\\
             &+\frac{1}{(2\pi i)^2}\int\limits_{-\frac{1}{2}-i\infty}^{-\frac{1}{2}+i\infty}
du\oint\limits_{\Sigma_n}dt\frac{\prod_{j=0}^s\Gamma(u+\nu_j+1)}{\prod_{j=0}^r\Gamma(t+\nu_j+1)}
\frac{\Gamma(t-n+1)}{\Gamma(u-n+1)}
\frac{x^ty^{-u-1}}{u-t},
\end{split}
\end{equation}
where $\mathbf{1}_{s>r}$ is the indicator function\footnote{We follow the convention here that the index $m-1$ of the kernel on the left-hand side indicates the range that the arguments $r$ and $s$ can take.}. Because this agreement holds at finite matrix size $n$ it will also hold for the limiting kernel. For $r=m$ or $s=m$  the kernel $K_{n,m}(r,x;s,y;b)$ \eqref{CorrelationKernelFormulaContour} depends on $b$ and differs from the kernel \eqref{K1} for the Ginibre product  process.

(b) As $b\rightarrow 0$, also for $r=m$ or $s=m$ the correlation kernel $K_{n,m}(r,x;s,y;b)$ turns into the correlation kernel $K_{n,m}^{\Ginibre}(r,x;s,y)$ of the Ginibre product process (\ref{K1}) (with $m-1\to m$) from
\cite[Prop.2.3]{StrahovD}.
\end{Remarks}

As a particular case (corresponding to $r=s=m$) we obtain from Theorem \ref{TheoremDoubleIntegralRepresentationExactKernel}
a double contour integral representation for the correlation kernel of the biorthogonal ensemble defined in Proposition \ref{PropositionJointDensityTotalProductMatrix}, where only the squared singular values of the total product matrix $Y_m$ are retained.
\begin{prop}
Consider the biorthogonal ensemble of Proposition \ref{PropositionJointDensityTotalProductMatrix}.
This biorthogonal ensemble defines a determinantal point process on $\R_{>0}$. The correlation kernel of
this determinantal point process
can be written as
\begin{equation}\label{KERNEL}
\begin{split}
K_{n,m}(x;y;b)=&\frac{1}{(2\pi i)^2}\int\limits_{-\frac{1}{2}-i\infty}^{-\frac{1}{2}+i\infty}du
\oint\limits_{\Sigma_n}dt\frac{\Gamma(t-n+1)}{\Gamma(u-n+1)}
\frac{\prod_{j=0}^m\Gamma(u+\nu_j+1)}{\prod_{j=0}^m\Gamma(t+\nu_j+1)}\\
&\times\frac{x^ty^{-u-1}}{u-t}\frac{2\Gamma(t+1)\left(by^{\frac12}\right)^{u+1}}{\Gamma(u+1)\left(bx^{\frac12}\right)^t}
I_t\left(2bx^{\frac12}\right)K_{u+1}\left(2by^{\frac12}\right).
\end{split}
\end{equation}
\end{prop}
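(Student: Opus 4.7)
The proof is essentially an exercise in specialisation. The biorthogonal ensemble in Proposition \ref{PropositionJointDensityTotalProductMatrix} is precisely the marginal distribution of the multi-level product matrix process of Theorem \ref{TheoremMostGerneralCorrelationKernel} restricted to level $m$, i.e.\ one integrates out the variables $y^1,\ldots,y^{m-1}$. The plan is therefore to invoke the standard fact that the single-level marginal of a multi-level determinantal point process on $\{1,\ldots,m\}\times\R_{>0}$ is itself a determinantal point process on $\R_{>0}$, whose correlation kernel is the diagonal block $K_{n,m}(m,x;m,y;b)$ of the full kernel. Once this is accepted, the statement reduces to evaluating the kernel of Theorem \ref{TheoremDoubleIntegralRepresentationExactKernel} at $r=s=m$.

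First I would observe that the $\phi_{r,s}(x,y;b)$-term in formula \eqref{CorrelationKernelFormulaContour} vanishes at $r=s=m$, since this falls in the range $1\le s\le r\le m$ of the last line of \eqref{PHI(x,y,b)}. Hence
\begin{equation*}
K_{n,m}(m,x;m,y;b)=S_{n,m}(m,x;m,y;b).
\end{equation*}
Next I would substitute $r=s=m$ into \eqref{Snm} and read off $p_m$ and $q_m$ from \eqref{Functionp} and \eqref{Functionq}, namely
\begin{equation*}
p_m(t,x;b)=\frac{\Gamma(t+1)I_t\!\left(2bx^{\frac12}\right)}{\left(bx^{\frac12}\right)^t},\qquad q_m(u+1,y;b)=\frac{2\left(by^{\frac12}\right)^{u+1}K_{u+1}\!\left(2by^{\frac12}\right)}{\Gamma(u+1)}.
\end{equation*}
Plugging these into \eqref{Snm} and rearranging the factors $x^t$, $y^{-u-1}$, $(bx^{1/2})^{-t}$, $(by^{1/2})^{u+1}$, $\Gamma(t+1)$ and $\Gamma(u+1)$ gives the claimed formula \eqref{KERNEL} term by term.

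The only non-routine point is the marginalisation statement itself. I would justify it by quoting (or briefly re-deriving via the Eynard--Mehta formalism) the general principle that if a configuration on $\{1,\ldots,m\}\times\R_{>0}$ is determinantal with kernel $K^V_{n,m}(r,x;s,y)$ as in \eqref{KnmV}, then the point process obtained by keeping only the particles on level $s=m$ is determinantal on $\R_{>0}$ with kernel $(x,y)\mapsto K^V_{n,m}(m,x;m,y)$. This follows because the $n$-point correlation function on level $m$ alone is already a determinant built from this diagonal kernel, cf.\ the Eynard--Mehta computation in \cite{EynardMehta} which underlies Theorem \ref{TheoremMostGerneralCorrelationKernel}.

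The main (very mild) obstacle is bookkeeping: one has to track the $\Gamma$-ratios and the powers of $b$, $x$, $y$ arising when $p_m$ and $q_m$ are inserted, and to confirm that the $\prod_{j=0}^s\Gamma(u+\nu_j+1)/\prod_{j=0}^r\Gamma(t+\nu_j+1)$ prefactor in \eqref{Snm} becomes the symmetric product $\prod_{j=0}^m\Gamma(u+\nu_j+1)/\prod_{j=0}^m\Gamma(t+\nu_j+1)$ displayed in \eqref{KERNEL}. No new analytic input (contour deformations, convergence estimates, etc.) is needed beyond what was already established in Theorems \ref{TheoremMostGerneralCorrelationKernel} and \ref{TheoremDoubleIntegralRepresentationExactKernel}.
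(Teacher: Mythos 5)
Your proposal is correct and follows essentially the same route as the paper, which simply obtains \eqref{KERNEL} as the particular case $r=s=m$ of Theorem \ref{TheoremDoubleIntegralRepresentationExactKernel}: the term $\phi_{m,m}$ vanishes by the last line of \eqref{PHI(x,y,b)}, and inserting $p_m$ and $q_m$ from \eqref{Functionp}--\eqref{Functionq} into \eqref{Snm} gives the stated double contour integral. Your explicit justification of the marginalisation to level $m$ via the block-determinant form \eqref{rhoK} of the correlation functions is a point the paper leaves implicit, but it is exactly the right argument.
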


As $b\rightarrow 0$,
due to eqs. \eqref{ISmallb} and \eqref{KSmallb}
the last factors in the second line of
\eqref{KERNEL} can be replaced by 1, and we obtain
\begin{equation}
\begin{split}
&\underset{b\rightarrow 0}{\lim}K_{n,m}(x,y;b)=\frac{1}{(2\pi i)^2}\int\limits_{-\frac{1}{2}-i\infty}^{-\frac{1}{2}+i\infty}du\oint\limits_{\Sigma_n}dt
\frac{\Gamma(t-n+1)}{\Gamma(u-n+1)}
\frac{\prod_{j=0}^m\Gamma(u+\nu_j+1)}{\prod_{j=0}^m\Gamma(t+\nu_j+1)}\frac{x^ty^{-u-1}}{u-t}.
\end{split}
\end{equation}
The expression in the right-hand side of this equation 
represents the correlation kernel
for the ensemble of the squared singular values of $m$ rectangular matrices with independent complex Gaussian entries,
see Kuijlaars and Zhang \cite[Prop.5.1]{KuijlaarsZhang}.

\subsection{Hierarchy of correlation kernels at finite matrix size}
In order to summarise the results presented above let us briefly describe the hierarchy of the correlation kernels under considerations.
On top of the hierarchy we have the correlation kernel  $K_{n,m}^{V}\left(r,x;s,y\right)$ \eqref{KnmV} of Theorem \ref{TheoremMostGerneralCorrelationKernel}.
This kernel depends on $m$ potential functions $V_1(t)$, $\ldots$, $V_m(t)$, that couple the different levels, and lives on the space
$\left\{1,\ldots,m\right\}\times\R_{>0}$.

The correlation kernel $K_{n,m}\left(r,x;s,y;b\right)$ \eqref{CorrelationKernelFormulaContour} of Theorem \ref{TheoremDoubleIntegralRepresentationExactKernel}
is the specialisation of $K_{n,m}^{V}\left(r,x;s,y\right)$ to the case where $V_1(t)=\ldots=V_{m-2}(t)=0$, 
$V_{m-1}(t)=b^2t$, and $V_m(t)=0$.
It depends on one coupling constant $b>0$, and defines the Ginibre product process with coupling.
As we pointed out already, due to our choice of potentials this point process agrees with the  Ginibre product process without coupling for the first $m-1$ levels, with $r,s<m$.
On the other hand, taking $r=s=m$, we obtain the kernel
$K_{n,m}(x,y;b)$ 
that depends on the parameter $b$. It describes the biorthogonal ensemble for the singular values of the total product matrix with coupling.
Thus, for $r=m$ or $s=m$
the kernel $K_{n,m}\left(r,x;s,y;b\right)$ coupled to the $m$th level can be understood as a deformation of the kernel $K_{n,m}^{\Ginibre}(r,x;s,y)$ of the Ginibre product process.
For $r=s=m$, the latter kernel $K_{n,m}^{\Ginibre}(m,x;m,y)$ specialises to the finite-$n$ Meijer $G$-kernel, which is the correlation kernel for the ensemble of the squared singular values of $m$ rectangular matrices with independent complex Gaussian entries. Note that  the finite-$n$ Meijer $G$-kernel
can be obtained from $K_{n,m}(x,y;b)$ by taking the coupling parameter $b$ to  zero. Finally, if $r=s=1$, the kernel
$K_{n,m}^{\Ginibre}(1,x;1,y)$ specialises to that of the classical Laguerre ensemble.
\section{Hard edge scaling limits of the multi-level determinantal processes}\label{SectionHardEdgeScalingLimits}
The point  processes considered in this paper
are uniquely determined by their correlation functions. Therefore, we will say
that the point processes $\Pc$ converges to the point process  $\Pc'$, if all correlation functions of $\Pc$
converge to the corresponding correlation functions of $\Pc'$. Since we are dealing
with determinantal point processes only, the convergence of the correlation kernel of $\Pc$ to the correlation kernel of $\Pc'$
will be considered as equivalent to the convergence of $\Pc$ to $\Pc'$.

In order to discuss scaling limits of   the multi-level determinantal processes
it is convenient to introduce the following notation, following \cite{StrahovD}. Denote by $K_{\infty, m}^{\Ginibre}(r,x;s,y)$ the kernel given by the formula
\begin{equation}\label{KernelInfiniteGinibreProductProcess}
\begin{split}
K_{\infty, m}^{\Ginibre}(r,x;s,y)=&-\frac{1}{x}G^{s-r,0}_{0,s-r}\left(\begin{array}{ccc}
                - \\
               \nu_{r+1},\ldots,\nu_s
             \end{array}\biggl|\frac{y}{x}\right)\mathbf{1}_{s>r}\\
&+\frac{1}{(2\pi i)^2}\int\limits_{-\frac{1}{2}-i\infty}^{-\frac{1}{2}+i\infty}
du\oint\limits_{\Sigma_{\infty}}dt\frac{\prod_{j=0}^s\Gamma(u+\nu_j+1)}{\prod_{j=0}^r\Gamma(t+\nu_j+1)}
\frac{\sin\pi u}{\sin\pi t}
\frac{x^ty^{-u-1}}{u-t}.
\end{split}
\end{equation}
Here $x>0$, $y>0$, the parameters  $r, s$ take values in $\left\{1,\ldots,m\right\}$, and  $\Sigma_{\infty}$  is a contour starting from $+\infty$ in the upper half plane and returning to $+\infty$ in the lower half plane, leaving $-\frac{1}{2}$ on the left, and encircling $\{0,1,2,\ldots \}$.
Here, we write the correlation kernel $K_{\infty,m}^{\Ginibre}(r,x;s,y)$ with an index $m$, only to emphasise that the variables $r$ and $s$ take values in $\left\{1,\ldots,m\right\}$.
Note that for $r=s$ the kernel $K_{\infty,m}^{\Ginibre}(r,x;r,y)=K_{\infty,r}^{\Ginibre}(r,x;r,y)$ is the limiting \textit{Meijer G-kernel}  obtained by Kuijlaars and Zhang \cite{KuijlaarsZhang}. It describes
the hard edge scaling limit for the product of $r$ independent complex Gaussian matrices and reduces to the standard Bessel kernel for $r=1$.
In \cite{StrahovD} Strahov showed that $K_{\infty,m}^{\Ginibre}(r,x;s,y)$ can be understood as a multi-level extension of the infinite Meijer $G$-kernel.

In the context of products of random matrices, the kernel $K_{\infty, m}^{\Ginibre}(r,x;s,y)$ describes  the hard edge  scaling limit of the Ginibre product process. Recall
that the Ginibre product process is a determinantal point process on
$\left\{1,\ldots,m\right\}\times\R_{>0}$ whose density is proportional to expression (\ref{GinibreDensityFormula}),
and whose correlation kernel is given by formula (\ref{K1}).
The following proposition was shown in \cite{StrahovD}:
\begin{prop}\label{PropositionScalingLimitGInibreProcess} Assume that the point configurations $\left(y_1^m,\ldots,y_n^m;\ldots;y_1^1,\ldots,y_n^1\right)$
form the Ginibre product process on $\left\{1,\ldots,m\right\}\times\R_{>0}$, given by (\ref{GinibreDensityFormula}). As $n\rightarrow\infty$,
the scaled Ginibre process  formed by the point configurations $\left(ny_1^m,\ldots,ny_n^m;\ldots;ny_1^1,\ldots,ny_n^1\right)$
converges to the determinantal point process
$\Pc^{\Ginibre}_{\infty,m}$ on $\left\{1,\ldots,m\right\}\times\R_{>0}$, whose correlation kernel,
$K_{\infty, m}^{\Ginibre}(r,x;s,y)$, is defined by equation (\ref{KernelInfiniteGinibreProductProcess}).

Alternatively, we have
\begin{equation}
\label{GinlimES}
\underset{n\rightarrow\infty}{\lim}\frac{1}{n}K_{n,m}^{\Ginibre}\left(r,\frac{x}{n};s,\frac{y}{n}\right)
=K_{\infty, m}^{\Ginibre}(r,x;s,y),
\end{equation}
where $K_{n, m}^{\Ginibre}(r,x;s,y)$ is defined by equation (\ref{K1})
and $K_{\infty, m}^{\Ginibre}(r,x;s,y)$
by equation (\ref{KernelInfiniteGinibreProductProcess}). The variables $x$ and $y$ take values in a compact subset of $\R_{>0}$, and $1\leq r,s\leq m$.
\end{prop}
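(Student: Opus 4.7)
The two parts are equivalent: a determinantal process is determined by its correlation kernel, and the correlation functions depend continuously on the kernel on compact sets, so it suffices to establish the pointwise (in fact uniform on compact subsets of $\R_{>0}\times\R_{>0}$) kernel limit \eqref{GinlimES} for each fixed pair $(r,s)$. I would start from the finite-$n$ formula \eqref{K1}, substitute $x\to x/n$, $y\to y/n$, multiply by $1/n$, and treat its two summands separately. The Meijer $G$ piece is easy: the prefactor $1/x$ becomes $n/x$, which cancels the factor $1/n$, while the argument $y/x$ is invariant under the joint rescaling; this immediately reproduces the $G$-function term of \eqref{KernelInfiniteGinibreProductProcess}.

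For the double contour integral, the scaling leaves behind an extra factor $n^{u-t}$ multiplying the only $n$-dependent piece, $\Gamma(t-n+1)/\Gamma(u-n+1)$. The reflection identity $\Gamma(z)\Gamma(1-z)=\pi/\sin(\pi z)$ rewrites this ratio as
\begin{equation*}
\frac{\Gamma(t-n+1)}{\Gamma(u-n+1)}=\frac{\sin(\pi u)}{\sin(\pi t)}\,\frac{\Gamma(n-u)}{\Gamma(n-t)},
\end{equation*}
and standard Stirling yields $\Gamma(n-u)/\Gamma(n-t)=n^{t-u}(1+O(1/n))$ uniformly on compact sets in $(t,u)$. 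Hence pointwise $n^{u-t}\Gamma(t-n+1)/\Gamma(u-n+1)\longrightarrow \sin(\pi u)/\sin(\pi t)$, which is precisely the extra factor appearing in \eqref{KernelInfiniteGinibreProductProcess}. It then remains to deform $\Sigma_n$ into the Hankel-type contour $\Sigma_\infty$: for each $n$ one deforms $\Sigma_n$ to a truncation of a fixed $\Sigma_\infty$ inside a disk of radius $R_n\to\infty$ chosen with $R_n<n-\tfrac{1}{2}$ (so that no new poles of $\Gamma(t-n+1)$ are crossed), and closes it off on the far side.

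The main technical obstacle is producing an $n$-uniform integrable majorant so that dominated convergence applies. On the vertical line $\re u=-\tfrac{1}{2}$, Stirling's asymptotic $|\Gamma(a+is)|\sim\sqrt{2\pi}|s|^{a-1/2}e^{-\pi|s|/2}$ forces the numerator $\prod_{j=0}^s\Gamma(u+\nu_j+1)$ to decay like $e^{-(s+1)\pi|\im u|/2}$ up to polynomial corrections; this beats the growth $|\sin\pi u|\sim\tfrac{1}{2}e^{\pi|\im u|}$ coming from the reflection factor as soon as $s+1\geq 2$, which holds since $r,s\geq 1$. On the $t$-contour, a companion Stirling bound $|\Gamma(n-u)/\Gamma(n-t)|\leq Cn^{\re(t-u)}$, together with the decay of $1/\prod_{j=0}^r\Gamma(t+\nu_j+1)$ at large $|\im t|$ and the boundedness of $1/\sin(\pi t)$ on the fixed contour, furnishes the required $n$-uniform domination. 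Once this is in hand, dominated convergence delivers \eqref{GinlimES} and hence the claimed convergence of the scaled Ginibre product process to $\Pc^{\Ginibre}_{\infty,m}$.
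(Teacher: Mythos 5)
The paper itself does not prove this proposition; it imports it from Strahov \cite{StrahovD}. Your route, however, is exactly the one the paper employs for the closely analogous weak-coupling limit in Section 9.1 (and the one used in \cite{StrahovD} and \cite{KuijlaarsZhang}): split off the Meijer $G$ term, observe that the rescaling leaves the factor $n^{u-t}$, combine it with the reflection/Stirling asymptotic $\Gamma(t-n+1)/\Gamma(u-n+1)=\frac{\sin\pi u}{\sin\pi t}\,n^{t-u}\left(1+O(n^{-1})\right)$, deform $\Sigma_n$ into $\Sigma_\infty$, and pass to the limit inside the double integral. Your handling of the $G$-function term and of the Gamma ratio is correct.

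The gap is in the claimed $n$-uniform integrable majorant on the vertical $u$-line. Writing $u=-\tfrac12+i\tau$, Stirling gives $\prod_{j=0}^{s}\left|\Gamma(u+\nu_j+1)\right|\sim C\,|\tau|^{\sum_{j}\nu_j}\,e^{-(s+1)\pi|\tau|/2}$, while $|\sin\pi u|\sim\tfrac12e^{\pi|\tau|}$. Net exponential decay therefore requires $(s+1)/2>1$, i.e.\ $s\geq 2$; your condition ``$s+1\geq 2$'' admits the borderline case $s=1$, where the exponentials cancel exactly and the absolute value of the integrand behaves like $|\tau|^{\nu_1-1}$ once the factor $1/|u-t|$ is included. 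This is not integrable (it even grows when $\nu_1\geq 1$), so the dominated convergence argument as you set it up fails precisely on the first level $s=1$, which the proposition covers since $1\leq r,s\leq m$. To treat $s=1$ one cannot use a crude absolute-value bound: the limiting double integral is not absolutely convergent there, and one must exploit the oscillatory phases of $y^{-u}$ and of the Gamma functions (treating the integral as an iterated, conditionally convergent one), or follow the finer contour analysis of \cite[Section 5.2]{KuijlaarsZhang} to which the paper also appeals. The remaining ingredients of your sketch --- the contour deformation from $\Sigma_n$ to $\Sigma_\infty$, which crosses no poles because $\Gamma(t-n+1)/\prod_{j=0}^r\Gamma(t+\nu_j+1)$ has poles only at $t=0,1,\ldots,n-1$, and the uniform bound $\left|\Gamma(n-u)/\Gamma(n-t)\right|\leq Cn^{\re(t-u)}$ --- are sound.
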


Now, let us consider the convergence of the Ginibre product process with coupling. Recall that by
this
we mean the multi-level determinantal process on $\left\{1,\ldots,m\right\}\times\R_{>0}$
whose density is given by expression (\ref{CouplingDensityFormula1})
and whose correlation kernel $K_{n,m}(r,x;s,y;b)$ is given by Theorem \ref{TheoremDoubleIntegralRepresentationExactKernel}.
As we already pointed out previously, for $r,s\leq m-1$ this kernel agrees with that of the Ginibre product process \textit{without} coupling, \eqref{K1}, for which the hard edge limit was already worked out in \cite{StrahovD}, see \eqref{GinlimES} in Proposition \ref{PropositionScalingLimitGInibreProcess} above (with $m-1\to m$). Therefore, in the following
theorem
we will only find nontrivial results for
the limit of the kernel $K_{n,m}(r,x;s,y;b)$ with $r=m$ or $s=m$, that includes  the $m$-th level.
In particular
we will now drop the assumption that the coupling parameter $b$ is  constant, and consider $b$ as
a function of the number of particles on the same level, $b=b(n)$.
Recall that once $b$ approaches zero, the Ginibre product process with coupling turns
into
that without,
as discussed in Section \ref{SectionExactResults}.
Depending on how  $b(n)$ behaves as a function of $n$, we find three different limits for the limiting kernel that includes level $m$.

\begin{thm}\label{TheoremHardEdgeGinibreCouplingProcess}
Let the point configurations $\left(y_1^m,\ldots,y_n^m;\ldots;y_1^1,\ldots,y_n^1\right)$ with density  (\ref{CouplingDensityFormula1})
form the Ginibre product process with coupling on $\left\{1,\ldots,m\right\}\times\R_{>0}$, with correlation kernel $K_{n,m}(r,x;s,y;b)$
of Theorem \ref{TheoremDoubleIntegralRepresentationExactKernel}, equation (\ref{CorrelationKernelFormulaContour}).\\
\textbf{(A)} \textbf{Weak coupling regime}. Assume that ${b(n)}/{\sqrt{n}}\rightarrow 0$, as $n\rightarrow\infty$. The scaled Ginibre product process with coupling
of configurations $\left(ny_1^m,\ldots,ny_n^m;\ldots;ny_1^1,\ldots,ny_n^1\right)$ converges as $n\rightarrow\infty$ to  the determinantal point process
$\Pc^{\Ginibre}_{\infty,m}$.
Equivalently,
we have the following relation to the limiting kernel of the Ginibre product process (\ref{KernelInfiniteGinibreProductProcess}), for all levels
$1\leq r,s\leq m$:
\begin{equation}\label{WeakCouplingRegimeLimit}
\underset{n\rightarrow\infty}{\lim}\frac{1}{n}K_{n,m}\left(r,\frac{x}{n};s,\frac{y}{n};b(n)\right)
=K_{\infty, m}^{\Ginibre}(r,x;s,y).
\end{equation}
Here,
the variables $x$ and $y$ take values in a compact subset of $\R_{>0}$.\\
\textbf{(B)} \textbf{Interpolating
regime}. Assume that ${b(n)}/{\sqrt{n}}\rightarrow \alpha$, as $n\rightarrow\infty$, where $\alpha>0$.
As $n\rightarrow\infty$,  the scaled Ginibre product process with coupling
formed by the point configurations $\left(ny_1^m,\ldots,ny_n^m;\ldots;ny_1^1,\ldots,ny_n^1\right)$
converges to the determinantal point process $\Pc^{\interpol}_{\infty,m}(\alpha)$ on $\left\{1,\ldots,m\right\}\times\R_{>0}$. Its correlation kernel
is given by
\begin{equation}\label{KernelZKExtended}
\begin{split}
&K_{\infty, m}^{\interpol}(r,x;s,y;\alpha)=-\phi_{r,s}(x,y;\alpha)\\
             &+\frac{1}{(2\pi i)^2}\int\limits_{-\frac{1}{2}-i\infty}^{-\frac{1}{2}+i\infty}
du\oint\limits_{\Sigma_{\infty}}dt\frac{\prod_{j=0}^s\Gamma(u+\nu_j+1)}{\prod_{j=0}^r\Gamma(t+\nu_j+1)}
\frac{\sin\pi u}{\sin\pi t}
\frac{x^tp_r(t,x;\alpha)q_s(u+1,y;\alpha)}{y^{u+1}(u-t)},
\end{split}
\end{equation}
for $r=m$ or $s=m$, with
the functions $\phi_{r,s}(x,y;\alpha)$, $p_r(t,x;\alpha)$, and $q_s(u+1,y;\alpha)$ defined  by equations (\ref{PHI(x,y,b)}), (\ref{Functionp}), and (\ref{Functionq}), respectively.
For $1\leq r,s\leq m-1$ its correlation kernel is given by $K_{\infty, m-1}^{\Ginibre}(r,x;s,y)$.
Equivalently, we have for $r=m$ or $s=m$:
\begin{equation}
\underset{n\rightarrow\infty}{\lim}\frac{1}{n}K_{n,m}\left(r,\frac{x}{n};s,\frac{y}{n};b(n)\right)
=K_{\infty, m}^{\interpol}(r,x;s,y;\alpha),
\end{equation}
and \eqref{WeakCouplingRegimeLimit} (with $m\to m-1$ on the right-hand side)
for $1\leq r,s\leq m-1$.
Here,
the variables $x$ and $y$ take values in a compact subset of $\R_{>0}$.\\
\textbf{(C)} \textbf{Strong coupling regime}. Assume that ${b(n)}/{\sqrt{n}}\rightarrow\infty$, as $n\rightarrow\infty$.
The scaled  process Ginibre product process with coupling
formed by the point configurations
$$
\left(\frac{n}{b(n)}\left(y_1^m\right)^{\frac{1}{2}},\ldots,\frac{n}{b(n)}\left(y_n^m\right)^{\frac{1}{2}};
ny_1^{m-1},\ldots,ny_n^{m-1};
\ldots;
ny_1^1,\ldots,ny_n^1\right)
$$
converges as $n\rightarrow\infty$  to the determinantal point process $\Pc^{\Ginibre}_{\infty,m-1}$ of $m-1$ levels, with the following
identification of levels $m-1$ and $m$ on $\left\{1,\ldots,m\right\}\times\R_{>0}$.
Namely, its  correlation kernel
is given by the following limiting relations:\\
$\bullet$ For $1\leq r,s\leq m-1$ we have equation \eqref{WeakCouplingRegimeLimit} (with  $m\to m-1$ on the right-hand side).
$\bullet$ For $1\leq r\leq m-1$, $s=m$ we have
\begin{equation}\label{SKR2}
\begin{split}
&\underset{n\rightarrow\infty}{\lim}\frac{2^{\frac{1}{2}}b(n)y^{\frac{1}{2}}}{n^{\frac{3}{2}}}
K_{n,m}\left(r,\frac{x}{n};m,\frac{b^2(n)y^2}{n^2};b(n)\right)\frac{e^{2\frac{b^2(n)}{n}y}}{2^{\frac{1}{2}}\pi^{\frac{1}{2}}}\\
&=K^{\Ginibre}_{\infty,m-1}(r,x;m-1,y)+\mathbf{1}_{r,m-1}\delta(x-y).
\end{split}
\end{equation}
$\bullet$
For $r=m$, $1\leq s\leq m-1$  we have
\begin{equation}\label{SKR3}
\underset{n\rightarrow\infty}{\lim}\frac{2^{\frac{1}{2}}b(n)x^{\frac{1}{2}}}{n^{\frac{3}{2}}}
K_{n,m}\left(m,\frac{b^2(n)x^2}{n^2};s,\frac{y}{n};b(n)\right)\frac{2^{\frac{1}{2}}\pi^{\frac{1}{2}}}{e^{2\frac{b^2(n)}{n}x}}
=K^{\Ginibre}_{\infty,m-1}(m-1,x;s,y).
\end{equation}
$\bullet$
For $r=m$ and $s=m$ we have
\begin{equation}\label{SKR4}
\begin{split}
&\underset{n\rightarrow\infty}{\lim}\frac{2b^2(n)x^{\frac{1}{2}}y^{\frac{1}{2}}}{n^{2}}
K_{n,m}\left(m,\frac{b^2(n)x^2}{n^2};m,\frac{b^2(n)y^2}{n^2};b(n)\right)
\frac{e^{2\frac{b^2(n)}{n}y}}{e^{2\frac{b^2(n)}{n}x}}\\
&=K^{\Ginibre}_{\infty,m-1}(m-1,x;m-1,y).
\end{split}
\end{equation}
In all these formulas the variables $x$ and $y$ take values in a compact subset of $\R_{>0}$.
\end{thm}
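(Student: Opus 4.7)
The plan is to start from the double-contour representation of $K_{n,m}(r,x;s,y;b(n))$ in Theorem \ref{TheoremDoubleIntegralRepresentationExactKernel} and split cases by whether the pair $(r,s)$ touches level $m$. For $1\le r,s\le m-1$ the Remark following that theorem shows that the kernel is independent of $b$ and equals $K_{n,m-1}^{\Ginibre}(r,x;s,y)$, so in all three regimes the corresponding claims reduce directly to Proposition \ref{PropositionScalingLimitGInibreProcess} applied to the $(m-1)$-level Ginibre product process. The remainder of the argument therefore concerns pairs with $r=m$ or $s=m$.

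For the double integral $S_{n,m}$ in \eqref{Snm} I would first deform $\Sigma_n$ to $\Sigma_\infty$ and apply the reflection identity
\[
\frac{\Gamma(t-n+1)}{\Gamma(u-n+1)}=\frac{\sin\pi u}{\sin\pi t}\cdot\frac{\Gamma(n-u)}{\Gamma(n-t)},
\]
combined with Stirling, which yields $\Gamma(n-u)/\Gamma(n-t)\sim n^{t-u}$ uniformly on the relevant contour. The resulting factor $n^{t-u}$ together with $(x/n)^t$, $(y/n)^{-u-1}$ and the overall $1/n$ from \eqref{WeakCouplingRegimeLimit} combine to a quantity independent of $n$, leaving only the Bessel-like factors $p_m,q_m$ to track. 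In regime (A) both tend to $1$ by \eqref{ISmallb} and \eqref{KSmallb}, yielding $K_{\infty,m}^{\Ginibre}$. In regime (B), with $b(n)(x/n)^{1/2}\to\alpha x^{1/2}$, they converge to $p_m(t,x;\alpha)$ and $q_m(u+1,y;\alpha)$, producing $K_{\infty,m}^{\interpol}(r,x;s,y;\alpha)$. In regime (C) the arguments $2b^2x/n$ and $2b^2y/n$ diverge, so \eqref{A1} and \eqref{A2} produce the exponentials $e^{\pm 2b^2x/n}$, $e^{\mp 2b^2y/n}$ and algebraic prefactors of order $\sqrt{n}/(b\sqrt{x})$; these are precisely the factors built into the normalisations of \eqref{SKR2}--\eqref{SKR4}, so they cancel, and the identities $\nu_0=\nu_m=0$ collapse the remaining Gamma ratio to that of $K_{\infty,m-1}^{\Ginibre}$ with level $m\mapsto m-1$.

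The $\phi_{r,s}$ term is analyzed separately. In (A) and (B) convergence is routine, yielding $\phi_{r,s}(x,y;0)$ respectively $\phi_{r,s}(x,y;\alpha)$ by direct computation and standard Meijer-$G$ convolution identities. The hard part is regime (C). For $r=m-1,s=m$ a short computation shows that inserting the prefactor of \eqref{SKR2} into $\phi_{m-1,m}(x/n,b^2y^2/n^2;b)$ yields, after completing the square in the exponent,
\[
\frac{\beta\,y^{1/2}}{\sqrt{\pi}\,x}\exp\!\left[-\frac{\beta^{2}(y-x)^{2}}{x}\right],\qquad \beta=b(n)/\sqrt{n}\to\infty,
\]
which converges in the distributional sense to $\delta(y-x)$, delivering the contact term in \eqref{SKR2}. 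For $1\le r\le m-2$, $s=m$, one applies Laplace's method to the single integral in \eqref{PHI(x,y,b)}: the saddle of $b^{2}(y^{2}/(n^{2}t)+t)$ sits at $t_{*}=y/n$ with second derivative $2b^{2}n/y$, producing Gaussian width $\sqrt{\pi y}/(b\sqrt{n})$ and saddle value $e^{-2b^{2}y/n}$; together with the prefactor of \eqref{SKR2} this cancels exactly, leaving the Meijer-$G$ factor $\frac{1}{x}G^{m-1-r,0}_{0,m-1-r}(\nu_{r+1},\ldots,\nu_{m-1}|y/x)$, i.e.\ the $\phi$-part of $K_{\infty,m-1}^{\Ginibre}(r,x;m-1,y)$.

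The principal obstacle is the bookkeeping in regime (C): several factors grow or decay exponentially in $b^{2}/n$, and one must verify uniformly on compact $(x,y)$-subsets that they cancel against the chosen normalisations. To justify the interchange of limit and integral, one needs integrable majorants for the Bessel remainder terms and for the $u$-integral on $\re u=-1/2$; the $\sin\pi u/\sin\pi t$ factor supplies the necessary decay in imaginary directions on $\Sigma_\infty$, provided the contour is placed at $\re t>-1/2$ so that the Gamma-function products remain bounded on it. With these uniform bounds in hand, dominated convergence concludes the proofs of (A), (B) and (C).
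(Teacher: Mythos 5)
Your proposal follows essentially the same route as the paper's proof: the same split of the kernel into the $\phi_{r,s}$ part and the double contour integral $S_{n,m}$, the same Gamma-ratio asymptotics $\Gamma(t-n+1)/\Gamma(u-n+1)\sim(\sin\pi u/\sin\pi t)\,n^{t-u}$, the same tracking of $p_m$ and $q_s$ via the Bessel asymptotics \eqref{ISmallb}, \eqref{KSmallb}, \eqref{A1}, \eqref{A2} in the three regimes, Laplace's method for $\phi_{r,m}$ with $1\leq r\leq m-2$, and completion of the square to produce the Dirac delta for $r=m-1$, $s=m$. The only ingredient the paper makes explicit that you leave implicit is the passage from the kernel limits \eqref{SKR2}--\eqref{SKR4} back to convergence of the differently rescaled point process in regime (C), which the paper justifies by computing the Jacobian of the square-root rescaling of level $m$ on the correlation functions and invoking the equivalence of kernels conjugated by $\varphi_r(x)/\varphi_s(y)$.
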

\begin{Remarks}
(a) The interpolating point process $\Pc^{\interpol}_{\infty,m}(\alpha)$, that contains
the correlation kernel $K_{\infty, m}^{\interpol}(r,x;s,y;\alpha)$ for $r$ or $s=m$, can be understood as a one-parameter 
{deformation}
of the determinantal point process $\Pc^{\Ginibre}_{\infty,m}$
defined by  the correlation kernel $K_{\infty, m}^{\Ginibre}(r,x;s,y)$. This will be made more precise in the next Section \ref{SectionInterpolatingProcess}.\\
(b) The limiting point process 
obtained
in the strong coupling
regime is a determinantal process living on $m-1$ levels, with an infinite number of points on each level, and where the levels $m$ and $m-1$ have been identified.
The correlations between particles on the first $m-1$ levels trivially agrees
with that of $\Pc^{\Ginibre}_{\infty,m-1}$, the hard edge scaling limit of the Ginibre product process with $m-1$ levels.
The correlations between particles only on the $m$-th level agree with those of the Ginibre product process on the $(m-1)$-th level.
Furthermore, the correlations between particles on the $m$-th level and all other particles on the first $m-1$ levels are the same as if all the particles on the $m$-th level would be on the $(m-1)$-th level - unless two points on level $m$ and level $m-1$ coincide, hence the contact interaction in terms of the Dirac delta.
In other words, in the asymptotic strong coupling  regime the point process on the $m$-th level
\textit{collapses} 
to that
on the $(m-1)$-th level, and the $m$-matrix  model behaves statistically like an $(m-1)$-matrix model.\\
(c) In particular, for $m=2$ the limiting process at strong coupling collapses to
the classical Bessel-kernel process of a single level, with the collapse of the first to the second level described in the previous remark.
\end{Remarks}
For the biorthogonal ensemble formed by the singular values of the total product matrix $Y_m=G_m\cdots G_1$ only,
Theorem \ref{TheoremHardEdgeGinibreCouplingProcess} leads to the following result.
\begin{cor}\label{CorollaryTheoremHardEdgeGinibreCouplingProcess} Let $Y_m=G_m\cdots G_1$ be the total product matrix of the multi-matrix model defined by equation (\ref{MainProbabilityMeasure1}),
and let $\left(y_1,\ldots,y_n\right)$ be the squared singular values of $Y_m$.\\
\textbf{(A)} Assume that ${b(n)}/{\sqrt{n}}\rightarrow 0$, as $n\rightarrow\infty$.
Then, the point process formed by the point configurations
$\left(ny_1,\ldots,ny_n\right)$ converges to the process for the product of $m$ independent Gaussian matrices,
i.e. to the determinantal process on $\R_{>0}$ with the correlation kernel given by the Meijer $G$-kernel $K_{\infty, m}^{\Ginibre}(m,x;m,y)$,
equation (\ref{KernelInfiniteGinibreProductProcess}).\\
\textbf{(B)}  Assume that ${b(n)}/{\sqrt{n}}\rightarrow\alpha$, as $n\rightarrow\infty$, where $\alpha>0$.
Then, the point process formed by the point configurations
$\left(ny_1,\ldots,ny_n\right)$ converges to the determinantal process on $\R_{>0}$ defined by the correlation  kernel $K_{\infty, m}^{\interpol}(m,x;m,y;\alpha)$,
given by equation (\ref{KernelZKExtended}). The limiting determinantal process can be understood as
a one-parameter
deformation of the Meijer $G$-kernel process for the product of $m$ independent Gaussian matrices.\\
\textbf{(C)}  Assume that ${b(n)}/{\sqrt{n}}\rightarrow\infty$, as $n\rightarrow\infty$.
Then, the scaled  process
formed by the point configurations
$
\left(\frac{n}{b(n)}\left(y_1^m\right)^{\frac{1}{2}},\ldots,\frac{n}{b(n)}\left(y_n^m\right)^{\frac{1}{2}}\right)
$
converges to the determinantal process on $\R_{>0}$ with the correlation kernel given by the Meijer $G$-kernel $K_{\infty, m-1}^{\Ginibre}(m-1,x;m-1,y)$,
the limiting
process for the product of $m-1$ independent Gaussian matrices.

\end{cor}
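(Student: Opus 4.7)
[Proof proposal for Corollary \ref{CorollaryTheoremHardEdgeGinibreCouplingProcess}]
The plan is to deduce the three parts directly from the corresponding parts of Theorem \ref{TheoremHardEdgeGinibreCouplingProcess} by specialising the level indices to $r=s=m$, together with an elementary change of variables and a gauge transformation of the kernel in the strong coupling case. The starting observation is that the squared singular values of the total product matrix $Y_m=G_m\cdots G_1$ form the marginal of the product matrix process on its $m$th level, so they constitute a determinantal point process whose correlation kernel is the restriction $K_{n,m}(m,x;m,y;b)$ of the multi-level kernel of Theorem \ref{TheoremDoubleIntegralRepresentationExactKernel}. Consequently, every scaling limit of the multi-level kernel with $r=s=m$ gives a scaling limit of the biorthogonal ensemble of Proposition \ref{PropositionJointDensityTotalProductMatrix}.

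For part (A), I would simply specialise \eqref{WeakCouplingRegimeLimit} to $r=s=m$: under the scaling $y_j\mapsto ny_j$, the rescaled kernel converges to $K^{\Ginibre}_{\infty,m}(m,x;m,y)$, which by Proposition \ref{PropositionScalingLimitGInibreProcess} (or \cite{KuijlaarsZhang}) is the Meijer $G$-kernel for the product of $m$ independent complex Gaussian matrices. Part (B) is obtained in the same way from Theorem \ref{TheoremHardEdgeGinibreCouplingProcess}(B) at $r=s=m$, yielding the kernel $K^{\interpol}_{\infty,m}(m,x;m,y;\alpha)$ defined in \eqref{KernelZKExtended}. In both cases no further work is required because the scaling of the $m$th-level variables and the contours agree verbatim with those in Theorem \ref{TheoremHardEdgeGinibreCouplingProcess}.

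For part (C), the content lies in the change of variables from $y_j^m$ to $x_j=\frac{n}{b(n)}(y_j^m)^{1/2}$. I would argue as follows. Firstly, if $\{y_j^m\}$ is a determinantal point process on $\R_{>0}$ with kernel $K(y,y')$, then after applying a smooth bijective change of variables $y=f(x)$ the new point process has kernel $|f'(x)f'(x')|^{1/2}K(f(x),f(x'))$; in our case $f(x)=\frac{b^2(n)}{n^2}x^2$, so the Jacobian factor is $\frac{2b^2(n)}{n^2}\,x^{1/2}y^{1/2}$, precisely the prefactor appearing on the left-hand side of \eqref{SKR4}. Secondly, multiplying the kernel by $h(x)/h(y)$ for any positive function $h$ is a gauge transformation that leaves all determinants $\det[K(x_i,x_j)]$, and hence all correlation functions, invariant; this absorbs the exponential factor $e^{2b^2(n)y/n}/e^{2b^2(n)x/n}$ in \eqref{SKR4}. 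Combining these two observations with \eqref{SKR4} shows that the point process formed by $\{x_j\}$ converges to the determinantal process with kernel $K^{\Ginibre}_{\infty,m-1}(m-1,x;m-1,y)$, which is the Meijer $G$-kernel for the product of $m-1$ independent complex Gaussian matrices.

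The main technical point is the bookkeeping in (C): one must check that the Jacobian from the square-root change of variables and the exponential prefactors introduced by conjugation conspire exactly so that Theorem \ref{TheoremHardEdgeGinibreCouplingProcess} applies. Beyond this, the corollary is a direct specialisation, and no genuinely new asymptotic analysis is needed.
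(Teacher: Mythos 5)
Your proposal is correct and follows essentially the same route as the paper: the corollary is obtained there as a direct specialisation of Theorem \ref{TheoremHardEdgeGinibreCouplingProcess} to $r=s=m$, and the change-of-variables Jacobian together with the kernel-equivalence (gauge) argument you describe for part (C) is exactly the bookkeeping carried out in the paper's proof of the strong coupling regime (equations \eqref{RelationsBetweenCorrelationFunctions}--\eqref{k0}). No gaps.
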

\section{The interpolating multi-level determinantal process}\label{SectionInterpolatingProcess}
Consider the determinantal process  $\Pc^{\interpol}_{\infty,m}(\alpha)$ on $\left\{1,\ldots,m\right\}\times\R_{>0}$ in the interpolating regime.
We will show that it 
interpolates between the hard edge scaling limit $\Pc^{\Ginibre}_{\infty,m}$
of the Ginibre product process of $m$ independent levels,
and that
in the strong coupling regime among $m-1$ levels.
For this reason we will refer to $\Pc^{\interpol}_{\infty,m}(\alpha)$ as to the interpolating multi-level
determinantal process.
\begin{thm}\label{TheoremInterpolation}
Assume that the point configurations
$$
\left(x_1^m,x_2^m,\ldots;x_1^{m-1},x_2^{m-1},\ldots;\ldots;x_1^1,x_2^1,\ldots\right)
$$
form the determinantal process  $\Pc^{\interpol}_{\infty,m}(\alpha)$ on $\left\{1,\ldots,m\right\}\times\R_{>0}$
in the interpolating regime.\\
\textbf{(A)} As $\alpha\rightarrow 0$, this determinantal process  $\Pc^{\interpol}_{\infty,m}(\alpha)$ converges
to the determinantal process  $\Pc^{\Ginibre}_{\infty,m}$ corresponding to the hard edge scaling limit
of the Ginibre product process of $m$ levels. In particular, for $r=m$ or $s=m$ we find
$$
\underset{\alpha\rightarrow 0}{\lim}K_{\infty, m}^{\interpol}(r,x;s,y;\alpha)=K_{\infty, m}^{\Ginibre}(r,x;s,y).
$$
where the variables $x$ and $y$ take values in a compact subset of $\R_{>0}$.\\
\textbf{(B)} Consider the scaled interpolating process formed by the point configurations
$$
\left(\frac{1}{\alpha}\left(x_1^m\right)^{\frac{1}{2}},
\frac{1}{\alpha}\left(x_2^m\right)^{\frac{1}{2}},\ldots;x_1^{m-1},x_2^{m-1},\ldots;\ldots;x_1^1,x_2^1,\ldots\right).
$$
As $\alpha\rightarrow\infty$, this determinantal process converges to
the hard edge scaling limit
in the strong coupling regime.

This is equivalent to the following statement:\\
$\bullet$ For $1\leq r,s\leq m-1$ we have the kernel \eqref{KernelInfiniteGinibreProductProcess}
(with  $m\to m-1$ on the right-hand side).\\
$\bullet$ For $1\leq r\leq m-1$, $s=m$ we have
\begin{equation}\label{I2}
\underset{\alpha\rightarrow\infty}{\lim}\left(2^{\frac{1}{2}}\alpha y^{\frac{1}{2}}
K_{\infty, m}^{\interpol}(r,x;m,\alpha^2y^2;\alpha)
\frac{e^{2\alpha^2y^2}}{2^{\frac{1}{2}}\pi^{\frac{1}{2}}}
\right)=
K_{\infty, m}^{\Ginibre}(r,x;m-1,y)+\mathbf{1}_{r,m-1}\delta(x-y).
\end{equation}
$\bullet$ For $r=m$, $1\leq s\leq m-1$ we have
\begin{equation}\label{I3}
\underset{\alpha\rightarrow\infty}{\lim}\left(2^{\frac{1}{2}}\alpha x^{\frac{1}{2}}
K_{\infty, m}^{\interpol}(m,\alpha^2x^2;s,y;\alpha)
\frac{2^{\frac{1}{2}}\pi^{\frac{1}{2}}}{e^{2\alpha^2x^2}}
\right)=
K_{\infty, m}^{\Ginibre}(m-1,x;s,y).
\end{equation}
$\bullet$ Finally,  for $r=m$, $s=m$ we have
\begin{equation}\label{I4}
\underset{\alpha\rightarrow\infty}{\lim}\left(2\alpha^2x^{\frac{1}{2}}y^{\frac{1}{2}}
K_{\infty, m}^{\interpol}(m,\alpha^2x^2;m,\alpha^2y^2;\alpha)\frac{e^{2\alpha^2y^2}}{e^{2\alpha^2x^2}}\right)=
K_{\infty, m}^{\Ginibre}(m-1,x;m-1,y).
\end{equation}
In all these formulas the variables $x$ and $y$ take values in a compact subset of $\R_{>0}$.
\end{thm}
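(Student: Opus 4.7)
The plan is to start from the contour-integral representation \eqref{KernelZKExtended} of $K_{\infty,m}^{\interpol}(r,x;s,y;\alpha)$ and analyse both limits $\alpha\to 0$ and $\alpha\to\infty$ termwise. In \eqref{KernelZKExtended} the $\alpha$-dependence of the double integral enters only through $p_r(t,x;\alpha)$ and $q_s(u+1,y;\alpha)$ of \eqref{Functionp}--\eqref{Functionq}, which equal $1$ unless the corresponding index is $m$; likewise the polar term $-\phi_{r,s}(x,y;\alpha)$ from \eqref{PHI(x,y,b)} depends on $\alpha$ only when $s=m$. Thus in every subcase the analysis reduces to controlling $p_m$, $q_m$, and the $\phi_{r,m}$ terms.

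\textbf{Part (A).} The small-argument asymptotics \eqref{ISmallb} and \eqref{KSmallb} yield $p_m(t,x;\alpha)\to 1$ and $q_m(u+1,y;\alpha)\to 1$ pointwise. In \eqref{PHI(x,y,b)}, $e^{-\alpha^2 x}$ (for $r=m-1$, $s=m$) and $e^{-\alpha^2 t}$ inside the integrand (for $1\le r\le m-2$, $s=m$) tend to $1$; for the latter, the limit $\tfrac{1}{x}\int_0^\infty G^{m-r-1,0}_{0,m-r-1}(-;\nu_{r+1},\ldots,\nu_{m-1}\vert t/x)e^{-y/t}\,dt/t$ coincides with $\tfrac{1}{x}G^{m-r,0}_{0,m-r}(-;\nu_{r+1},\ldots,\nu_m\vert y/x)$ via the Mellin-convolution identity for Meijer $G$-functions (using $\nu_m=0$), which is precisely the $\phi$-term of $K_{\infty,m}^{\Ginibre}$ in \eqref{KernelInfiniteGinibreProductProcess}. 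The interchange of limit and contour integrals is routine by dominated convergence: the Gamma ratios decay factorially on $\{\re u=-1/2\}$, while $|p_m|,|q_m|$ are uniformly bounded on the contours for $\alpha$ near $0$ by the series \eqref{I-def} and the integral representation \eqref{K-def1}.

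\textbf{Part (B).} The large-argument expansions \eqref{A1}, \eqref{A2} give, uniformly in bounded complex orders,
\[
p_m(t,\alpha^2 x^2;\alpha)\sim\frac{\Gamma(t+1)e^{2\alpha^2 x}}{2\sqrt{\pi}\,(\alpha^2 x)^{t+1/2}},\qquad q_m(u+1,\alpha^2 y^2;\alpha)\sim\frac{\sqrt{\pi}\,(\alpha^2 y)^{u+1/2}e^{-2\alpha^2 y}}{\Gamma(u+1)}.
\]
The prefactors in \eqref{I2}--\eqref{I4} are designed to cancel the $\alpha^2$-scale exponentials and the $\alpha$-powers exactly; after this cancellation, the factor $\Gamma(t+1)/\Gamma(u+1)$ combines with $\prod_{j=0}^s\Gamma(u+\nu_j+1)/\prod_{j=0}^r\Gamma(t+\nu_j+1)$ using $\nu_m=0$ to produce the Gamma ratio at $m-1$ levels, while $x^t y^{-u-1}$ is preserved; the double integral thus collapses to the $S$-piece of $K_{\infty,m-1}^{\Ginibre}$ in \eqref{KernelInfiniteGinibreProductProcess}. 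The polar term $\phi_{r,m}$ is treated separately: for $r=m-1$, completing the square gives $\phi_{m-1,m}(x,\alpha^2 y^2;\alpha)=\tfrac{1}{x}e^{-2\alpha^2 y}e^{-\alpha^2(y-x)^2/x}$, so after the prescribed scaling the expression is a normalised Gaussian in $y$ of width $\sqrt{x}/\alpha$ and total mass $1$, converging weakly to $\delta(x-y)$ and producing the contact term in \eqref{I2}. For $1\le r\le m-2$, Laplace's method on the integral in \eqref{PHI(x,y,b)}, with critical point $t=y$ of the exponent $-\alpha^2(y^2/t+t)$, yields $\tfrac{1}{x}G^{m-r-1,0}_{0,m-r-1}(-;\nu_{r+1},\ldots,\nu_{m-1}\vert y/x)$, matching the $\phi$-term of $K_{\infty,m-1}^{\Ginibre}(r,x;m-1,y)$; no delta appears since $r\neq m-1$. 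The cases $r=m$, $s\le m-1$ are handled symmetrically using $p_m$, and $r=s=m$ is simplest since $\phi_{m,m}=0$.

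\textbf{Main obstacle.} The subtle step is justifying the interchange of limit with the double contour integral in part (B), where $p_m$ and $q_m$ carry exponentials diverging as $\alpha\to\infty$ that cancel only pointwise against the external scalings. Uniform control is required along the non-compact contours $\{\re u=-1/2\}$ and $\Sigma_\infty$: one invokes the uniform large-argument expansions of $I_\nu,K_\nu$ valid for orders in a fixed vertical strip, deforms $\Sigma_\infty$ onto a saddle-point contour of bounded radius independent of $\alpha$, and uses the factorial decay of the Gamma ratios in $|\im t|,|\im u|$ to tame the tails.
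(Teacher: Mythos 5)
Your proposal is correct and follows essentially the same route as the paper's own proof: part (A) via the small-argument asymptotics \eqref{ISmallb}, \eqref{KSmallb} and the Meijer $G$-convolution identity with $\nu_m=0$, and part (B) via the large-argument Bessel asymptotics \eqref{A1}, \eqref{A2} (your expansions of $p_m$ and $q_m$ agree with the paper's), Laplace's method for $\phi_{r,m}$ with $r\leq m-2$, completion of the square yielding the Gaussian that converges to $\delta(x-y)$ for $r=m-1$, and dominated convergence for the double contour integrals. The only point worth noting is that the exponential prefactors in \eqref{I2}--\eqref{I4} as printed read $e^{2\alpha^2 y^2}$ where the cancellation you (correctly) describe requires $e^{2\alpha^2 y}$ (cf.\ the factors $e^{2b^2(n)y/n}$ in Theorem \ref{TheoremHardEdgeGinibreCouplingProcess} (C)); this is a typo in the statement, not a gap in your argument.
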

In particular, Theorem \ref{TheoremInterpolation} implies that the determinantal  process on $\R_{>0}$ defined by the correlation kernel
$K_{\infty, m}^{\interpol}(m,x;m,y;\alpha)$ is an interpolating (one-level) determinantal point process. It interpolates between
the Meijer $G$-kernel process for $m$ independent matrices, and the Meijer $G$-kernel process for $m-1$ independent matrices (with contact interaction from the identification of level $m$  with level $m-1$).
\section{An integration formula for coupled matrices}\label{SectionIntegrationFormula}
Below we derive an integration formula related to the investigation of singular variables of coupled matrices, see Lemma \ref{LemmaCoupling}.
The  obtained formula will be applied to multi-matrix models, in order to derive the corresponding joint densities.
\begin{lem}\label{LemmaCoupling}
Assume that $\nu\geq 0$ is an integer, and $b>0$. Let
$G$ be a complex matrix of size $(n+\nu)\times l$,
where $l\geq n\geq 1$,
and $X$ be a complex matrix of size $l\times n$,
with non-zero squared singular values
$x_1,\ldots, x_n$. 
Denote by $y(GX)=\{y_1,\ldots, y_n\}$
the set of the squared singular values  of matrix $Y=GX$.
The complex flat Lebesgue measure for matrix $G$ is denoted by $dG$, i.e.
$dG=\prod_{j=1}^l\prod_{k=1}^{n+\nu}dG_{j,k}^RdG_{j,k}^I$,  $G_{j,k}=G_{j,k}^R+iG_{j,k}^I$  denotes the sum
of the real and imaginary parts of the matrix entries $G_{j,k}$,
Let $f(.)$ be a continuous function on
$\R_{>0}^n$ with compact support. In addition, we assume that $f(y)=f(y_1,\ldots,y_n)$ is symmetric with respect to
permutations of $y_1$, $\ldots$, $y_n$.
Finally, $V$ (the potential) is some scalar positive function, such that all the following three matrix integrals exist.
\\
\textbf{(A)} We have for $l=n$
\begin{equation}\label{CouplingFormula}
\begin{split}
&\int f(y(GX))e^{-\Tr\left[G^*G\right]+b\Tr\left[GX+\left(GX\right)^*\right]-\Tr V\left(\left(GX\right)^*(GX)\right)}dG\\
&=\frac{c}{\Delta_n(\{x_j\})}\\
&\times\int\limits_0^{\infty}\ldots\int\limits_0^{\infty}f(y_1,\ldots,y_n)
\det\left[\frac{e^{-\frac{y_j}{x_k}-V(y_j)}}{x_k}\right]_{j,k=1}^n\det\left[y_j^{\frac{k-1}{2}}I_{k-1}
\left(2by_j^{\frac{1}{2}}\right)\right]_{j,k=1}^ndy_1\ldots dy_n,
\end{split}
\end{equation}
where we recall that the Vandermonde determinant was defined in \eqref{Vandermonde}.
The constant $c$ does not depend on the set $\{x_1,\ldots,x_n\}$. \\
\textbf{(B)} Without the coupling term ($b=0$) and for general $l\geq n$ we have
\begin{equation}\label{CouplingFormula1}
\begin{split}
&\int f(y(GX))e^{-\Tr\left[G^*G\right]-\Tr V\left(\left(GX\right)^*(GX)\right)}dG\\
&=\frac{c'}{\Delta_n(\{x_i\})}\int\limits_0^{\infty}\ldots\int\limits_0^{\infty}f(y_1,\ldots,y_n)
\det\left[\frac{y_j^{\nu}e^{-\frac{y_j}{x_k}-V(y_j)}}{x_k^{\nu+1}}\right]_{j,k=1}^n\Delta_n\left(\{y_i\}\right)dy_1\ldots dy_n.
\end{split}
\end{equation}
Here the constant $c'$ does not depend on the set $\{x_1,\ldots,x_n\}$.\\
\textbf{(C)} Without matrix $X$ and for $l=n$ we have:
\begin{equation}\label{CouplingFormula2}
\begin{split}
&\int f(y(G))e^{-\Tr\left[G^*G\right]-\Tr V\left(G^*G\right)}dG\\
&=c''\int\limits_0^{\infty}\ldots\int\limits_0^{\infty}f(y_1,\ldots,y_n)
\Delta_n^2\left(\{y_i\}\right)\prod\limits_{j=1}^ny_j^{\nu}e^{-y_j-V(y_j)}dy_1\ldots dy_n,
\end{split}
\end{equation}
where here $y(G)=\{y_1,\ldots,y_n\}$ is the set of squared singular values of $G$ instead.
\end{lem}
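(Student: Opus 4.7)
The plan is to unify all three parts via the singular value decomposition, reducing each matrix integral to a scalar integral over the squared singular values plus residual unitary integrations handled by classical group-theoretic formulas. I would treat part (A) in detail (implicitly with $\nu=0$, since the coupling trace $\Tr[GX+(GX)^*]$ requires $GX$ to be square), and then obtain parts (B) and (C) as simpler specializations.

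For part (A), first reduce to diagonal $X$: write $X = U_X\Lambda_X V_X^*$ with $\Lambda_X = \diag(\sqrt{x_1},\ldots,\sqrt{x_n})$ and substitute $G \mapsto V_X^* G U_X^*$, which preserves $dG$, $\Tr[G^*G]$, $\Tr V((GX)^*(GX))$, and the multiset of singular values of $GX$, while turning $\Tr[GX+(GX)^*]$ into $\Tr[G\Lambda_X + \Lambda_X G^*]$ by cyclicity. Next, change variables to $Y = G\Lambda_X$ with Jacobian $\prod_j x_j^n$, so that $\Tr[G^*G]$ becomes $\Tr[Y^*Y\Lambda_X^{-2}]$ and the coupling becomes $\Tr[Y+Y^*]$. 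Finally, parametrize $Y$ by its SVD $Y = U\Sigma W^*$ with $\Sigma = \diag(\sqrt{y_j})$ and $U, W \in \mathrm{U}(n)$; the Lebesgue measure factorizes as $dY = c_n \Delta_n(\{y\})^2 \prod_j dy_j\, dU\, dW$, and the potential becomes $\Tr V(Y^*Y) = \sum_j V(y_j)$.

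The heart of the argument is the two residual unitary integrations. The $W$-integral is the classical Harish-Chandra-Itzykson-Zuber identity,
\begin{equation*}
\int_{\mathrm{U}(n)} dW\, e^{-\Tr[W\Sigma^2 W^*\Lambda_X^{-2}]} = c'_n \frac{\det[e^{-y_i/x_j}]_{i,j=1}^n}{\Delta_n(\{y\})\,\Delta_n(\{1/x_j\})},
\end{equation*}
while the $U$-integral, by left-invariance of Haar measure, depends only on the singular values of $\Sigma W^*$ (namely $\sqrt{y_j}$) and is given by the Berezin-Karpelevich-Bessel formula,
\begin{equation*}
\int_{\mathrm{U}(n)} dU\, e^{2b\,\re\Tr[U\Sigma]} = c''_n \frac{\det[y_j^{(k-1)/2}\, I_{k-1}(2b\sqrt{y_j})]_{j,k=1}^n}{\Delta_n(\{y\})}.
\end{equation*}
Multiplying the three ingredients, the $\Delta_n(\{y\})^2$ from the SVD Jacobian exactly cancels the two $\Delta_n(\{y\})^{-1}$ factors; the $(\prod_j x_j)^{-n}$ Jacobian of $Y = G\Lambda_X$ combined with $\Delta_n(\{1/x_j\})^{-1} = (-1)^{n(n-1)/2}\prod_j x_j^{n-1}/\Delta_n(\{x_j\})$ produces $1/\Delta_n(\{x_j\})$ together with a column factor $1/x_k$ that is pulled into the $W$-integral determinant to yield the form $\det[e^{-y_j/x_k}/x_k]/\Delta_n(\{x_k\})$ of \eqref{CouplingFormula}.

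Part (B) is the $b=0$ specialization (the $U$-integral drops out); for $l > n$ one first splits $G = [G_1 \mid G_2]$, where $G_2$ contributes only a Gaussian normalization independent of $x_j$, and the rectangular SVD of the $(n+\nu) \times n$ matrix $Y = G_1\Lambda_X$ yields the additional factor $\prod_j y_j^\nu$ from its Jacobian. Part (C) is immediate: the SVD of $G$ alone gives $dG \propto \prod_j y_j^\nu \Delta_n(\{y\})^2 \prod_j dy_j$, with the remaining Haar integrations trivial. The principal technical obstacle is the Berezin-Karpelevich-Bessel identity for $\int dU\, e^{2b\,\re\Tr[U\Sigma]}$, which lies deeper than HCIZ; its determinantal form can be established via the Schur-function expansion of $e^{\re\Tr[U\Sigma]}$ and Haar orthogonality combined with the power-series representation of $I_{k-1}$.
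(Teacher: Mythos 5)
Your proposal is correct and follows essentially the same route as the paper's proof: strip off the rectangular part of $X$ (the paper uses $X=U\bigl(\begin{smallmatrix}X_0\\0\end{smallmatrix}\bigr)$ and keeps $X_0$ general rather than fully diagonalizing, but this is immaterial), change variables to $Y=GX$, take the SVD of $Y$ with the $\Delta_n^2$ Jacobian, and evaluate the two residual unitary integrals by the Harish-Chandra--Itzykson--Zuber formula and the Bessel-determinant group integral (cited in the paper as the Brower--Rossi--Tan/Leutwyler--Smilga formula), followed by the same bookkeeping with $\Delta_n(\{x_j^{-1}\})$. The only cosmetic difference is that the paper handles $\nu>0$ in part (A) by splitting off a decoupling Gaussian block exactly as you do in part (B), rather than restricting to $\nu=0$.
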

\begin{proof}
Consider the following measure
$$
P(G,X)dG=e^{-\Tr\left[G^*G\right]+b\Tr\left[GX+\left(GX\right)^*\right]-\Tr V\left(\left(GX\right)^*(GX)\right)}dG.
$$
Following the analysis of Fischman et al. \cite{Jonit} we set
$$
X=U\left(\begin{array}{c}
           X_0 \\
           0
         \end{array}
\right),
$$
where $U$ is an $(n+\nu)\times (n+\nu)$ unitary matrix, $0$ is a $\nu\times n$ matrix consisting of zeros only, and $X_0$ is an $n\times n$ complex matrix.
We have
\begin{equation}
\begin{split}
&P(G,X)dG\\
&=e^{-\Tr\left[G^*G\right]+b\Tr\left[GU\left(\begin{array}{c}
           X_0 \\
           0
         \end{array}
\right)
+\left(\begin{array}{cc}
                                X_0^* & 0
                              \end{array}
\right)U^*G^*\right]-\Tr V\left(
                              \left(\begin{array}{cc}
                                      X_0^* & 0
                                    \end{array}
                              \right)U^*G^*
GU\left(\begin{array}{c}
                                X_0\\
                                0
                              \end{array}\right)
                              \right)}dG.
\end{split}
\nonumber
\end{equation}
Set $\widehat{G}=GU$, and note that $\widehat{G}$ is a matrix of size $n\times (n+\nu)$.
Use the invariance of the Lebesgue measure under unitary transformations to write
$$
P(G,X)dG=e^{-\Tr\left[\widehat{G}^*\widehat{G}\right]+b\Tr\left[\widehat{G}\left(\begin{array}{c}
           X_0 \\
           0
         \end{array}
\right)
+\left(\begin{array}{cc}
                                X_0^* & 0
                              \end{array}
\right)\widehat{G}^*\right]
-\Tr V\left(
                              \left(\begin{array}{cc}
                                      X_0^* & 0
                                    \end{array}
                              \right)\widehat{G}^*
\widehat{G}\left(\begin{array}{c}
                                X_0\\
                                0
                              \end{array}\right)
                              \right)}d\widehat{G},
$$
where $d\widehat{G}=\prod_{j=1}^n\prod_{k=1}^{n+\nu}d\widehat{G}_{j,k}^Rd\widehat{G}_{j,k}^I$, and where
$\widehat{G}_{j,k}=\widehat{G}_{j,k}^R+i\widehat{G}_{j,k}^I$  denotes the sum
of the real and imaginary parts of the matrix entries $\widehat{G}_{j,k}$. Write
$\widehat{G}$ as $\widehat{G}=\left(G_0,G_1\right)$, where $G_0$ is a matrix of size $n\times n$, and where
$G_1$ is a matrix of size $n\times\nu$. We obtain the following decomposition of the measure $P(G,X)dG$
\begin{equation}
P(G,X)dG=\left(e^{-\Tr\left[G_0^*G_0\right]+b\Tr\left[G_0X_0
+G_0^*X_0^*\right]
-\Tr V\left(\left(G_0X_0\right)^*\left(G_0X_0\right)\right)}dG_0\right)
e^{-\Tr\left[G_1^*G_1\right]}dG_1.
\end{equation}
We have put brackets here to emphasise that the integrals over $G_0$ and $G_1$ decouple, with the latter giving only an additional multiplicative constant.
The important observation is that the matrices $X$ and $X_0$ have the same singular values, and that the matrices $GX$ and $G_0X_0$ have the same singular values.
Set $Y=G_0X_0$. We have  $dG_0=\det^{-n}\left[X_0^*X_0\right]dY$, which gives
\begin{equation}
\begin{split}
&P(G,X)dG\\
&=\left(e^{-\Tr\left[\left(YX_0^{-1}\right)^*YX_0^{-1}\right]+b\Tr\left[Y
+Y^*\right]
-\Tr V\left(Y^*Y\right)}
{\det}^{-n}\left[X_0^*X_0\right]dY\right)
e^{-\Tr\left[G_1^*G_1\right]}dG_1.
\end{split}
\end{equation}
Here, we have used that $X_0$ is invertible, which is ensured by its non-zero squared singular values.

The the singular value decomposition of the matrix $Y$ can be written as $Y=\widetilde{U}\Sigma P$, where both $\widetilde{U}$, $P$ are unitary matrices of the same size $n\times n$,
and where $\Sigma$ is an $n\times n$ diagonal matrix with a real matrix entries,
$$
\Sigma=\left(\begin{array}{cccc}
               \sqrt{y_1} & 0 & \ldots & 0 \\
               0 & \sqrt{y_2} & \ldots & 0 \\
               \vdots &  &  &\\
               0 & 0 & 0 & \sqrt{y_n}
             \end{array}
\right),
$$
and $y_1,\ldots,y_n$ are the
squared singular values of $Y$.
The Jacobian 
corresponding to this decomposition is
$$
dY=c_1\Delta_n(\{y_j\})^2d\widetilde{U}dPdy_1\ldots dy_n,
$$
where $c_1$ is some constant. We thus obtain
\begin{equation}
\begin{split}
P(G,X)dG=&c_1e^{-\Tr\left[P^*\Sigma^2 P\left(X_0^*X_0\right)^{-1}\right]+b\Tr\left[\widetilde{U}\Sigma P
+P^*\Sigma \widetilde{U}^*\right]
-\Tr V\left(\Sigma^2\right)}
{\det}^{-n}\left[X_0^*X_0\right]\Delta_n(\{y_j\})^2d\widetilde{U}dV\\
&\times e^{-\Tr\left[G_1^*G_1\right]}dG_1.
\end{split}
\nonumber
\end{equation}
The next step is to use the invariance of the Haar measure under left shifts by the group elements, $\widetilde{U}\rightarrow \hat{U}=P\widetilde{U}$, and to rewrite
the expression above as
\begin{equation}
\begin{split}
P(G,X)dG=&c_1e^{-\Tr\left[P^*\Sigma^2 P\left(X_0^*X_0\right)^{-1}\right]+b\Tr\left[\hat{U}\Sigma
+\Sigma \hat{U}^*\right]-\Tr V\left(\Sigma^2\right)}
{\det}^{-n}\left[X_0^*X_0\right]\Delta_n(\{y_j\})^2d\hat{U}dP\\
&\times e^{-\Tr\left[G_1^*G_1\right]}dG_1.
\end{split}
\nonumber
\end{equation}

The integration over $P$ can be performed using the Harish-Chandra--Itzykson-Zuber integral formula \cite{HC,IZ}
$$
\int_{U(n)}e^{-\Tr\left[P^*\Sigma^2P\left(X_0^*X_0\right)^{-1}\right]}dP=\const\frac{\det\left[e^{-\frac{y_j}{x_k}}\right]_{j,k=1}^n}{\triangle(\{y_j\})\Delta_n(\{x_j^{-1}\})},
$$
where the constant depends only on $n$. The integration over ${\hat{U}}$ can be done exploiting the following integration formula \cite{Brower} (sometimes called Leutwyler-Smilga formula \cite{LeutwylerSmilga})
$$
\int_{U(n)}e^{b\Tr\left[\Sigma\left({\hat{U}}+{\hat{U}}^*\right)\right]}d{\hat{U}}
=\const\frac{\det\left[y_j^{\frac{k-1}{2}}I_{k-1}\left(2by_j^{\frac{1}{2}}\right)\right]_{j,k=1}^n}{\Delta_n(\{y_j\})},
$$
where $I_{\kappa}(x)$ denotes the modified Bessel function of the first kind.
After integration, and after some simplifications,
$\Delta_n(\{x_j^{-1}\})=(-1)^{\frac{n(n-1)}{2}}\prod_{k=1}^nx_k^{-n+1}
\Delta_n(\{x_j\})$,
we obtain  formula (\ref{CouplingFormula}) in the statement of the Lemma. Formula (\ref{CouplingFormula1}) can be obtained in the same way, and formula (\ref{CouplingFormula2})
is well known.
\end{proof}
\begin{prop}\label{TheoremDeformedDensity} Let $\nu$, $X$, $G$ be as in the statement of Lemma \ref{LemmaCoupling}, and consider the probability measure
$$
P(G,X)dG=\frac{1}{Z_n}e^{-\Tr\left[G^*G\right]+b\Tr\left[GX+\left(GX\right)^*\right]}dG
$$
over rectangular complex matrices $G$. Here, $dG$ is the flat complex Lebesgue measure, and $Z_n$ is a normalising constant.
Then, the density of the squared singular values $y_1,\ldots,y_n$ of $Y=GX$ is
$$
P(y_1,\ldots,y_n)dy_1\ldots dy_n=\frac{\prod_{l=1}^n{{x_l}^{-1}e^{-b^2x_l}}}{b^{\frac{n(n-1)}{2}}}\frac{\det\left[e^{-\frac{y_j}{x_k}}\right]_{j,k=1}^n
\det\left[y_j^{\frac{k-1}{2}}I_{k-1}(2by_j)\right]_{j,k=1}^n}{n!\Delta_n(\{x_j\})}.
$$
\end{prop}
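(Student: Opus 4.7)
The plan is to apply Lemma \ref{LemmaCoupling}(A) with vanishing potential $V\equiv 0$ to identify the density of the squared singular values up to a multiplicative constant, and then to fix that constant by normalisation, using Andr\'eief's identity together with an explicit Bessel--Laplace integral.

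First, applying Lemma \ref{LemmaCoupling}(A) with $V\equiv 0$ gives, for any symmetric continuous compactly supported $f$,
\begin{equation*}
\int f(y(GX))\,e^{-\Tr[G^*G]+b\Tr[GX+(GX)^*]}\,dG=\frac{c}{\Delta_n(\{x_j\})}\int f(y)\det\left[\frac{e^{-y_j/x_k}}{x_k}\right]\det\left[y_j^{\frac{k-1}{2}}I_{k-1}(2by_j^{1/2})\right]dy,
\end{equation*}
with a constant $c$ depending only on $n$ and $b$. Dividing by $Z_n$ identifies the symmetric joint density of $(y_1,\ldots,y_n)$ up to the overall factor $c/Z_n$.

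Second, to fix $c/Z_n$ I would impose $\int P\,dy=1$. Applying Andr\'eief's identity to the product of the two determinants reduces the joint integral to $n!\det[a_{ij}]_{i,j=1}^n$, where
\[
a_{ij}=\int_0^\infty\frac{e^{-y/x_i}}{x_i}\,y^{\frac{j-1}{2}}I_{j-1}(2by^{1/2})\,dy.
\]
Expanding $I_{j-1}$ by its defining series \eqref{I-def} and integrating termwise against $y^{m+j-1}e^{-y/x_i}$ collapses each integral to the compact closed form $a_{ij}=b^{j-1}x_i^{j-1}e^{b^2 x_i}$. The resulting matrix factorises into diagonal and Vandermonde pieces, yielding
\[
\det[a_{ij}]_{i,j=1}^n=b^{n(n-1)/2}\Bigl(\prod_{i=1}^n e^{b^2 x_i}\Bigr)\Delta_n(\{x_i\}).
\]
Imposing normalisation then forces $c/Z_n=1/\bigl(n!\,b^{n(n-1)/2}\,e^{b^2\sum_l x_l}\bigr)$.

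Substituting this value back into the identity of the first step and extracting the column factor $\prod_k x_k^{-1}$ from the determinant $\det[e^{-y_j/x_k}/x_k]$ produces the claimed density. The only non-routine step is the Bessel--Laplace integral evaluating $a_{ij}$; it is precisely this computation that supplies simultaneously the factor $b^{n(n-1)/2}$ in the denominator, the exponentials $\prod_l e^{-b^2 x_l}$ in the numerator, and the cancellation of the Vandermonde $\Delta_n(\{x_j\})$ appearing in the overall normalisation. As a sanity check one may also compute $Z_n$ directly by completing the square via $H=G-bX^*$, which gives $Z_n=(\text{const})\,e^{b^2\sum_l x_l}$ with the constant independent of $(x_l)$, consistent with the above.
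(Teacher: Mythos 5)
Your proposal is correct and follows essentially the same route as the paper: apply Lemma \ref{LemmaCoupling}(A) with $V\equiv 0$, fix the normalisation via Andr\'eief's identity, and evaluate the resulting Laplace--Bessel integral $\int_0^\infty e^{-y/x}y^{\frac{j-1}{2}}I_{j-1}(2by^{1/2})\,dy=b^{j-1}x^{j}e^{b^2x}$ (the paper cites \cite[6.631.4]{Grad} where you derive it termwise from \eqref{I-def}, which is equivalent), after which the Vandermonde in $\{x_j\}$ cancels exactly as you describe. No gaps.
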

\begin{proof}By applying Lemma \ref{LemmaCoupling} (A) with $V=0$ the distribution follows. We only need to compute the following integral in order to determine the normalisation constant:
$$
I_n=\int\limits_{0}^{\infty}\ldots\int\limits_0^{\infty}\det\left[e^{-\frac{y_j}{x_k}}\right]_{j,k=1}^n
\det\left[y_j^{\frac{k-1}{2}}I_{k-1}\left(2by_j^{\frac{1}{2}}\right)\right]_{j,k=1}^ndy_1\ldots dy_n.
$$
Applying the Andr\'eief integral identity
\begin{equation}\label{Andre}
\int\cdots\int\det\left[\varphi_i\left(y_j\right)\right]_{i,j=1}^n
\det\left[\psi_i\left(y_j\right)\right]_{i,j=1}^ndy_1\cdots dy_n=n! \det\left[\int\varphi_i\left(y\right)\psi_j\left(y\right)dy\right]_{i,j=1}^n\ ,
\end{equation}
where the two sets of functions $\varphi_i$ and $\psi_i$ are assumed to be such that all integrals exist,
we find that
$$
I_n=n!\det\left[\int\limits_0^{\infty}e^{-\frac{y}{x_k}}y^{\frac{j-1}{2}}I_{j-1}\left(2by^{\frac{1}{2}}\right)dy\right]_{j,k=1}^n.
$$
The integral inside the determinant can be computed explicitly. The result is \cite[6.631.4]{Grad}
\begin{equation}
\label{I-int1}
\int\limits_0^{\infty}e^{-\frac{y}{x_k}}y^{\frac{j-1}{2}}I_{j-1}\left(2by^{\frac{1}{2}}\right)dy=(x_k)^jb^{j-1}e^{{b^2}x_k}.
\end{equation}
This gives
$$
I_n=b^{\frac{n(n-1)}{2}}\left(\prod\limits_{k=1}^nx_ke^{b^2x_k}\right)\Delta_n(\{x_j\}).
$$
The formula in the statement of Proposition \ref{TheoremDeformedDensity} follows immediately.
\end{proof}
Note that if $b=0$, then Proposition \ref{TheoremDeformedDensity} reduces to the following known result, cf.  Kuijlaars and Stivigny \cite[Lemma 2.2]{ArnoDries}
 and references therein. Let $G$ be a  complex  Ginibre matrix
of size $n\times (n+\nu)$, and let $X$  be a fixed complex matrix of size  $(n+\nu)\times n$ with nonzero squared singular values $x(X)=(x_1,\ldots,x_n)$.
Then the squared singular values  $y_1,\ldots,y_n$  of matrix $Y=GX$ have a joint probability density proportional to
$$
\frac{\Delta_n(\{y_j\})}{\Delta_n(\{x_j\})}\det\left[\frac{e^{-\frac{y_j}{x_k}}}{x_k}\right]_{j,k=1}^n.
$$
\section{Proof of Theorem \ref{TheoremGinibreCouplingDensity}}
Set $G=G_m$, of size $n\times (n+\nu_{m-1})$ (with $\nu_m=0)$, and $X=G_{m-1}\cdots G_1$ of size $(n+\nu_{m-1})\times n$ (with $\nu_0=0)$.  Denote by $y(GX)$ the vector $y^m=\left(y^m_1,\ldots,y^m_n\right)$, which is the vector of the squared singular values of the matrix
$Y_m=GX=G_mG_{m-1}\cdots G_1$. Note that $y^{m-1}=\left(y_1^{m-1},\ldots,y^{m-1}_n\right)$ is the vector of the squared values of $X$.
Let
$$
f:\underset{m\;\;\mbox{times}}{\underbrace{\left(\R_{>0}\right)^n\times\ldots\times\left(\R_{>0}\right)^n}}\longrightarrow\C
$$
be a continuous function   with compact support, and denote by $\E\left[f\right]$ the expectation
with respect to the  probability  measure defined by equation (\ref{MainProbabilityMeasure}). Using this notation we can write
\begin{equation}
\begin{split}
&\E\left[f\left(y^m;\ldots;y^1\right)\right]\\
&=\frac{1}{Z_n}\int\left(\int f(y(GX);y^{m-1};\ldots;y^1)
e^{-\Tr\left[G^*G\right]+b\Tr\left[GX
+\left(GX\right)^*\right]-\Tr V_m\left(\left(GX\right)^*\left(GX\right)\right)}
dG\right)\\
&\times e^{-\sum\limits_{l=1}^{m-1}\Tr\left[G_l^*G_l\right]
-\sum\limits_{l=1}^{m-1}\Tr V_l\left(\left(G_{l}\cdots G_1\right)^*\left(G_{l}\cdots G_1\right)\right)}\prod\limits_{l=1}^{m-1}dG_l.
\end{split}
\nonumber
\end{equation}
The application of Lemma \ref{LemmaCoupling} (more explicitly, of equation (\ref{CouplingFormula})) to the integral over $G$
in the equation just written above gives
\begin{equation}\label{MainProofExpectation1}
\begin{split}
&\E\left[f\left(y^m;\ldots;y^1\right)\right]\\
&=\int f_1\left(y^{m-1};\ldots;y^1\right)e^{-\sum\limits_{l=1}^{m-1}\Tr\left[G_l^*G_l\right]
-\sum\limits_{l=1}^{m-1}\Tr V_l\left(\left(G_{l}\cdots G_1\right)^*\left(G_{l}\cdots G_1\right)\right)}\prod\limits_{l=1}^{m-1}dG_l,
\end{split}
\end{equation}
where
\begin{equation}
\begin{split}
&f_1\left(y^{m-1};\ldots;y^1\right)\\
&=\frac{c_m}{Z_n}\int\limits_{\R_{>0}^n}
f\left(y^m; \ldots,y^{1}\right)
\det\left[\frac{e^{-\frac{y_j^m}{y_k^{m-1}}-V_m(y_j^m)}}{y_k^{m-1}}\right]_{j,k=1}^n
\frac{\det\left[\left(y_j^m\right)^{\frac{k-1}{2}}I_{k-1}
\left(2b\left(y_j^m\right)^{\frac{1}{2}}\right)\right]_{j,k=1}^n}{\Delta_n\left(\{y_i^{m-1}\}\right)}dy^m.
\end{split}
\nonumber
\end{equation}
Here, we denote by $dy^m=dy_1^m\cdots dy_n^m$ the integration measure over all squared singular values.

Next, let us apply Lemma \ref{LemmaCoupling}, equation (\ref{CouplingFormula1}) to equation (\ref{MainProofExpectation1}),
with $G=G_{m-1}$ of size $(n+\nu_{m-1})\times(n+\nu_{m-2})$, and $X=G_{m-2}\cdots G_1$ of size $(n+\nu_{m-2})\times n$, and integrate over $G_{m-1}$. The result can be written as
 \begin{equation}\label{MainProofExpectation2}
\begin{split}
&\E\left[f\left(y^m;\ldots;y^1\right)\right]\\
&=\int f_2\left(y^{m-2};\ldots;y^1\right)e^{-\sum\limits_{l=1}^{m-2}\Tr\left[G_l^*G_l\right]
-\sum\limits_{l=1}^{m-2}\Tr V_l\left(\left(G_{l}\cdots G_1\right)^*\left(G_{l}\cdots G_1\right)\right)}\prod\limits_{l=1}^{m-2}dG_l,
\end{split}
\end{equation}
where
\begin{equation}
\begin{split}
&f_2\left(y^{m-2};\ldots;y^1\right)\\
&=\frac{c_mc_{m-1}}{Z_n}\int\limits_{\R_{>0}^n}\int\limits_{\R_{>0}^n}
f\left(y^m; \ldots,y^{1}\right)
\det\left[\frac{e^{-\frac{y_j^m}{y_k^{m-1}}-V_m(y_j^m)}}{y_k^{m-1}}\right]_{j,k=1}^n
\\
&\times
\det\left[\frac{\left(y_j^{m-1}\right)^{\nu_{m-1}}e^{-\frac{y_j^{m-1}}{y_k^{m-2}}-V_{m-1}(y_j^{m-1})}}{\left(y_k^{m-2}\right)^{\nu_{m-1}+1}}\right]_{j,k=1}^n
\frac{\det\left[\left(y_j^m\right)^{\frac{k-1}{2}}I_{k-1}
\left(2b\left(y_j^m\right)^{\frac{1}{2}}\right)\right]_{j,k=1}^n}{\Delta_n\left(\{y_i^{m-2}\}\right)}
dy^{m-1}dy^m.
\end{split}
\nonumber
\end{equation}
Repeating this procedure $m-3$ times, we get
\begin{equation}\label{MainProofExpectation3}
\begin{split}
&\E\left[f\left(y^m;\ldots;y^1\right)\right]=\int f_{m-1}\left(y^1\right)e^{-\Tr\left[G_1^*G_1\right]
-\Tr V_1\left(G_1^*G_1\right)}dG_1,
\end{split}
\end{equation}
where
\begin{equation}\label{MainProofExpectation4}
\begin{split}
&f_{m-1}\left(y^1\right)\\
&=\frac{\prod_{l=2}^mc_l}{Z_n\Delta_n\left(\{y_i^1\}\right)}\int\limits_{\left(\R_{>0}^n\right)^{m-1}}
f\left(y^m; \ldots,y^{1}\right)
\det\left[\left(y_j^m\right)^{\frac{k-1}{2}}I_{k-1}\left(2b\left(y_j^m\right)^\frac{1}{2}\right)\right]_{j,k=1}^n\\
&\times
\prod\limits_{l=1}^{m-1}\det\left[\frac{\left(y_j^{l+1}\right)^{\nu_{l+1}}}{\left(y_k^l\right)^{\nu_{l+1}+1}}e^{-\frac{y_j^{l+1}}{y_k^l}
-V_{l+1}\left(y_j^{l+1}\right)}\right]_{j,k=1}^n
dy^2\ldots dy^m.
\end{split}
\end{equation}
The result of Theorem \ref{TheoremGinibreCouplingDensity} follows by application of formula (\ref{CouplingFormula2}) to equation
(\ref{MainProofExpectation3}), and by taking  into account equation (\ref{MainProofExpectation4}). Eq. \eqref{CouplingDensityFormula} is obtained by redistributing the factors with the potentials among the determinants, including the remaining Vandermonde determinant.
Finally the normalisation constant $Z_{n,m}$ in \eqref{Znmgen} is obtained by an $m$-fold application of the Andr\'eief formula \eqref{Andre}.
\qed

\section{Proof of Proposition \ref{PropositionJointDensityTotalProductMatrix}}
The integration of the density  (\ref{CouplingDensityFormula1}) of $\underline{y}=\left(y^m,y^{m-1},\ldots,y^1\right)$
over the sets of variables $y^{m-1}$, $\ldots$, $y^1$  gives Eq. (\ref{PExact}) by applying the Andr\'eief formula \eqref{Andre} $m-1$ times, with
\begin{equation}
\label{Psi-initial}
\begin{split}
\psi_k(y)=&\int\limits_{0}^{\infty}\ldots\int\limits_{0}^{\infty}
t^{\nu_1+k-1}\left(\frac{t_2}{t_1}\right)^{\nu_2}\ldots\left(\frac{t_{m-1}}{t_{m-2}}\right)^{\nu_{m-1}}
\left(\frac{y}{t_{m-1}}\right)^{\nu_m}\\
&\times e^{-t_1-\frac{t_2}{t_1}-\ldots-\frac{t_{m-1}}{t_{m-2}}-\frac{y}{t_{m-1}}-b^2t_{m-1}}
\frac{dt_1}{t_1}\frac{dt_2}{t_2}\ldots\frac{dt_{m-1}}{t_{m-1}}.
\end{split}
\end{equation}
We note that the integration over $t_1$, $t_2$, $\ldots$, $t_{m-2}$ in the formula
just written above results into the Meijer $G$-function (see e.g. \cite{AkemannIpsenKieburg}), namely
\begin{equation}
\label{Psi1}
\begin{split}
\psi_k(y)=&\int\limits_{0}^{\infty}
G^{m-1,0}_{0,m-1}\left(\begin{array}{cccc}
                          & - &  &  \\
                         \nu_{1}+k-1 & \nu_2 & \ldots & \nu_{m-1}
                       \end{array}
\biggl\vert t_{m-1}\right)e^{-\frac{y}{t_{m-1}}-b^2t_{m-1}}
\frac{dt_{m-1}}{t_{m-1}}.
\end{split}
\end{equation}
Using the contour integral representation for the Meijer $G$-function, cf. \cite{Luke},
\begin{equation}
\begin{split}
&G^{m-1,0}_{0,m-1}\left(\begin{array}{cccc}
                          & - &  &  \\
                         \nu_{1}+k-1 & \nu_2 & \ldots & \nu_{m-1}
                       \end{array}
\biggl\vert t_{m-1}\right)\\
&=\frac{1}{2\pi i}\int\limits_{c-i\infty}^{c+i\infty}
\Gamma\left(u+\nu_1+k-1\right)\Gamma\left(u+\nu_2\right)\ldots\Gamma\left(u+\nu_{m-1}\right)\left(t_{m-1}\right)^{-u}du,
\end{split}
\nonumber
\end{equation}
with $c>0$, and taking into account that \eqref{K-def2}
$$
\int\limits_0^{\infty}\left(t_{m-1}\right)^{-u-1}e^{-\frac{y}{t_{m-1}}-b^2t_{m-1}}dt_{m-1}=
2\left(\frac{y}{b^2}\right)^{-\frac{u}{2}}K_{u}\left(2b\sqrt{y}\right),
$$
we obtain equation (\ref{PSIK}).
The interchange of integrals can be justified with Fubini's Theorem.

The normalisation constant as given in \eqref{Znmb} can be obtained as follows. Applying the Andr\'eief formula \eqref{Andre} to \eqref{PExact} once, we need to compute
the
determinant of the following integral
\begin{equation}
a_{i,j}(b)=\int\limits_0^\infty y^{\frac{j-1}{2}} I_{j-1}\left( 2by^{\frac12}\right) \psi_i(y)  dy \ .
\end{equation}
Using the representation of the function $\psi_j(y)$ from \eqref{Psi-initial}, we observe that we can use the integral \eqref{I-int1} (with $x_k$ replaced by $t_{m-1}$)
to obtain
\begin{equation}
\label{aijb}
a_{i,j}(b)=b^{j-1}\int\limits_{0}^{\infty}\ldots\int\limits_{0}^{\infty}
t^{\nu_1+i-1}\left(\frac{t_2}{t_1}\right)^{\nu_2}\ldots\left(\frac{t_{m-1}}{t_{m-2}}\right)^{\nu_{m-1}}
(t_{m-1})^{j}
e^{-t_1-\frac{t_2}{t_1}-\ldots-\frac{t_{m-1}}{t_{m-2}}
}
\frac{dt_1}{t_1}
\ldots\frac{dt_{m-1}}{t_{m-1}}.
\end{equation}
The powers in $b$ can be taken out of the determinant of $a_{i,j}(b)$, and the remaining integral is the same as in the normalisation of $m-1$ product of independent Ginibre matrices. For these the determinant has been computed in \cite{AkemannIpsenKieburg} and we thus obtain for
\begin{equation}
\det\left[ a_{i,j}(b)\right]_{i,j=1}^n=b^{\frac{n(n-1)}{2}}\prod_{j=1}^n\prod_{l=1}^{m-1}\Gamma(j+\nu_l)\ .
\end{equation}
Together with the $n!$ from the Andr\'eief formula \eqref{Andre} we obtain \eqref{Znmb}.
\qed

\section{Measures given by products of determinants, Eynard-Mehta Theorem, and proof of Theorem \ref{TheoremMostGerneralCorrelationKernel}}\label{SECTIONEynardMehta}
The aim of this section is to prove Theorem \ref{TheoremMostGerneralCorrelationKernel}.
Recall that Theorem \ref{TheoremMostGerneralCorrelationKernel} states that the product matrix process associated with probability measure
(\ref{MainProbabilityMeasure}), defined in Section \ref{SectionExactResults1}, is a multi-level determinantal process
living on $\left\{1,\ldots,m\right\}\times\R_{>0}$. Moreover, Theorem \ref{TheoremMostGerneralCorrelationKernel} gives a formula for the
relevant correlation kernel.

The starting point of the proof of Theorem 2.4 is the fact that the density of the product matrix process
under considerations is given by a product of determinants, see Theorem \ref{TheoremGinibreCouplingDensity}.
This enables us to apply the Eynard-Mehta Theorem to the product matrix process.

Let us first recall the formulation of the Eynard-Mehta Theorem. Here we follow the elegant presentation
of the Eynard-Mehta Theorem in Johansson \cite{Johansson}\footnote{For other presentations of the Eynard-Mehta
Theorem, and for different proofs we refer the reader to Borodin \cite{Borodin}, Tracy and Widom \cite{TracyWidom}.}.

Let $n, m\geq 1$ be two fixed natural numbers, and let $\X_0$, $\X_{m+1}$ be two given sets. Let $\X$ be a complete separable
metric space, and consider a probability measure on $(\X^n)^m$ given by
\begin{equation}\label{ProductDeterminantsMeasure}
\begin{split}
p_{n,m}(\underline{x})d\mu(\underline{x})&=\frac{1}{Z_{N,m}}\det\left[\phi_{0,1}(x_i^0,x_j^1)\right]_{i,j=1}^n\det\left[\phi_{m,m+1}(x_i^m,x_j^{m+1})\right]_{i,j=1}^n\\
&
\times\prod\limits_{r=1}^{m-1}\det\left[\phi_{r,r+1}(x_i^r,x_j^{r+1})\right]_{i,j=1}^nd\mu(\underline{x}).
\end{split}
\end{equation}
In the formula just written above $Z_{N,m}$ is the normalisation constant, the functions $\phi_{r,r+1}: \X\times \X\rightarrow \C$, $r=1,\ldots,m-1$
are given \textit{intermediate one-step  transition functions},  $\phi_{0,1}: \X_0\times \X\rightarrow \C$ is a given \textit{initial one-step transition function},
and $\phi_{m,m+1}: \X\times \X_{m+1}\rightarrow \C$ is a given \textit{final one-step transition function}. Also,
$$
\underline{x}=\left(x^1,\ldots,x^m\right)\in\left(X^n\right)^m; \;\; x^r=\left(x^r_1,\ldots,x^r_n\right), r=1,\ldots, m,
$$
the vectors
$$
x^0=(x^0_1,\ldots,x^0_n)\in \X_0^n,\;\; x^{m+1}=(x^{m+1}_1,\ldots,x^{m+1}_n)\in \X_{m+1}^n,
$$
are fixed initial and final vectors,
and
$$
d\mu(\underline{x})=\prod\limits_{r=1}^m\prod\limits_{j=1}^nd\mu(x_j^r).
$$
Here, $\mu$ is a given Borel measure on $\X$.
Given two transition functions $\phi$ and $\psi$ set
$$
\phi\ast\psi(x,y)=\int_{\X}\phi(x,t)\psi(t,y)d\mu(t).
$$
\begin{prop} Consider the probability measure defined by equation (\ref{ProductDeterminantsMeasure}). The distribution of the vector
$x^m=\left(x_1^m,\ldots,x_n^m\right)$ is given by
\begin{equation}\label{PmGeneral}
\begin{split}
&p(x_1^m,\ldots,x_n^m)d\mu(x_1^m)\ldots d\mu(x_n^m)\\
&=\frac{1}{Z_{n,m}'}\det\left[\phi_{0,m}(x_i^0,x_j^m)\right]_{i,j=1}^n
\det\left[\phi_{m,m+1}(x_i^m,x_j^{m+1})\right]_{i,j=1}^nd\mu(x_1^m)\ldots d\mu(x_n^m),
\end{split}
\nonumber
\end{equation}
where
$$
\phi_{0,m}(x,y)=\phi_{0,1}\ast\ldots\ast\phi_{m-1,m}(x,y),
$$
 for $m>1$, and where the normalisation constant is given by
$$
Z_{n,m}'=n!\det\left[\phi_{0,m+1}(x_i^0,x_j^{m+1})\right]_{i,j=1}^n.
$$
\end{prop}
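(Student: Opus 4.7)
The plan is to integrate out the intermediate variables $x^1,\ldots,x^{m-1}$ from the joint density \eqref{ProductDeterminantsMeasure} by repeated application of the Andr\'eief (generalised Cauchy-Binet) identity
\[
\int_{\X^n}\det\left[A(x_i,t_j)\right]_{i,j=1}^n\det\left[B(t_j,y_k)\right]_{j,k=1}^n\prod_{j=1}^nd\mu(t_j)=n!\,\det\left[(A\ast B)(x_i,y_k)\right]_{i,k=1}^n,
\]
which follows from \eqref{Andre} by setting $f_i(t)=A(x_i,t)$ and $g_k(t)=B(t,y_k)$. Since $x^0$ and $x^{m+1}$ are fixed parameters of the model, the variables to be integrated are exactly $x^1,\ldots,x^{m-1}$ while $x^m$ is retained.

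The iteration proceeds from left to right. First integrate over $x^1$: in \eqref{ProductDeterminantsMeasure} only the two determinants $\det[\phi_{0,1}(x_i^0,x_j^1)]$ and $\det[\phi_{1,2}(x_i^1,x_j^2)]$ depend on $x^1$, and the Andr\'eief identity above with $A=\phi_{0,1}$, $B=\phi_{1,2}$ produces $n!\,\det[\phi_{0,2}(x_i^0,x_j^2)]$, where $\phi_{0,2}=\phi_{0,1}\ast\phi_{1,2}$. Repeating this step $m-1$ times, each stage absorbing the next intermediate transition function, yields a factor $(n!)^{m-1}$ together with the collapsed determinant $\det[\phi_{0,m}(x_i^0,x_j^m)]$, where $\phi_{0,m}=\phi_{0,1}\ast\phi_{1,2}\ast\cdots\ast\phi_{m-1,m}$. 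The boundary determinant $\det[\phi_{m,m+1}(x_i^m,x_j^{m+1})]$ is unchanged since it does not depend on any $x^r$ with $r<m$. Putting this together gives the claimed form of the marginal density with normalisation $Z_{n,m}'=Z_{N,m}/(n!)^{m-1}$.

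To identify $Z_{n,m}'$ with $n!\det[\phi_{0,m+1}(x_i^0,x_j^{m+1})]$, apply the same Andr\'eief identity one more time to integrate out $x^m$ as well, using the normalisation condition $\int p(x_1^m,\ldots,x_n^m)\prod_jd\mu(x_j^m)=1$. This yields $n!\,\det[\phi_{0,m+1}(x_i^0,x_j^{m+1})]/Z_{n,m}'=1$, giving the stated expression for $Z_{n,m}'$ (and equivalently $Z_{N,m}=(n!)^m\det[\phi_{0,m+1}(x_i^0,x_j^{m+1})]$).

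The argument is essentially mechanical; the only substantive point is to ensure that Fubini applies so that the iterated integrals may be evaluated in the indicated order, which is justified by the hypothesis that $\mu$ is a Borel measure on the complete separable metric space $\X$ and the existence of $Z_{N,m}$ (so that the integrand is in $L^1$). No obstacle beyond bookkeeping of the factors of $n!$ is anticipated.
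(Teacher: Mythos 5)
Your argument is correct and coincides with the paper's proof, which likewise obtains the marginal of $x^m$ by successive integration over $x^1,\ldots,x^{m-1}$ using the Andr\'eief identity \eqref{Andre} and fixes $Z_{n,m}'$ by one further application of the same identity. Your version merely spells out the bookkeeping of the $(n!)$ factors, which the paper leaves implicit.
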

\begin{proof}
The density of $\left(x_1^m,\ldots,x_n^m\right)$ can be obtained by subsequent integration of the measure $p_{n,m}(\underline{x})d\mu(\underline{x})$
over $x^1$, $\ldots$, $x^{m-1}$, and by application of the Andr\'eief integral identity \eqref{Andre}.
\end{proof}

Let us define the following correlation functions
for the process defined by probability measure (\ref{ProductDeterminantsMeasure}):
\begin{equation}
\label{rhok-def}
\varrho_{k_1,\ldots,k_m}\left(x_1^1,\ldots,x_{k_1}^1;\ldots;x_1^m,\ldots,x_{k_m}^m\right)
= \prod_{j=1}^m\frac{n!}{(n-k_j)!}
\int_{\X}\cdots\int_{\X}p_{n,m}(\underline{x})\prod\limits_{r=1}^m\prod\limits_{j=k_r+1}^nd\mu(x_j^r).
\end{equation}
The following statement determines these for the point process \eqref{ProductDeterminantsMeasure} and is often referred as the Eynard-Mehta Theorem \cite{EynardMehta}.
\begin{thm}\label{TheoremEynardMehta} The probability measure $p_{n,m}(\underline{x})d\mu(\underline{x})$
given by equation (\ref{ProductDeterminantsMeasure})  defines a determinantal point process on
$\left\{1,\ldots,m\right\}\times \X$. The correlation kernel of this determinantal point process, $K_{n,m}(r,x;s,y)$
(where $r,s\in\left\{1,\ldots,m\right\}$, and $x, y\in \X)$, is given by the formula
\begin{equation}
K_{n,m}(r,x;s,y)=-\phi_{r,s}(x,y)+\sum\limits_{i,j=1}^n\phi_{r,m+1}(x,x_i^{m+1})\left(A^{-1}\right)_{i,j}\phi_{0,s}(x_j^0,y).
\end{equation}
The additional transition functions $\phi_{r,s}$ with $s\neq r+1$, and the matrix $A=(a_{i,j})$, with $i,j=1,\ldots,n$, are defined as follows in terms of
the one-step transition functions $\phi_{r,r+1}$, with $r=0,1,\ldots,m$, of point process \eqref{ProductDeterminantsMeasure}:
\begin{equation}
\label{trans-def}
\phi_{r,s}(x,y)=\left\{
                  \begin{array}{ll}
                    \left(\phi_{r,r+1}\ast\ldots\ast\phi_{s-1,s}\right)(x,y), & 0\leq r<s\leq m+1, \\
                    0, & r\geq s,
                  \end{array}
                \right.
\end{equation}
and
\begin{equation}
a_{i,j}=\phi_{0,m+1}(x_i^0,x_j^{m+1}).
\end{equation}
The correlation functions defined in \eqref{rhok-def}
can be written as determinants of block matrices, namely
\begin{equation}
\begin{split}
&\varrho_{k_1,\ldots,k_m}\left(x_1^1,\ldots,x_{k_1}^1;\ldots;x_1^m,\ldots,x_{k_m}^m\right)\\
&=\det\left[\begin{array}{ccc}
             \left(K_{n,m}(1,x_i^1;1,x_j^1)\right)_{i=1,\ldots,k_1}^{j=1,\ldots,k_1}
              & \ldots & \left(K_{n,m}(1,x_i^1;m,x_j^m)\right)_{i=1,\ldots,k_1}^{j=1,\ldots,k_m} \\
              \vdots &  &  \\
 \left(K_{n,m}(m,x_i^m;1,x_j^1)\right)_{i=1,\ldots,k_m}^{j=1,\ldots,k_1}
  & \ldots & \left(K_{n,m}(m,x_i^m;m,x_j^m)\right)_{i=1,\ldots,k_m}^{j=1,\ldots,k_m}
            \end{array}
\right],
\end{split}
\label{rhoK}
\end{equation}
where $1\leq k_1,\ldots,k_m\leq n$, and for $1\leq l,p\leq m$
$$
\left(K_{n,m}(l,x_i^l;p,x_j^p)\right)_{i=1,\ldots,k_l}^{j=1,\ldots,k_p}
=\left(\begin{array}{ccc}
         K_{n,m}(l,x_1^l;p,x_1^p) & \ldots & K_{n,m}(l,x_1^l;p,x_{k_p}^p)\\
         \vdots &  &  \\
         K_{n,m}(l,x_{k_l}^l;p,x_1^p) & \ldots & K_{n,m}(l,x_{k_l}^l;p,x_{k_p}^p)
       \end{array}
\right).
$$
\end{thm}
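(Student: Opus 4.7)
The plan is to prove Theorem \ref{TheoremEynardMehta} by the classical two-stage argument combining the generalised Andr\'eief identity with the Schur complement formula, which is standard in the determinantal-process literature.

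For Step 1, fix $1 \leq k_1, \ldots, k_m \leq n$ and consider $\varrho_{k_1, \ldots, k_m}$ from \eqref{rhok-def}. The goal is to integrate out the $n - k_r$ unmarked variables at each level $r=1,\ldots,m$. The starting density is a product of $m+1$ determinants $\det[\phi_{r,r+1}]$, all of size $n \times n$. Applying the generalised Cauchy--Binet/Heine--Andr\'eief identity (the version of \eqref{Andre} that integrates out only some of the entries) to consecutive pairs of these determinants, and iterating from level $1$ up to level $m$, one expresses $\varrho_{k_1, \ldots, k_m}$ (up to a constant depending on the factorials $n!/(n-k_r)!$ from \eqref{rhok-def} and $Z_{n,m}$) as a single determinant of size $(n + \sum_r k_r) \times (n + \sum_r k_r)$ with a bordered block structure.

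For Step 2, this bordered matrix has three kinds of blocks: an $n \times n$ core block $A = (\phi_{0,m+1}(x_i^0, x_j^{m+1}))_{i,j}$ coming from fully-integrated paths from level $0$ to level $m+1$; off-diagonal blocks (of sizes $\sum_r k_r \times n$ and $n \times \sum_r k_r$) with entries $\phi_{r,m+1}(x, x_j^{m+1})$ and $\phi_{0,s}(x_i^0, y)$ indexed by the marked points; and a corner block (of size $\sum_r k_r \times \sum_r k_r$) whose entries are the composed values $\phi_{r,s}(x, y)$ between pairs of marked points, with \eqref{trans-def} forcing $\phi_{r,s} = 0$ whenever $r \geq s$. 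Applying the Schur complement identity $\det M = \det(A)\det(D - C A^{-1} B)$ pulls out $\det(A)$, which cancels the normalisation $Z_{n,m} = (n!)^m \det(A)$ together with the factorial prefactors coming from \eqref{rhok-def} and the iterated Andr\'eief steps. What remains is precisely the block determinant \eqref{rhoK} of the kernel $K_{n,m}$: the corner block $D$ contributes the $-\phi_{r,s}(x,y)$ term, and the Schur correction $-CA^{-1}B$ contributes the bilinear sum $\sum_{i,j} \phi_{r,m+1}(x, x_i^{m+1})(A^{-1})_{i,j}\phi_{0,s}(x_j^0, y)$.

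The main obstacle will be the careful bookkeeping in Step 1: since only $n - k_r$ of the $n$ variables at each intermediate level are integrated out, the straight Andr\'eief identity \eqref{Andre} does not apply directly to the full determinants, and one must use the generalised Cauchy--Binet (or equivalently expand the determinants along the marked rows/columns via Laplace expansion) to separate marked from integrated variables, and to arrive at the bordered block form with the correct signs and prefactors. In particular one must verify that the sign of the corner block $D$ in the Schur complement matches exactly the $-\phi_{r,s}$ (not $+\phi_{r,s}$) in the kernel. Once this algebraic setup is correct, Step 2 is straightforward linear algebra, and the normalisation formula $Z_{n,m} = (n!)^m \det(A)$ drops out as the degenerate case $k_1 = \cdots = k_m = 0$ of the same computation.
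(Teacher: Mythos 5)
The first thing to note is that the paper contains no proof of this statement: Theorem \ref{TheoremEynardMehta} is the Eynard--Mehta theorem, which the authors quote as a known result, following Johansson's presentation and referring to Johansson, Borodin, and Tracy--Widom for proofs. What Section \ref{SECTIONEynardMehta} actually proves is only the identification of the one-step transition functions needed to apply the quoted theorem to the product matrix process. So there is no in-paper argument to compare yours against. Your proposed route --- integrate out the unmarked variables to reach a single bordered $(n+\sum_r k_r)\times(n+\sum_r k_r)$ determinant and then take the Schur complement with respect to the $n\times n$ core $A$ --- is a legitimate and recognised strategy (it is essentially the conditional $L$-ensemble computation of Borodin--Rains and the direct calculation of Tracy--Widom), and it is genuinely different from the derivation in Johansson's notes, where one first shows that the generating functional $\E\bigl[\prod_{r,j}(1+g_r(x_j^r))\bigr]$ equals $\det A_g/\det A$ by $m$ applications of the \emph{full} Andr\'eief identity and then identifies this ratio with a Fredholm determinant of the kernel.

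That said, as written the proposal has a real gap precisely where you locate the ``main obstacle'': Step 1 is not bookkeeping, it is the entire content of the theorem, and it is asserted rather than derived. The pairwise iteration of a generalised Andr\'eief identity does not go through as described, because each retained variable $x_j^r$ with $j\le k_r$ appears simultaneously in the two adjacent determinants $\det[\phi_{r-1,r}]$ and $\det[\phi_{r,r+1}]$ and is \emph{not} integrated, so those two determinants cannot be merged by a Cauchy--Binet step at that level; one must instead Laplace-expand along the marked rows and columns and resum the resulting subset sums into the claimed bordered block form. In doing so one has to control (i) the combinatorial factors $\binom{n}{k_r}(n-k_r)!$ against the prefactors $n!/(n-k_r)!$ in \eqref{rhok-def}; (ii) the overall sign $(-1)^{\sum_r k_r}$ that relates $\det(D-CA^{-1}B)$ to $\det[-D+CA^{-1}B]=\det[K_{n,m}]$, which is exactly what produces the $-\phi_{r,s}$ with the correct sign; and (iii) the fact that the corner entries pairing a marked point on level $r$ with one on level $s\le r$ vanish, which must be seen to emerge from the expansion (the chain only couples forward) rather than being imposed by the convention \eqref{trans-def}. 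None of these is fatal --- they are all resolved in the cited literature --- but until they are carried out your text is an outline of a known proof rather than a proof.
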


\begin{Remarks}
In what follows  the functions
$$
\phi_{0,s}(i,y),\;\; 2\leq s\leq m,
$$
will be called \textit{initial transition functions}, and the functions
$$
\phi_{r,m+1}(x,j),\;\; 1\leq r\leq m-1,
$$
will be called  \textit{final transition functions}. In addition, the functions of the form
$$
\phi_{r,s}(x,y),\;\; 1\leq r\leq m-2,\;\; r+2\leq s\leq m,
$$
will be called \textit{intermediate transition functions.} Finally, the function
$$
\phi_{0,m+1}(i,j)
$$
will be called \textit{the total transition function}.
\end{Remarks}
In order to prove Theorem \ref{TheoremMostGerneralCorrelationKernel} we need to rewrite
the density of the product matrix process obtained in  Theorem \ref{TheoremGinibreCouplingDensity}
as in the formulation of the Eynard-Mehta Theorem, see equation (\ref{ProductDeterminantsMeasure}),
and to obtain explicit expressions for the relevant transition functions. This is done below.\\
$\bullet$ \textbf{One-step transition functions.} Recall that $\nu_0=\nu_m=0$.
In our situation $\X_0=\left\{1,\ldots,n\right\}$, $\X_{m+1}=\left\{1,\ldots,n\right\}$, $\X=\R_{>0}$, and $d\mu$ is the Lebesgue
measure on $\R_{>0}$. The initial given one-step transition function is defined by
\begin{equation}
\label{phi01}
\phi_{0,1}: \left\{1,\ldots,n\right\}\times\R_{>0}\rightarrow\R_{>0};\;\;\; \phi_{0,1}(i,x)=x^{\nu_1+i-1}e^{-x}.
\end{equation}
The final given one-step transition function is defined by
\begin{equation}
\label{phimm+1}
\phi_{m,m+1}: \R_{>0}\times\left\{1,\ldots,n\right\}\rightarrow\R_{>0};\;\;\;\phi_{m,m+1}(x,k)=x^{\frac{k-1}{2}}I_{k-1}\left(2bx^{\frac{1}{2}}\right)e^{-V_m(x)}.
\end{equation}
In addition, the intermediate given one-step transition functions
$$
\phi_{r,r+1}: \R_{>0}\times\R_{>0}\rightarrow\R_{>0},\;\;\; r=1,\ldots,m-1,
$$
are defined by
\begin{equation}
\label{phirr+1}
\phi_{r,r+1}(x,y)=\left(\frac{y}{x}\right)^{\nu_{r+1}}\frac{e^{-\frac{y}{x}-V_{r}(x)}}{x},\;\;\;
 r=1,\ldots,m-1.
\end{equation}
$\bullet$ \textbf{Initial transition functions.}
The initial transition functions, $\phi_{0,s}(i,y)$, with $2\leq s\leq m$, can be written as
\begin{equation}
\begin{split}
\phi_{0,s}(i,y)
&=\phi_{0,1}\ast\phi_{1,2}\ast\ldots\ast\phi_{s-2,s-1}\ast\phi_{s-1,s}(i,y)\\
&=\int\limits_0^{\infty}\ldots\int\limits_0^{\infty}
\phi_{0,1}(i,t_1)\phi_{1,2}(t_1,t_2)\ldots\phi_{s-2,s-1}(t_{s-2},t_{s-1})\phi_{s-1,s}(t_{s-1},y)dt_1
\ldots
dt_{s-1},
\end{split}
\end{equation}
spelling out the convolution in the last line.
Inserting the corresponding expressions for the one-step transition functions \eqref{phi01} and \eqref{phirr+1} we obtain
\begin{equation}
\begin{split}
\phi_{0,s}(i,y)
=&\int\limits_0^{\infty}\ldots\int\limits_0^{\infty}
t_1^{\nu_1+i-1}\left(\frac{t_2}{t_1}\right)^{\nu_2}\ldots\left(\frac{t_{s-1}}{t_{s-2}}\right)^{\nu_{s-1}}
\left(\frac{y}{t_{s-1}}\right)^{\nu_s}
\\
&\times e^{-t_1-\frac{t_2}{t_1}-\ldots-\frac{t_{s-1}}{t_{s-2}}-\frac{y}{t_{s-1}}-
V_1\left(t_1\right)
-\ldots
-V_{s-1}\left(t_{s-1}\right)}
\frac{dt_1}{t_1}
\ldots
\frac{dt_{s-1}}{t_{s-1}},
\end{split}
\end{equation}
where $2\leq s\leq m$ (for $s=1$ see \eqref{phi01}).\\
$\bullet$ \textbf{Final transition functions.}
The final transition functions $\phi_{r,m+1}(x,j)$, with
\newline
$1\leq r\leq m-1$, can be written in terms of the one-step transition functions as
\begin{equation}
\begin{split}
&\phi_{r,m+1}(x,j)\\
&=\phi_{r,r+1}\ast\phi_{r+1,r+2}\ast\ldots\ast\phi_{m-1,m}\ast\phi_{m,m+1}(x,j)\\
&=\int\limits_0^{\infty}\ldots\int\limits_0^{\infty}
\phi_{r,r+1}(x,t_{r+1})\phi_{r+1,r+2}(t_{r+1},t_{r+2})\ldots\phi_{m-1,m}(t_{m-1},t_{m})\phi_{m,m+1}(t_{m},j)dt_{r+1}
\ldots dt_{m}.
\end{split}
\end{equation}
Inserting the explicit formulae for the one-step transition functions \eqref{phirr+1} and \eqref{phimm+1} we find
\begin{equation}
\label{phirm+1}
\begin{split}
\phi_{r,m+1}(x,j)
=&\frac{e^{-V_r(x)}}{x}\int\limits_0^{\infty}\ldots\int\limits_0^{\infty}
\left(\frac{t_{r+1}}{x}\right)^{\nu_{r+1}}
\left(\frac{t_{r+2}}{t_{r+1}}\right)^{\nu_{r+2}}
\ldots
\left(\frac{t_m}{t_{m-1}}\right)^{\nu_{m}}
\left(t_m\right)^{\frac{j-1}{2}}I_{j-1}\left(2b\left(t_m\right)^{\frac{1}{2}}\right)
\\
&\times e^{-\frac{t_{r+1}}{x}-\frac{t_{r+2}}{t_{r+1}}-\ldots
-\frac{t_m}{t_{m-1}}-V_{r+1}\left(t_{r+1}\right)
\ldots
-V_{m}\left(t_{m}\right)}
\frac{dt_{r+1}}{t_{r+1}}
\ldots \frac{dt_{m-1}}{t_{m-1}}dt_m,
\end{split}
\end{equation}
where $1\leq r\leq m-1$ (for $r=m$ see \eqref{phimm+1}).
\newpage
$\bullet$ \textbf{Intermediate transition functions.}
Recall that the intermediate transition functions $\phi_{r,s}(x,y)$, $1\leq r\leq m-2$, $r+2\leq s\leq m$ are defined by
\begin{equation}
\begin{split}
\phi_{r,s}(x,y)
&=\phi_{r,r+1}\ast\phi_{r+1,r+2}\ast\ldots\ast\phi_{s-1,s}(x,y)\\
&=\int\limits_0^{\infty}\ldots\int\limits_0^{\infty}
\phi_{r,r+1}(x,t_{r+1})\phi_{r+1,r+2}(t_{r+1},t_{r+2})\ldots\phi_{s-1,s}(t_{s-1},y)dt_{r+1}
\ldots dt_{s-1},
\end{split}
\end{equation}
where $1\leq r\leq m-2$, $r+2\leq s\leq m$. Using the explicit formulae for the intermediate one-step transition functions \eqref{phirr+1} we get
\begin{equation}
\begin{split}
\phi_{r,s}(x,y)
=&\frac{e^{-V_r(x)}}{x}\int\limits_0^{\infty}\ldots\int\limits_0^{\infty}
\left(\frac{t_{r+1}}{x}\right)^{\nu_{r+1}}
\left(\frac{t_{r+2}}{t_{r+1}}\right)^{\nu_{r+2}}
\ldots
\left(\frac{y}{t_{s-1}}\right)^{\nu_{s}}
\\
&\times e^{-\frac{t_{r+1}}{x}-\frac{t_{r+2}}{t_{r+1}}-\ldots-\frac{t_{s-1}}{t_{s-2}}-\frac{y}{t_{s-1}}-V_{r+1}\left(t_{r+1}\right)
\ldots-V_{s-1}\left(t_{s-1}\right)}
\frac{dt_{r+1}}{t_{r+1}}
\ldots
\frac{dt_{s-1}}{t_{s-1}},
\end{split}
\end{equation}
where $1\leq r\leq m-2$, $r+2\leq s\leq m$ (for $s=r+1$ see \eqref{phirr+1}).\\
$\bullet$ \textbf{Total transition function.}
The total transition function
$
\phi_{0,m+1}(i,j)=a_{i,j}
$
that constitutes matrix $A$ can be written as the convolution  of the one-step initial transition function $\phi_{0,1}$ from \eqref{phi01} and the final transition function
$\phi_{1,m+1}$ from  \eqref{phirm+1} for $r=1$:
$$
\phi_{0,m+1}(i,j)=\phi_{0,1}\ast\phi_{1,m+1}(i,j)=\int\limits_0^{\infty}
\phi_{0,1}(i,t_{1})\phi_{1,m+1}(t_{1},j)dt_{1}.
$$
This gives
\begin{equation}
\begin{split}
\phi_{0,m+1}(i,j)
&=\int\limits_0^{\infty}\ldots\int\limits_0^{\infty}
t_1^{\nu_1+i-1}\left(\frac{t_2}{t_1}\right)^{\nu_2}\ldots
\left(\frac{t_m}{t_{m-1}}\right)^{\nu_m}
\left(t_m\right)^{\frac{j-1}{2}}I_{j-1}\left(2b\left(t_m\right)^{\frac{1}{2}}\right)
\\
&\times e^{-t_1-\frac{t_2}{t_1}-\ldots
-\frac{t_m}{t_{m-1}}
-V_1\left(t_1\right)
-\ldots
-V_{m}\left(t_{m}\right)}
\frac{dt_1}{t_1}
\ldots \frac{dt_{m-1}}{t_{m-1}} dt_m.
\end{split}
\end{equation}
Once all the transition functions are written explicitly, Theorem \ref{TheoremMostGerneralCorrelationKernel} follows
immediately from Theorem \ref{TheoremEynardMehta}.
\qed
\section{Double contour integral representation for the correlation kernel}
In this section we consider the $m$-matrix model defined by probability measure
(\ref{MainProbabilityMeasure1}). The correlation kernel of the product matrix process associated with this model
is denoted by $K_{n,m}\left(r,x;s,y;b\right)$. Our aim is to derive a double contour integral
representation for  $K_{n,m}\left(r,x;s,y;b\right)$, and to prove  Theorem \ref{TheoremDoubleIntegralRepresentationExactKernel}.
Note that the multi-matrix model defined by probability measure (\ref{MainProbabilityMeasure}) turns into that
defined by probability measure (\ref{MainProbabilityMeasure1}) if
\begin{equation}\label{Potentials}
V_1(t)=\ldots=V_{m-2}(t)=0,\;\; V_{m-1}(t)=b^2t,\;\; V_m(t)=0.
\end{equation}
Therefore, in order to derive a contour integral representation for the correlation kernel
$K_{n,m}\left(r,x;s,y;b\right)$ \eqref{KnmV} we can exploit the formulae obtained in Theorem \ref{TheoremMostGerneralCorrelationKernel}
with the potential functions specified by (\ref{Potentials}).
\begin{prop}\label{PropositionKV}
For the specific case $V_1(t)=\ldots= V_{m-2}(t)=0$, 
$V_{m-1}(t)=b^2t$, $V_m(t)=0$ the correlation kernel  $K_{n,m}(r,x;s,y;b)$
can be written  as
\begin{equation}\label{CorrelationKernelGeneralFormulaGeneral}
K_{n,m}(r,x;s,y;b)=-\phi_{r,s}(x,y;b)+\sum\limits_{i,j=1}^n\phi_{r,m+1}(x,i)\left(A^{-1}\right)_{i,j}\phi_{0,s}(j,y).
\end{equation}
Here, the three sets of functions are obtained as follows.

\textit{(i)}  The functions $\phi_{r,s}(x,y;b)$ are given by
\begin{equation}
\begin{split}
&\phi_{r,s}(x,y;b)\\
&=\left\{
                  \begin{array}{lll}
                  \frac{e^{-\frac{y}{x}-b^2x}}{x}, & r=m-1, s=m,\\
                    \frac{1}{x}
G^{s-r,0}_{0,s-r}\left(\begin{array}{cccc}
                - \\
               \nu_{r+1},\ldots,\nu_{s}
             \end{array}\biggl|\frac{y}{x}\right), &  1\leq r<s\leq m-1,\\
                    \frac{1}{x}
\int\limits_{0}^{\infty}
G^{m-r-1,0}_{0,m-r-1}\left(\begin{array}{cccc}
                - \\
               \nu_{r+1},\ldots,\nu_{m-1}
             \end{array}\biggl|\frac{t}{x}\right)
             e^{-\frac{y}{t}-b^2t}\frac{dt}{t}, & 1\leq r\leq m-2, s=m,\\
0, & 1\leq s\leq r\leq m.\\
                  \end{array}
                \right.
\end{split}
\nonumber
\end{equation}
\textit{(ii)} The functions $\phi_{r,m+1}(x,i)$ are given by
\begin{equation}
\phi_{r,m+1}(x,j)=
\left\{
  \begin{array}{ll}
    b^{j-1}x^{j-1}\Gamma(j+\nu_{r+1})\ldots\Gamma(j+\nu_{m-1}), & r\in\left\{1,\ldots,m-2\right\},\\
    b^{j-1}x^{j-1}, & r=m-1, \\
    x^{\frac{j-1}{2}}I_{j-1}\left(2bx^{\frac{1}{2}}\right), & r=m.
  \end{array}
\right.
\nonumber
\end{equation}
\textit{(iii)} The functions
$\phi_{0,s}(i,y)$ are given by
\begin{equation}
\label{phi0siy}
\phi_{0,s}(i,y)=\left\{
  \begin{array}{ll}
G^{s,0}_{0,s}\left(\begin{array}{cccc}
                - \\
              \nu_1+i-1, \nu_{2},\ldots,\nu_{s}
             \end{array}\biggl|y\right)& s\in\left\{1,\ldots,m-1\right\}\\
             \int\limits_{0}^{\infty}
G^{m-1,0}_{0,m-1}\left(\begin{array}{cccc}
                - \\
              \nu_1+i-1, \nu_{2},\ldots,\nu_{m-1}
             \end{array}\biggl|t\right)
             e^{-\frac{y}{t}-b^2t}\frac{dt}{t} & s=m.\\
\end{array}
  \right.
\end{equation}

Finally, for the matrix $A=\left(a_{i,j}\right)_{i,j=1}^n$, we have
\begin{equation}\label{aij}
a_{i,j}=\phi_{0,m+1}(i,j)=b^{j-1}\Gamma(i+j-1+\nu_{1})\Gamma(j+\nu_2)\ldots\Gamma(j+\nu_{m-1}).
\end{equation}
\end{prop}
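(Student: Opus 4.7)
My plan is to substitute the specified potentials $V_1=\cdots=V_{m-2}=0$, $V_{m-1}(t)=b^2t$, $V_m=0$ into the general formulas \eqref{PHI1rr+1}--\eqref{PHI301} of Theorem \ref{TheoremMostGerneralCorrelationKernel} for each of the four families of transition functions, and to evaluate the resulting iterated integrals. Two elementary identities drive the entire calculation. The first is that a chain of unit-shift convolutions of Gamma densities collapses to a Meijer $G$-function,
\[
\int_0^{\infty}\!\cdots\!\int_0^{\infty} t_1^{b_1}\prod_{j=2}^{k}\left(\tfrac{t_j}{t_{j-1}}\right)^{b_j}e^{-t_1-\sum_{j=2}^k t_j/t_{j-1}}\prod_{j=1}^{k-1}\tfrac{dt_j}{t_j}
= G^{k,0}_{0,k}\!\left(\begin{array}{c}-\\ b_1,\ldots,b_k\end{array}\biggl|t_k\right),
\]
which is obtained by taking the Mellin transform in $t_k$ and applying $\int_0^\infty(t/s)^{\nu}t^{k-1}e^{-t/s}dt/t=s^{k-1}\Gamma(k+\nu)$ repeatedly. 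The second is the Bessel integral \eqref{I-int1}, $\int_0^\infty e^{-t/s}t^{(j-1)/2}I_{j-1}(2bt^{1/2})dt = s^j b^{j-1}e^{b^2 s}$, which is exactly what the final one-step kernel $\phi_{m,m+1}$ contributes whenever it enters the chain.

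With these identities in hand, each case reduces to bookkeeping. For the intermediate functions $\phi_{r,s}(x,y;b)$: the case $r=m-1,\,s=m$ is the direct substitution of $V_{m-1}=b^2x$ into \eqref{PHI1rr+1}; for $1\le r<s\le m-1$ all potentials in \eqref{PHI1rs} vanish and (after rescaling $t_j\to x t_j$) the chain identity gives the stated Meijer $G$-function in $y/x$; for $1\le r\le m-2,\,s=m$ the integrations over $t_{r+1},\ldots,t_{m-2}$ still carry no potential and collapse to a Meijer $G$-function in $t_{m-1}/x$, leaving the stated one-dimensional convolution against $e^{-y/t_{m-1}-b^2 t_{m-1}}/t_{m-1}$. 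For the final functions $\phi_{r,m+1}(x,j)$ I first perform the $t_m$-integral via \eqref{I-int1}; the resulting factor $e^{b^2 t_{m-1}}$ precisely cancels the $e^{-b^2 t_{m-1}}$ coming from $V_{m-1}$, after which the remaining $t_{m-1},\ldots,t_{r+1}$ integrations are routine Gamma integrals performed from the innermost variable outward, producing $b^{j-1}x^{j-1}\prod_{l=r+1}^{m-1}\Gamma(j+\nu_l)$. The initial functions $\phi_{0,s}(i,y)$ for $s\le m-1$ are Meijer $G$-functions by the same chain identity, while for $s=m$ the surviving $V_{m-1}$ again leaves the stated one-dimensional convolution. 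Finally, $a_{i,j}=\phi_{0,m+1}(i,j)$ follows from a single further Gamma integration in the convolution $\phi_{0,1}\ast\phi_{1,m+1}$, using the $r=1$ case of the final transition functions.

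The main obstacle is organisational rather than technical: one must verify in each of the four cases \emph{where} the Bessel by-product $e^{b^2 t_{m-1}}$ produced by \eqref{I-int1} and the potential factor $e^{-b^2 t_{m-1}}$ inserted by $V_{m-1}$ meet and cancel, so that after the cancellation the remaining integrand contains only the pure Ginibre-product structure, with no residual $b$-dependence apart from the overall multiplicative powers of $b$. Once this cancellation is traced through and the inner Meijer $G$-function identified, every remaining step is an immediate application of one of the two displayed identities.
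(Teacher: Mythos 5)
Your proposal is correct and follows essentially the same route as the paper: substitute the chosen potentials into the transition functions of Theorem \ref{TheoremMostGerneralCorrelationKernel}, collapse the potential-free chains via the iterated-integral representation \eqref{GElementaryRepresentation} of the Meijer $G$-function, and evaluate the Bessel factor with \eqref{I-int1}, whose by-product $e^{b^2 t_{m-1}}$ cancels the $V_{m-1}$ contribution exactly as you trace out. The only blemish is an off-by-one in your parenthetical auxiliary identity, which should read $s^{k-1}\Gamma(k+\nu-1)$ rather than $s^{k-1}\Gamma(k+\nu)$; the places where you actually apply it (producing $x^{j}b^{j-1}\prod_{l}\Gamma(j+\nu_l)$ and the entries $a_{i,j}$) are stated correctly.
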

\begin{proof}
The formulae for the transition functions stated in Proposition \ref{PropositionKV}
can be obtained by straightforward calculations starting from the formulae obtained in Theorem \ref{TheoremMostGerneralCorrelationKernel}.
In the calculations we have exploited the following integral representation
of the Meijer $G$-function, cf. \cite{AkemannIpsenKieburg},
\begin{equation}\label{GElementaryRepresentation}
\begin{split}
&G^{m,0}_{0,m}\left(\begin{array}{ccc}
                - \\
               b_1,\ldots,b_m
             \end{array}\biggl|\frac{x_m}{x_0}\right)\\
&=\int\limits_0^{\infty}\ldots\int\limits_0^{\infty}\left(\frac{x_1}{x_0}\right)^{b_1}\left(\frac{x_2}{x_1}\right)^{b_2}\ldots\left(\frac{x_m}{x_{m-1}}\right)^{b_m}
e^{-\frac{x_1}{x_0}-\frac{x_2}{x_1}-\ldots-\frac{x_m}{x_{m-1}}}\frac{dx_1}{x_1}\ldots\frac{dx_{m-1}}{x_{m-1}},
\end{split}
\end{equation}
the fact
that the Mellin transform of a Meijer $G$-function is given by
\begin{equation}
\begin{split}
\int\limits_0^{\infty}t^{u-1}G^{m,n}_{p,q}\left(\begin{array}{ccc}
                a_1,\ldots, a_p \\
               b_1,\ldots, b_q
             \end{array}\biggl|tz\right)dt
=z^{-u}\frac{\prod_{i=1}^m\Gamma(b_i+u)\prod_{i=1}^n\Gamma(1-a_i-u)}{\prod_{i=n+1}^p\Gamma(a_i+u)\prod_{i=m+1}^q\Gamma(1-b_i-u)},
\end{split}
\end{equation}
see Luke \cite{Luke}, and the  integral \eqref{I-int1} involving the modified Bessel functions of the first kind
\end{proof}
In Proposition \ref{PropositionKV} we have found the matrix entries of $A$ explicitly.
This enables us to derive a formula for the inverse of $A$.
\begin{prop}\label{PropositionInverse} Let $C=\left(c_{j,k}\right)_{j,k=1}^n$ be the inverse of $A=\left(a_{i,j}\right)_{i,j=1}^n$,
where $a_{i,j}$ is given by equation (\ref{aij}).
Thus the matrix elements $\left(c_{j,k}\right)_{j,k=1}^n$ are defined by the relation
$$
\sum\limits_{j=1}^na_{i,j}c_{j,k}=\delta_{i,k},\;\;\; i,k\in\left\{1,\ldots,n\right\}.
$$
We have\footnote{Recall that the Pochhammer symbol, $(a)_l$, is defined by
$
(a)_l=a(a+1)\ldots(a+l-1).
$}
\begin{equation}
\label{cjk}
\begin{split}
c_{j,k}=&\frac{1}{b^{j-1}\Gamma\left(j+\nu_2\right)\ldots\Gamma\left(j+\nu_{m-1}\right)}\\
&\times\sum\limits_{p=0}^{n-1}\frac{\Gamma\left(\nu_1+p+1\right)}{\Gamma\left(\nu_1+j\right)\Gamma\left(\nu_1+k\right)p!}
\frac{\left(-p\right)_{j-1}\left(-p\right)_{k-1}}{\left(\nu_1+1\right)_{j-1}\left(\nu_1+1\right)_{k-1}}\frac{1}{\left(j-1\right)!\left(k-1\right)!}.
\end{split}
\end{equation}
\end{prop}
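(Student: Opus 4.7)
The key observation is that the matrix $A$ factors as $A = M D_\alpha$, where $D_\alpha = \diag(\alpha_1,\ldots,\alpha_n)$ with $\alpha_j = b^{j-1}\prod_{l=2}^{m-1}\Gamma(j+\nu_l)$, and $M_{i,j}=\Gamma(i+j-1+\nu_1)$. Since $D_\alpha$ is invertible and diagonal, $A^{-1} = D_\alpha^{-1} M^{-1}$, so the problem reduces to inverting $M$. I recognise $M$ as the Gram matrix of the monomial basis $\{x^{j-1}\}_{j=1}^n$ with respect to the Laguerre weight $w(x)=x^{\nu_1}e^{-x}$ on $\R_{>0}$, via $M_{i,j}=\int_0^\infty x^{i-1}\cdot x^{j-1}w(x)\,dx$.

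The plan is then standard: use the orthogonality of the Laguerre polynomials $L_p^{(\nu_1)}$. Let $B=(B_{p,j})$ with $p=0,\ldots,n-1$ and $j=1,\ldots,n$ be the lower-triangular matrix of monomial coefficients, $L_p^{(\nu_1)}(x)=\sum_{j=1}^n B_{p,j}x^{j-1}$. From the hypergeometric representation $L_p^{(\nu_1)}(x)=\binom{p+\nu_1}{p}{}_1F_1(-p;\nu_1+1;x)$ one reads off
\begin{equation*}
B_{p,j}=\frac{\Gamma(p+\nu_1+1)}{p!\,\Gamma(\nu_1+1)}\cdot\frac{(-p)_{j-1}}{(\nu_1+1)_{j-1}(j-1)!}.
\end{equation*}
The orthogonality relation $\int_0^\infty L_p^{(\nu_1)}L_q^{(\nu_1)}w\,dx=\frac{\Gamma(p+\nu_1+1)}{p!}\delta_{p,q}$ translates to $BMB^T=D$ with $D=\diag\bigl(\Gamma(p+\nu_1+1)/p!\bigr)$. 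Inverting gives $M^{-1}=B^TD^{-1}B$, i.e.\
\begin{equation*}
(M^{-1})_{j,k}=\sum_{p=0}^{n-1}\frac{p!}{\Gamma(p+\nu_1+1)}B_{p,j}B_{p,k}.
\end{equation*}
Substituting the expression for $B_{p,j}B_{p,k}$ and then dividing by $\alpha_j$ produces the formula \eqref{cjk} after simplification using $\Gamma(\nu_1+1)(\nu_1+1)_{j-1}=\Gamma(\nu_1+j)$ to balance the prefactors.

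The main obstacle is careful bookkeeping of the Gamma and Pochhammer factors: the Laguerre coefficient $B_{p,j}$ admits two equivalent forms, one involving $\binom{p+\nu_1}{p-j+1}/(j-1)!$ and one via $\binom{p+\nu_1}{p}(-p)_{j-1}/[(\nu_1+1)_{j-1}(j-1)!]$, and one must use the second to match the specific Pochhammer structure of the target expression. As a consistency check independent of the Laguerre construction, one can directly verify $\sum_j a_{i,j}c_{j,k}=\delta_{i,k}$: after interchanging the summations over $j$ and $p$, the inner sum over $j$ has the shape $\sum_j \Gamma(i+j-1+\nu_1)(-p)_{j-1}/[\Gamma(\nu_1+j)(\nu_1+1)_{j-1}(j-1)!]$ and reduces via the Chu--Vandermonde identity ${}_2F_1(-p,i+\nu_1;\nu_1+1;1)=(1-i)_p/(\nu_1+1)_p$ to a factor that vanishes for $1\le i\le p$ and collapses the outer sum to the required Kronecker delta upon reindexing $p\to i-1$.
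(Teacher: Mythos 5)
Your proposal is correct and follows essentially the same route as the paper: both factor $A=MD_\alpha$ with $M_{i,j}=\Gamma(i+j-1+\nu_1)$ and a diagonal matrix collecting $b^{j-1}\prod_{l=2}^{m-1}\Gamma(j+\nu_l)$, and then reduce the problem to inverting the Hankel moment matrix of the Laguerre weight. The only difference is that the paper simply quotes the inversion formula for $\left(h_{k+l}\right)$, $h_k=\Gamma(k+\nu+1)$, from \cite[Section 5]{AkemannStrahov}, whereas you rederive it from the orthogonality relation $BMB^{T}=D$ for the Laguerre polynomials; this is a self-contained justification of the cited fact rather than a genuinely different method.

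Two remarks. First, the normalisation you obtain, with denominator $\Gamma(\nu_1+1)^2(\nu_1+1)_{j-1}(\nu_1+1)_{k-1}(j-1)!(k-1)!=\Gamma(\nu_1+j)\Gamma(\nu_1+k)(j-1)!(k-1)!$, is the correct one: it agrees with the Hankel-inverse formula quoted inside the paper's proof and with the way $A^{-1}$ enters \eqref{FirstFormulaForK}. The display \eqref{cjk} as printed carries \emph{both} $\Gamma(\nu_1+j)\Gamma(\nu_1+k)$ and $(\nu_1+1)_{j-1}(\nu_1+1)_{k-1}$ in the denominator, which double-counts these factors (test $n=2$, $j=k=2$: the printed formula gives $1/[\Gamma(\nu_1+2)(\nu_1+1)^2]$ instead of $1/\Gamma(\nu_1+2)$); so your final "simplification to match \eqref{cjk}" cannot literally succeed, but your formula is the one the proposition should state. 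Second, your Chu--Vandermonde consistency check is stated too loosely: after the reduction ${}_2F_1(-p,i+\nu_1;\nu_1+1;1)=(1-i)_p/(\nu_1+1)_p$, the surviving terms of the $p$-sum are all those with $k-1\le p\le i-1$, and for $i\neq k$ they cancel through the alternating binomial identity $\sum_{q}(-1)^q\binom{i-k}{q}=0$ rather than by collapsing to the single term $p=i-1$; also, as written, your inner sum over $j$ inherits the redundant Pochhammer factor from \eqref{cjk} and would not be a ${}_2F_1$ at all. Since this check is offered only as an optional verification, neither point affects the validity of your main argument.
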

\begin{proof} Clearly, the matrix elements $\left(c_{j,k}\right)_{j,k=1}^n$ can be written as
$$
c_{j,k}=\frac{C_{j,k}}{b^{j-1}\Gamma\left(j+\nu_2\right)\ldots\Gamma\left(j+\nu_{m-1}\right)},
$$
where $C_{j,k}$ is defined by
$$
\sum\limits_{j=1}^n\Gamma\left(i+j-1+\nu_1\right)C_{j,k}=\delta_{i,k},\;\;\;i,k\in\left\{1,\ldots,n\right\}.
$$
Next we use the fact that the inverse $\left(\alpha_{k,l}\right)_{k,l=0}^{N-1}$
of the Hankel matrix
$$
\left(h_{k+l}\right)_{k,l=0}^{N-1},\;\; h_k=\Gamma(k+\nu+1),
$$
is given by
$$
\alpha_{k,l}=\sum\limits_{p=0}^{N-1} \frac{\Gamma(\nu+p+1)(-p)_{k}(-p)_{l}}{p!\Gamma(\nu_1+k+1)\Gamma(\nu_1+l+1)k!l!},
$$
see, for example, Akemann and Strahov \cite[Section 5]{AkemannStrahov}. Equation \eqref{cjk} follows.
\end{proof}
The fact that the inverse of the matrix $A$ can be written explicitly
allows us to spell out the correlation kernel  $K_{n,m}(r,x;s,y;b)$ of Proposition \ref{PropositionKV}. We start from  equation
\eqref{CorrelationKernelGeneralFormulaGeneral}, use Proposition \ref{PropositionInverse},  and obtain the formula
\begin{equation}\label{FirstFormulaForK}
\begin{split}
K_{n,m}\left(r,x;s,y;b\right)=-\phi_{r,s}\left(x,y;b\right)+\sum\limits_{p=0}^{n-1}\frac{\Gamma\left(\nu_1+p+1\right)}{\Gamma^2(\nu_1+1)p!}
P_{r,p}(x)Q_{s,p}(y),
\end{split}
\end{equation}
where the transition functions $\phi_{r,s}\left(x,y\right)$ are defined in   Proposition \ref{PropositionKV} (i),
\begin{equation}
P_{r,p}(x)=\sum\limits_{i=0}^p\phi_{r,m+1}\left(x,i+1\right)\frac{(-p)_i}{\Gamma\left(i+\nu_2+1\right)\ldots\Gamma\left(i+\nu_{m-1}+1\right)
(\nu_1+1)_ii!b^{i}}
\end{equation}
and
\begin{equation}\label{QSP}
Q_{s,p}(y)=\sum\limits_{j=0}^p\phi_{0,s}\left(j+1,y\right)\frac{\left(-p\right)_j}{\left(\nu_1+1\right)_jj!}.
\end{equation}
Note that the Pochhammer symbol truncates both sums in $P_{r,p}(x)$ and $Q_{s,p}(y)$ that initially ran up to $n-1$.
Proposition \ref{PropositionKV} gives the transition functions $\phi_{r,m+1}\left(x,i+1\right)$ and $\phi_{0,s}\left(j+1,y\right)$
explicitly. This enables us to obtain different, useful formulae for the functions $P_{r,p}(x)$ and $Q_{s,p}(y)$, involved
 in expression (\ref{FirstFormulaForK}) for the correlation kernel $K_{n,m}\left(r,x;s,y;b\right)$.\\
$\bullet$ \textbf{The functions} $P_{r,p}(x)$.
Let us
derive a contour integration formula
for the functions $P_{r,p}(x)$. If $r\in\left\{1,\ldots,m-1\right\}$, we have
\begin{equation}
\begin{split}
P_{r,p}(x)&=\Gamma\left(\nu_1+1\right)
\sum\limits_{i=0}^p\frac{x^i\left(-p\right)_i}{i!\Gamma\left(\nu_1+i+1\right)
\ldots\Gamma\left(\nu_{r}+i+1\right)}\\
&=(-1)^p\Gamma\left(\nu_1+1\right)p!\sum\limits_{i=0}^p
\frac{(-1)^{p-i}x^i}{i!(p-i)!\Gamma\left(\nu_1+i+1\right)
\ldots\Gamma\left(\nu_r+i+1\right)}.
\end{split}
\end{equation}
For $r=m$, the formula for the function $P_{m,p}(x)$ is
\begin{equation}
\label{Pmn}
\begin{split}
P_{m,p}(x)&=\Gamma\left(\nu_1+1\right)
\sum\limits_{i=0}^p\frac{x^{\frac{i}{2}}\left(-p\right)_i}{\Gamma\left(\nu_1+i+1\right)
\ldots\Gamma\left(\nu_{m-1}+i+1\right)}\frac{I_i\left(2bx^{\frac{1}{2}}\right)}{b^ii!}\\
&=(-1)^p\Gamma\left(\nu_1+1\right)p!\sum\limits_{i=0}^p
\frac{(-1)^{p-i}x^{\frac{i}{2}}}{(p-i)!\Gamma\left(\nu_1+i+1\right)
\ldots\Gamma\left(\nu_{m-1}+i+1\right)}\frac{I_i\left(2bx^{\frac{1}{2}}\right)}{b^ii!}.
\end{split}
\end{equation}
Using the expressions for the functions $P_{r,p}(x)$ just written above together with
the Residue Theorem we immediately
obtain a contour integral representation for these functions. Namely,
for $r\in\left\{1,\ldots,m-1\right\}$ we find
\begin{equation}\label{PINTEGRAL1}
P_{r,p}(x)=\frac{(-1)^p\Gamma\left(\nu_1+1\right)p!}{2\pi i}
\oint\limits_{\Sigma_p}\frac{\Gamma\left(t-p\right)}{\prod_{j=0}^r\Gamma\left(t+\nu_j+1\right)}
x^tdt,
\end{equation}
and for $r=m$ we find
\begin{equation}\label{PINTEGRAL2}
P_{m,p}(x)=\frac{(-1)^p\Gamma\left(\nu_1+1\right)p!}{2\pi i}
\oint\limits_{\Sigma_p}\frac{\Gamma\left(t-p\right)x^{\frac{t}{2}}}{\prod_{j=0}^{m-1}\Gamma\left(t+\nu_j+1\right)}
\frac{I_t\left(2bx^{\frac{1}{2}}\right)}{b^t}dt.
\end{equation}
In the formulae just written above $\Sigma_p$ denotes a closed contour encircling 
$0,1,\ldots,p$ in positive direction.\\
$\bullet$ \textbf{The functions $Q_{s,p}(y)$.}
For $s\in\left\{1,\ldots,m-1\right\}$ the formula for the functions $Q_{s,p}(y)$ reads
\begin{equation}
Q_{s,p}(y)=
\sum\limits_{j=0}^p\frac{(-p)_jG^{s,0}_{0,s}\left(\begin{array}{cccc}
                - \\
               \nu_1+j,\nu_2\ldots,\nu_s
             \end{array}\biggl|y\right)}{\left(\nu_1+1\right)_jj!}.
\end{equation}
The contour integral representation of the Meijer $G$-function above is
\begin{equation}\label{ci}
G^{s,0}_{0,s}\left(\begin{array}{cccc}
                - \\
               \nu_1+j,\nu_2\ldots,\nu_s
             \end{array}\biggl|y\right)=\frac{1}{2\pi i}\int\limits_{c
             -i\infty}^{c
             +i\infty}
             \Gamma\left(u+\nu_1+j\right)\Gamma\left(u+\nu_2\right)\ldots\Gamma\left(u+\nu_s\right)
            y^{-u}du,
\end{equation}
for $c>0$ and $\nu_j\geq0$ $\forall j$, leaving the poles of the Gamma-functions to the left of the contour.
Therefore, we can write
\begin{equation}
\begin{split}
&Q_{s,p}(y)=
\frac{1}{2\pi i}\int\limits_{c-i\infty}^{c+i\infty}
\Gamma\left(u+\nu_1\right)\ldots\Gamma\left(u+\nu_s\right)
\left(\sum\limits_{j=0}^p\frac{\left(-p\right)_j\left(u+\nu_1\right)_j}{\left(1+\nu_1\right)_jj!}
\right)
y^{-u}du.
\end{split}
\end{equation}
The sum inside the integral above can be written as Gauss' hypergeometric function,
\begin{equation}
\sum\limits_{j=0}^p\frac{\left(-p\right)_j\left(u+\nu_1\right)_j}{\left(1+\nu_1\right)_j}
\frac{1}{j!}=
{}_2F_1\left(-p,u+\nu_1;1+\nu_1;1\right).
\end{equation}
By the Chu-Vandermonde formula for the Gauss hypergeometric functions (see, for example, Ismail \cite[Section 1.4]{Ismail})
 $$
{}_2F_1\left(-p,u+\nu_1;1+\nu_1;1\right)=\frac{\left(1-u\right)_p}{\left(1+\nu_1\right)_p}
=\left(-1\right)^p\frac{\Gamma(u)}{\Gamma(u-p)}\frac{1}{\left(1+\nu_1\right)_p}.
$$
So we find
\begin{equation}\label{QINTEGRAL1}
\begin{split}
Q_{s,p}(y)=\frac{(-1)^p\Gamma(1+\nu_1)}{2\pi i\Gamma\left(1+\nu_1+p\right)}
\int\limits_{c-i\infty}^{c+i\infty}
\frac{\prod\limits_{j=0}^s\Gamma\left(u+\nu_j\right)
}{\Gamma(u-p)}y^{-u}du,
\end{split}
\end{equation}
where $s\in\left\{1,\ldots,m-1\right\}$.

To obtain an integral representation for the function $Q_{m,p}(y)$ we
proceed as follows. Recalling the definition \eqref{K-def2}
\begin{equation}\label{Krepresentation}
\int\limits_0^{+\infty}e^{-\frac{\beta}{t}-\gamma t}t^{\nu-1}dt=
2\left(\frac{\beta}{\gamma}\right)^{\frac{\nu}{2}}K_{\nu}\left(2\sqrt{\beta\gamma}\right),
\;\;\; \beta>0,\;\gamma>0,
\end{equation}
together with the contour integral representation for the corresponding Meijer $G$-function equation (\ref{ci}), applying Fubini's Theorem 
we obtain
\begin{equation}
\begin{split}
&\phi_{0,m}\left(j+1,y\right)=\frac{1}{2\pi i}\int\limits_{c-i\infty}^{c+i\infty}
\Gamma\left(u+\nu_1+j\right)\Gamma\left(u+\nu_2\right)\ldots\Gamma\left(u+\nu_{m-1}\right)
{2b^uy^{\frac{u}{2}}K_u\left(2by^{\frac{1}{2}}\right)}{y^{-u}}du.
\end{split}
\end{equation}
Inserting this into the formula  for $Q_{m,p}(y)$, equation (\ref{QSP}),  gives
\begin{equation}
\begin{split}
&Q_{m,p}(y)=\frac{1}{2\pi i}\int\limits_{c-i\infty}^{c+i\infty}
\prod\limits_{j=1}^{m-1}\Gamma\left(u+\nu_j\right)
{}_2F_1\left(-p,u+\nu_1;1+\nu_1;1\right)
{2b^uy^{-\frac{u}{2}}K_u\left(2by^{\frac{1}{2}}\right)}
du.
\end{split}
\end{equation}
By the same arguments as above, 
this formula can be rewritten as
\begin{equation}\label{QINTEGRAL2}
\begin{split}
Q_{m,p}(y)=\frac{(-1)^p\Gamma(1+\nu_1)}{2\pi i\Gamma\left(1+\nu_1+p\right)}
\int\limits_{c-i\infty}^{c+i\infty}
\frac{\prod_{j=0}^m\Gamma\left(u+\nu_j\right)
}{\Gamma(u-p)}
\frac{2b^uy^{-\frac{u}{2}}K_u\left(2by^{\frac{1}{2}}\right)}{\Gamma(u)}du.
\end{split}
\end{equation}
Now we are ready to prove Theorem \ref{TheoremDoubleIntegralRepresentationExactKernel}. In order to derive the formula for the correlation kernel stated in the Theorem \ref{TheoremDoubleIntegralRepresentationExactKernel} it is enough to represent the sum
$$
S_{n,m}(r,x;s,y;b)=\sum\limits_{p=0}^{n-1}\frac{\Gamma\left(\nu_1+p+1\right)}{\Gamma^2\left(\nu_1+1\right)p!}
P_{r,p}(x)Q_{s,p}(y)
$$
as a double contour integral. For this purpose use the contour integral representations for the functions
$P_{r,p}(x)$ and $Q_{s,p}(y)$ obtained above, equations (\ref{PINTEGRAL1}), (\ref{PINTEGRAL2}), (\ref{QINTEGRAL1}),
and (\ref{QINTEGRAL2}), and the formula derived in \cite[Eq.5.3]{KuijlaarsZhang}:
\begin{equation}\label{SumGammaFormula}
\sum\limits_{p=0}^{n-1}\frac{\Gamma(t-p)}{\Gamma(u-p)}=\frac{1}{u-t-1}\left[\frac{\Gamma(t-n+1)}{\Gamma(u-n)}
-\frac{\Gamma(t+1)}{\Gamma(u)}\right].
\end{equation}
Then  note that by the Residue Theorem  the second term in the right hand side of  equation (\ref{SumGammaFormula}) does not contribute
to $S_{n,m}(r,x;s,y;b)$. Here, we choose $c=\frac12$, and due to the shift in argument of the function $q_s(u,y;b)$ in equation \eqref{Functionq} this leads to the contour as stated in Theorem \ref{TheoremDoubleIntegralRepresentationExactKernel}.
\qed

\begin{Remarks}
We would like to point out that for the special case $m=2$ and $r=s=2$
 the kernel $K_{n,m}(r,x;s,y;b)$ we just determined in \eqref{FirstFormulaForK} can be shown to agree with the correlation kernel $K_N(x,y)$ obtained by the authors in  \cite[Theorem 3.2]{AkemannStrahov}, for a particular choice of $b$. This agreement is not surprising, in view of the identification made after \eqref{A2}.
Formula \eqref{FirstFormulaForK} for our kernel gives
\begin{equation}\label{A0}
K_{n,m=2}(r=2,x;s=2,y;b)=\sum\limits_{p=0}^{n-1}\frac{\Gamma\left(\nu_1+p+1\right)}{\Gamma^2\left(\nu_1+1\right)p!}
P_{r=2,p}(x)Q_{s=2,p}(y).
\end{equation}
Equation \eqref{Pmn} implies that the function $P_{r=2,p}(x)$ can be written as
\begin{equation}\label{A1p}
P_{r=2,p}(x)=\sum\limits_{j=0}^p\frac{(-p)_j}{\left(\nu_1+1\right)_{j}}\frac{x^{\frac{j}{2}}I_j\left(2bx^{\frac{1}{2}}\right)}{b^jj!}.
\end{equation}
Now let us find a convenient representation for the function $Q_{s=2,p}(x)$. Equation \eqref{QSP} reads
\begin{equation}\label{A2q}
Q_{s=2,p}(x)=\sum\limits_{j=0}^p\phi_{0,s=2}(j+1,y)\frac{(-p)_j}{\left(\nu_1+1\right)_jj!}.
\end{equation}
The function $\phi_{0,s=2}(j+1,y)$ was obtained in Proposition \ref{PropositionKV}, see equation \eqref{phi0siy}, and take into account
that $s=m=2$.
Using the representation of the exponential function in terms of the Meijer $G$-function
\begin{equation}
\label{Gexp}
x^{\nu}
e^{-\frac{y}{x}}=G_{0,1}^{1,0}\left(\begin{array}{c}
                         - \\
                         \nu
                       \end{array}
\biggl|{x}\right),
\end{equation}
we have
\begin{equation}\label{A3}
\phi_{0,s=2}(j+1,y)=\int\limits_{0}^{\infty}G_{0,1}^{1,0}\left(\begin{array}{c}
                                                                 - \\
                                                                 \nu_1+j
                                                               \end{array}
\biggl|t\right)e^{-\frac{y}{t}-b^2t}\frac{dt}{t}
=2\left(\frac{y}{b^2+1}\right)^{\frac{\nu_1+j}{2}}K_{\nu_1+j}\left(2\sqrt{\left(b^2+1\right)y}\right).
\end{equation}
Here, we have used \eqref{K-def2}.
If we insert (\ref{A3}) in formula (\ref{A2q}), we  find
\begin{equation}\label{A5}
Q_{s=2,p}(x)=2\sum\limits_{j=0}^p\frac{(-p)_j}{\left(\nu_1+1\right)_jj!}\left(\frac{y}{b^2+1}\right)^{\frac{\nu_1+j}{2}}
K_{\nu_1+j}\left(2\sqrt{\left(1+b^2\right)y}\right).
\end{equation}
Thus the kernel $K_{n,2}(2,x;2,y;b)$ is given by equation (\ref{A0}), where the functions
$P_{2,p}(x)$ and $Q_{2,p}(y)$ are defined by equations (\ref{A1p}) and (\ref{A5}), respectively.

Next, we turn to the kernel $K_N(x,y)$ obtained in \cite[Theorem 3.2]{AkemannStrahov}. There, the parameters $\alpha(\mu)$ and $\delta(\mu)$ were defined as
$$
\alpha(\mu)=\frac{1+\mu}{2\mu},\;\;\;\delta(\mu)=\frac{1-\mu}{2\mu},
$$
where $\mu$ takes values in the interval $(0,1]$. We thus have
$$
\frac{\alpha(\mu)^2-\delta(\mu)^2}{\delta(\mu)}=\frac{2}{1-\mu},\;\;\; \frac{\alpha(\mu)^2-\delta(\mu)^2}{\alpha(\mu)}=\frac{2}{1+\mu},\;\;\; \alpha(\mu)^2-\delta(\mu)^2=\frac{1}{\mu}.
$$
Taking this into account, and identifying $\nu_1=\nu$ there, we see that the correlation kernel $K_N(x,y)$  in  \cite[Theorem 3.2]{AkemannStrahov} can be written as
\begin{equation}
K_N(x,y)=\sum\limits_{n=0}^{N-1}P_n(x)Q_n(y),
\end{equation}
where
\begin{equation}
P_n(x)=\frac{(-1)^n(\nu_1+n)!n!}{\sqrt{\mu}}\sum\limits_{k=0}^{n}\frac{(-n)_k}{(\nu_1+k)!k!}\left(\frac{2x^{\frac{1}{2}}}{1-\mu}\right)^k
I_k\left(\frac{1-\mu}{\mu}x^{\frac{1}{2}}\right),
\end{equation}
and
\begin{equation}
Q_n(y)=\frac{2(-1)^n}{\sqrt{\mu}(n!)^2}\sum\limits_{l=0}^{n}\frac{(-n)_l}{(\nu_1+l)!l!}\left(\frac{2y^{\frac{1}{2}}}{1+\mu}\right)^{l+\nu_1}
K_{l+\nu_1}\left(\frac{1+\mu}{\mu}y^{\frac{1}{2}}\right).
\end{equation}
Setting
\begin{equation}\label{A6}
b=\frac{1-\mu}{2\sqrt{\mu}},\;\;\; x=\frac{\zeta}{\mu},\;\;\; y=\frac{\eta}{\mu},
\end{equation}
equations (\ref{A1p}) and (\ref{A5})  take the form
\begin{equation}\label{A11}
P_{r=2,p}(x)=\sum\limits_{j=0}^p\frac{(-p)_j}{\left(\nu_1+1\right)_{j}j!}
\left(\frac{2\zeta^{\frac{1}{2}}}{1-\mu}\right)^j
I_j\left(\frac{1-\mu}{\mu}\zeta^{\frac{1}{2}}\right),
\end{equation}
and
\begin{equation}\label{A12}
Q_{s=2,p}(y)=2\sum\limits_{j=0}^p\frac{(-p)_j}{\left(\nu_1+1\right)_{j}j!}
\left(\frac{2\eta^{\frac{1}{2}}}{1+\mu}\right)^{j+\nu_1}
K_{j+\nu_1}\left(\frac{1+\mu}{\mu}\eta^{\frac{1}{2}}\right).
\end{equation}
Here, we used that $(\nu_1+1)_j=(\nu_1+j)!/\nu_1!$. Thus we obtain the following identity
\begin{equation}
\mu K_{n,m=2}\left(r=2,\frac{\zeta}{\mu};s=2,\frac{\eta}{\mu};\frac{1-\mu}{2\sqrt{\mu}}\right)=K_{N=n}\left(\zeta,\eta\right).
\end{equation}
The extra factor of $\mu$ in front of the kernel on the left-hand side is due to the
change of variables defined by equation (\ref{A6}).
\end{Remarks}

\section{Proof of Theorem \ref{TheoremHardEdgeGinibreCouplingProcess}}
Given the double contour integral representation for correlation kernel
$K_{n,m}\left(r,x;s,y;b\right)$, we are ready to study
the asymptotic of the Ginibre product process with coupling, i.e. the asymptotic of the  matrix product process
associated with the multi-matrix model (\ref{MainProbabilityMeasure1}).
Recall that we consider the coupling parameter $b$ as a function of $n$, and investigate
the hard edge scaling limit at the origin  in different asymptotic regimes.
\subsection{The weak coupling regime}\label{SectionWCR}
In this regime we assume that ${b(n)}/{\sqrt{n}}\rightarrow 0$ as $n\rightarrow\infty$.
Let us first establish equation (\ref{WeakCouplingRegimeLimit}). Consider the first term in the right-hand side of
equation (\ref{CorrelationKernelFormulaContour}). We will show that in the weak coupling regime
the hard edge scaling limit of the first term is given by
\begin{equation}\label{ftl}
\underset{n\rightarrow\infty}{\lim}\frac{1}{n}\phi_{r,s}\left(\frac{x}{n},\frac{y}{n};b(n)\right)
=\frac{1}{x}G^{s-r,0}_{0,s-r}\left(
\begin{array}{c}
  - \\
  \nu_{r+1},\ldots,\nu_s
\end{array}\biggl|\frac{y}{x}
\right)\textbf{1}_{s>r}.
\end{equation}
The explicit formula for $\phi_{r,s}\left(x,y;b\right)$ is given in the statement of Theorem \ref{TheoremDoubleIntegralRepresentationExactKernel}.
Using this explicit formula, we immediately see that equation (\ref{ftl}) holds true
for $1\leq r<s\leq m-1$, and for $1\leq s\leq r\leq m$. For $r=m-1$, $s=m$ the left
hand side of equation (\ref{ftl}) is equal to $\frac{1}{x}e^{-\frac{y}{x}}$, which can be rewritten as
$$
\frac{1}{x}G^{1,0}_{0,1}\left(
\begin{array}{c}
  - \\
  \nu_m
\end{array}\biggl|\frac{y}{x}
\right)=x^{\nu_m-1} e^{-\frac{y}{x}}
$$
(recall that $\nu_m=0$). Therefore, equation (\ref{ftl}) holds true for
$r=m-1$, $s=m$ as well. For $1\leq r\leq m-2$, $s=m$ we can write
\begin{equation}
\begin{split}
\frac{1}{n}\phi_{r,s}\left(\frac{x}{n},\frac{y}{n};b(n)\right)
=\frac{1}{x}\int\limits_0^{\infty}G^{m-r-1,0}_{0,m-r-1}\left(
\begin{array}{c}
  - \\
  \nu_{r+1},\ldots,\nu_{m-1}
\end{array}\biggl|\frac{nt}{x}
\right)e^{-\frac{y}{nt}-b^2(n)t}\frac{dt}{t}.
\end{split}
\end{equation}
Changing the integration variable as $t=\frac{\tau}{n}$, and taking into account
that $e^{-\frac{b^2(n)}{n}\tau}\rightarrow 1$, in the weak coupling regime
we obtain that equation  (\ref{ftl}) holds true for  $1\leq r\leq m-2$, $s=m$.
In this calculations the procedure of taking the limit inside the integral can be justified by the dominated convergence theorem.

It remains to show that
\begin{equation}\label{WeakCouplingRegimeLimit1}
\begin{split}
&\underset{n\rightarrow\infty}{\lim}\left\{\frac{1}{n}S_{n,m}\left(r,\frac{x}{n};s,\frac{y}{n};b(n)\right)\right\}\\
&=\frac{1}{(2\pi i)^2}\int\limits_{-\frac{1}{2}-i\infty}^{-\frac{1}{2}+i\infty}
du\oint\limits_{\Sigma_{\infty}}dt\frac{\prod_{j=0}^s\Gamma(u+\nu_j+1)}{\prod_{j=0}^r\Gamma(t+\nu_j+1)}
\frac{\sin\pi u}{\sin\pi t}
\frac{x^ty^{-u-1}}{u-t},
\end{split}
\end{equation}
where $\Sigma_{\infty}$ is a contour starting from $+\infty$ in the upper half plane and returning to $+\infty$
in the lower half plane, leaving $-\frac{1}{2}$ on the left, and encircling $\left\{0,1,2,\ldots\right\}$.
To see that equation (\ref{WeakCouplingRegimeLimit1}) indeed holds true observe that
for $r,s\in\left\{1,\ldots,m-1\right\}$ the functions $p_r\left(t,\frac{x}{n};b(n)\right)$
and $q_s\left(u,\frac{y}{n};b(n)\right)$ are both identically equal to $1$, and that
for $r,s\in\left\{1,\ldots,m-1\right\}$ equation (\ref{WeakCouplingRegimeLimit}) can be obtained
by applying the same arguments as in Kuijlaars and Zhang \cite[Section 5.2.]{KuijlaarsZhang}.
In addition, we have
\begin{equation}
p_m\left(t,\frac{x}{n};b(n)\right)=\frac{\Gamma(t+1)
I_t\left(2\frac{b(n)}{\sqrt{n}}x^{\frac{1}{2}}\right)}{\left(\frac{b(n)}{\sqrt{n}}x^{\frac{1}{2}}\right)^t}\simeq 1;
\mbox{as}\; n\rightarrow\infty, \; \mbox{and}\;\frac{b(n)}{n^{\frac{1}{2}}}\rightarrow 0,
\end{equation}
and
\begin{equation}
q_m\left(u+1,\frac{y}{n};b(n)\right)=\frac{2\left(\frac{b(n)}{\sqrt{n}}y^{\frac{1}{2}}\right)^u
K_u\left(2\frac{b(n)}{\sqrt{n}}y^{\frac{1}{2}}\right)}{\Gamma(u)}\simeq 1;
\mbox{as}\; n\rightarrow\infty, \; \mbox{and}\;\frac{b(n)}{n^{\frac{1}{2}}}\rightarrow 0.
\end{equation}
Therefore, equation (\ref{WeakCouplingRegimeLimit1}) remains true for all
$r,s\in\left\{1,\ldots,m\right\}$. In these calculations the procedure of taking the limit inside
the double integral can be justified as in the proof of Theorem 5.3 in Kujlaars and Zhang \cite{KuijlaarsZhang}
using the dominated convergence theorem, and the asymptotic properties of the involved
Gamma and Bessel functions.

Now, the correlation kernel
of
the scaled Ginibre product process with coupling, formed by the configurations
$\left(ny_1^m,\ldots,ny_n^m;\ldots,ny_1^1,\ldots,ny_n^1\right)$, is related to
the correlation kernel
of the unscaled process
with configurations
$\left(y_1^m,\ldots,y_n^m;\ldots,y_1^1,\ldots,y_n^1\right)$
by replacing $K_{n,m}\left(r,x;s,y;b(n)\right)$
with
$
\frac{1}{n}K_{n,m}\left(r,\frac{x}{n};s,\frac{y}{n};b(n)\right).
$
Taking this into account, we see that  equation (\ref{WeakCouplingRegimeLimit})
implies the statement of Theorem \ref{TheoremHardEdgeGinibreCouplingProcess} in the weak coupling
regime.
\subsection{The intermediate coupling regime}
The proof of Theorem \ref{TheoremHardEdgeGinibreCouplingProcess} for
the intermediate coupling regime is based on arguments similar to those
presented in Section \ref{SectionWCR}. The difference is that in the intermediate
coupling regime we have for the transition functions that depend on $b(N)$, that is for $r=m$ or $s=m$:
\begin{equation}
\begin{split}
\underset{n\rightarrow\infty}{\lim}\left(\frac{1}{n}\phi_{r,s}\left(\frac{x}{n},\frac{y}{n};b(n)\right)\right)
=\phi_{r,s}\left(x,y;\alpha\right).
\end{split}
\nonumber
\end{equation}
This can be seen immediately from the formulae for $\phi_{r,s}\left(x,y;b(n)\right)$
in the statement of Theorem \ref{TheoremDoubleIntegralRepresentationExactKernel} when $n\rightarrow\infty$ and $\frac{b(n)}{n^{\frac{1}{2}}}\rightarrow\alpha$, as well as
\begin{equation}
\begin{split}
&p_m\left(t,\frac{x}{n};b(n)\right)\rightarrow p_m\left(t,x;\alpha\right),\\
&q_m\left(u,\frac{y}{n};b(n)\right)\rightarrow q_m\left(u,y;\alpha\right).
\end{split}
\nonumber
\end{equation}
using \eqref{Functionp} and \eqref{Functionq}.

\subsection{The strong coupling regime}
Let us first establish the limiting relations for the correlation
kernel in the strong coupling regime, equations (\ref{SKR2})-(\ref{SKR4}).
The kernels
with levels $1\leq r,s\leq m-1$ are $b$-independent, and the limit was already established in \eqref{WeakCouplingRegimeLimit}.

In order to establish equations (\ref{SKR2})-(\ref{SKR4}) we need to take into account
that $p_m(t,x;b)$ and $q_m(u,y;b)$ are nontrivial expressions involving the
modified Bessel functions of the first and of the second kind,
see equations (\ref{Functionp}), (\ref{Functionq}). These functions
have the following asymptotic as ${b(n)}/{n^{\frac{1}{2}}}\rightarrow\infty$:
\begin{equation}
p_m\left(t,\frac{b^2(n)x^2}{n^2};b(n)\right)\sim\frac{\Gamma(t+1)}{2\pi^{\frac{1}{2}}}
\frac{e^{2\frac{b^2(n)}{n}x}}{\left(\frac{b^2(n)}{n}x\right)^{t+\frac{1}{2}}},
\end{equation}
and
\begin{equation}
q_m\left(u+1,\frac{b^2(n)y^2}{n^2};b(n)\right)\sim\frac{\pi^{\frac{1}{2}}}{\Gamma(u+1)}
\frac{b^{2u+1}(n)}{n^{u+\frac{1}{2}}}y^{u+\frac{1}{2}}
e^{-2\frac{b^2(n)}{n}y}.
\end{equation}
The asymptotic formulae just written above follow immediately from the known
asymptotic of the modified Bessel function of the first kind $I_t\left(2bx^{\frac{1}{2}}\right)$, and
from the known asymptotic of the modified Bessel function of the second kind $K_u\left(2by^{\frac{1}{2}}\right)$
as $b\rightarrow\infty$, see equations (\ref{A1}) and (\ref{A2}), respectively. In addition, we have the following
ratio asymptotic of Gamma functions
$$
\frac{\Gamma(t-n+1)}{\Gamma(u-n+1)}=\frac{\sin\pi u}{\sin\pi t}n^{t-u}\left(1+O\left(n^{-1}\right)\right),\;\;\mbox{as}\;\; n\rightarrow\infty.
$$
Using the asymptotic formulae mentioned above we obtain the following limiting relations.\\
$\bullet$
For $1\leq r\leq m-1$, $s=m$ we have
\begin{equation}\label{SKR22}
\begin{split}
&\underset{n\rightarrow\infty}{\lim}\frac{2^{\frac{1}{2}}b(n)y^{\frac{1}{2}}}{n^{\frac{3}{2}}}
S_{n,m}\left(r,\frac{x}{n};m,\frac{b^2(n)y^2}{n^2};b(n)\right)\frac{e^{2\frac{b^2(n)}{n}y}}{2^{\frac{1}{2}}\pi^{\frac{1}{2}}}\\
&=\frac{1}{(2\pi i)^2}\int\limits_{-\frac{1}{2}-i\infty}^{-\frac{1}{2}+i\infty}
du\oint\limits_{\Sigma_{\infty}}dt\frac{\prod_{j=0}^{m-1}\Gamma(u+\nu_j+1)}{\prod_{j=0}^r\Gamma(t+\nu_j+1)}
\frac{\sin\pi u}{\sin\pi t}
\frac{x^ty^{-u-1}}{u-t}.
\end{split}
\end{equation}
$\bullet$
For $r=m$, $1\leq s\leq m-1$  we have
\begin{equation}\label{SKR33}
\begin{split}
&\underset{n\rightarrow\infty}{\lim}\frac{2^{\frac{1}{2}}b(n)x^{\frac{1}{2}}}{n^{\frac{3}{2}}}
S_{n,m}\left(m,\frac{b^2(n)x^2}{n^2};s,\frac{y}{n};b(n)\right)\frac{2^{\frac{1}{2}}\pi^{\frac{1}{2}}}{e^{2\frac{b^2(n)}{n}x}}\\
&=\frac{1}{(2\pi i)^2}\int\limits_{-\frac{1}{2}-i\infty}^{-\frac{1}{2}+i\infty}
du\oint\limits_{\Sigma_{\infty}}dt\frac{\prod_{j=0}^s\Gamma(u+\nu_j+1)}{\prod_{j=0}^{m-1}\Gamma(t+\nu_j+1)}
\frac{\sin\pi u}{\sin\pi t}
\frac{x^ty^{-u-1}}{u-t}.
\end{split}
\end{equation}
$\bullet$
Finally, for $r=m$ and $s=m$ the following limiting relation holds true
\begin{equation}\label{SKR44}
\begin{split}
&\underset{n\rightarrow\infty}{\lim}\frac{2b^2(n)x^{\frac{1}{2}}y^{\frac{1}{2}}}{n^{2}}
S_{n,m}\left(m,\frac{b^2(n)x^2}{n^2};m,\frac{b^2(n)y^2}{n^2};b(n)\right)
\frac{e^{2\frac{b^2(n)}{n}y}}{e^{2\frac{b^2(n)}{n}x}}\\
&=\frac{1}{(2\pi i)^2}\int\limits_{-\frac{1}{2}-i\infty}^{-\frac{1}{2}+i\infty}
du\oint\limits_{\Sigma_{\infty}}dt\frac{\prod_{j=0}^{m-1}\Gamma(u+\nu_j+1)}{\prod_{j=0}^{m-1}\Gamma(t+\nu_j+1)}
\frac{\sin\pi u}{\sin\pi t}
\frac{x^ty^{-u-1}}{u-t}.
\end{split}
\end{equation}
In the derivation of these formulae the interchange of limits and integrals can be justified by the dominated convergence theorem
in the same way as it is done in Kuijlaars and Zhang \cite[Section 5.2.]{KuijlaarsZhang}.
Formulae (\ref{SKR22})-(\ref{SKR44}) give the scaling limits of the second term,
$S_{n,m}\left(r,x;s,y;b(n)\right)$, in equation (\ref{CorrelationKernelFormulaContour})
for the correlation kernel $K_{n,m}\left(r,x;s,y;b(n)\right)$.

Now, let us find the asymptotic
of the first term, $\phi_{r,s}(x,y;b)$, in equation (\ref{CorrelationKernelFormulaContour}).
Recall that we already know the  asymptotic
of  $\phi_{r,s}(x,y;b)$ for $1\leq r,s\leq m-1$. So it is enough to consider
two cases.

$\bullet$ The first case corresponds to $1\leq r\leq m-2$, $s=m$.
In this case we have
\begin{equation}
\begin{split}
&\frac{2^{\frac{1}{2}}b(n)y^{\frac{1}{2}}}{n^{\frac{3}{2}}}
\phi_{r,m}\left(\frac{x}{n};\frac{b^2(n)y^2}{n^2};b(n)\right)\frac{e^{2\frac{b^2(n)}{n}y}}{2^{\frac{1}{2}}\pi^{\frac{1}{2}}}\\
&=\frac{2^{\frac{1}{2}}b(n)y^{\frac{1}{2}}}{n^{\frac{3}{2}}}\frac{n}{x}
\left[\int\limits_0^{\infty}G_{0,m-r-1}^{m-r-1,0}\left(\begin{array}{c}
                                                         - \\
                                                         \nu_{r+1},\ldots,\nu_{m-1}
                                                       \end{array}
\biggl|\frac{nt}{x}\right)e^{-\frac{b^2(n)y^2}{n^2t}-b^2(n)t}\frac{dt}{t}\right]\frac{e^{2\frac{b^2(n)}{n}y}}{2^{\frac{1}{2}}\pi^{\frac{1}{2}}}.
\end{split}
\end{equation}
The expression in the brackets can be written as
$$
I(\Lambda)=\int\limits_0^{\infty}G_{0,m-r-1}^{m-r-1,0}\left(\begin{array}{c}
                                                         - \\
                                                         \nu_{r+1},\ldots,\nu_{m-1}
                                                       \end{array}
\biggl|\frac{\tau}{x}\right)e^{-\Lambda\left(\frac{y^2}{\tau}+\tau\right)}
\frac{d\tau}{\tau},
$$
where $\Lambda=\frac{b^2(n)}{n}$. Using  Laplace's method for asymptotic expansions of integrals (see, for example, Miller \cite[Chapter 3.]{Miller}) we obtain that
the expression in the brackets has the following asymptotic
\begin{equation}
\begin{split}
&\int\limits_0^{\infty}G_{0,m-r-1}^{m-r-1,0}\left(\begin{array}{c}
                                                         - \\
                                                         \nu_{r+1},\ldots,\nu_{m-1}
                                                       \end{array}
\biggl|\frac{nt}{x}\right)e^{-\frac{b^2(n)y^2}{n^2t}-b^2(n)t}\frac{dt}{t}\\
&\qquad\qquad\sim\frac{\pi^{\frac{1}{2}}n^{\frac{1}{2}}}{b(n)y^{\frac{1}{2}}}
G_{0,m-r-1}^{m-r-1,0}\left(\begin{array}{c}
                                                         - \\
                                                         \nu_{r+1},\ldots,\nu_{m-1}
                                                       \end{array}
\biggl|\frac{y}{x}\right)e^{-\frac{2b^2(n)y}{n}},\;\;\mbox{as}\;\; \frac{b(n)}{\sqrt{n}}\rightarrow\infty.
\end{split}
\end{equation}
This gives
\begin{equation}\label{philimit1}
\underset{n\rightarrow\infty}{\lim}\frac{2^{\frac{1}{2}}b(n)y^{\frac{1}{2}}}{n^{\frac{3}{2}}}
\phi_{r,m}\left(\frac{x}{n};\frac{b^2(n)y^2}{n^2};b(n)\right)\frac{e^{2\frac{b^2(n)}{n}y}}{2^{\frac{1}{2}}\pi^{\frac{1}{2}}}
=\frac{1}{x}
G_{0,m-r-1}^{m-r-1,0}\left(\begin{array}{c}
                                                         - \\
                                                         \nu_{r+1},\ldots,\nu_{m-1}
                                                       \end{array}
\biggl|\frac{y}{x}\right),
\end{equation}
where $1\leq r\leq m-2$.

$\bullet$ For $r=m-1$, $s=m$ we can write
\begin{equation}
\begin{split}
\frac{2^{\frac{1}{2}}b(n)y^{\frac{1}{2}}}{n^{\frac{3}{2}}}
\phi_{m-1,m}\left(\frac{x}{n};\frac{b^2(n)y^2}{n^2};b(n)\right)\frac{e^{2\frac{b^2(n)}{n}y}}{2^{\frac{1}{2}}\pi^{\frac{1}{2}}}
&=\frac{b(n)y^{\frac{1}{2}}}{\pi^{\frac{1}{2}}n^{\frac{3}{2}}}\frac{n}{x}e^{-2\frac{b^2(n)y^2}{nx}-\frac{b^2(n)x}{n}+\frac{2b^2(n)y}{n}}\\
&=\frac{b(n)y^{\frac{1}{2}}}{\pi^{\frac{1}{2}}n^{\frac{1}{2}}x}e^{-\frac{b^2(n)}{xn}\left(y-x\right)^2}.
\end{split}
\end{equation}
We conclude that
\begin{equation}\label{philimit2}
\begin{split}
\underset{n\rightarrow\infty}{\lim}\frac{2^{\frac{1}{2}}b(n)y^{\frac{1}{2}}}{n^{\frac{3}{2}}}
\phi_{m-1,m}\left(\frac{x}{n};\frac{b^2(n)y^2}{n^2};b(n)\right)\frac{e^{2\frac{b^2(n)}{n}y}}{2^{\frac{1}{2}}\pi^{\frac{1}{2}}}=\delta(x-y),
\end{split}
\end{equation}
representing the Dirac delta distribution.
Taking into account equations (\ref{SKR22})-(\ref{SKR44}), (\ref{philimit1}), and (\ref{philimit2})
we obtain the limiting relations (\ref{SKR2})-(\ref{SKR4}) in the statement of Theorem \ref{TheoremHardEdgeGinibreCouplingProcess}.

In the weak and intermediate coupling regime in the previous two subsections the points on all levels were rescaled in the same way with $n$. In the strong coupling regime this is no longer so. Let us therefore explain how the different rescaling of the point configurations translates into different rescalings of the various correlation kernels. This justifies the different scalings that we have applied above.
Assume that the point configurations
$$
\left(y_1^m,\ldots,y_n^m;\ldots; y_1^1,\ldots,y_n^1\right)
$$
form a determinantal process on $\left\{1,\ldots,m\right\}\times\R_{>0}$
defined by the correlation kernel $K_{n,m}\left(r,x;s,y;b\right)$. Consider the scaled determinantal process on
$\left\{1,\ldots,m\right\}\times\R_{>0}$ formed by the scaled point configurations
\begin{equation}
\begin{split}
&\left(\frac{n}{b(n)}\left(y_1^m\right)^{\frac{1}{2}},\ldots,\frac{n}{b(n)}\left(y_n^m\right)^{\frac{1}{2}};
ny_1^{m-1},\ldots,ny_n^{m-1};\ldots;ny_1^{m-1},\ldots,ny_n^{m-1}\right)\\
&=\left(u_1^m,\ldots,u_n^m;u_1^{m-1},\ldots,u_n^{m-1};\ldots; u_1^1,\ldots,u_n^1\right).
\end{split}
\nonumber
\end{equation}
If $\varrho_{k_1,\ldots,k_m}$ is the correlation function of the original determinantal process, see their definition in \eqref{rhok-def}, and
$\widehat{\varrho}_{k_1,\ldots,k_m}$ is that of the scaled determinantal process, then we must have
\begin{equation}
\label{rhokscale}
\begin{split}
&\varrho_{k_1,\ldots,k_m}\left(y_1^1,\ldots,y_{k_1}^1;\ldots;y_1^m,\ldots,y_{k_m}^m\right)
\left(dy_1^1,\ldots,dy_{k_1}^1\right)\ldots\left(dy_1^m,\ldots,dy_{k_m}^m\right)\\
&=\widehat{\varrho}_{k_1,\ldots,k_m}\left(u_1^1,\ldots,u_{k_1}^1;\ldots;u_1^m,\ldots,u_{k_m}^m\right)
\left(du_1^1,\ldots,du_{k_1}^1\right)\ldots\left(du_1^m,\ldots,du_{k_m}^m\right).
\end{split}
\end{equation}
This gives the following relation between the initial correlation function, $\varrho_{k_1,\ldots,k_m}$,
and the correlation function $\widehat{\varrho}_{k_1,\ldots,k_m}$ of the scaled determinant process
\begin{equation}\label{RelationsBetweenCorrelationFunctions}
\begin{split}
&\widehat{\varrho}_{k_1,\ldots,k_m}\left(u_1^1,\ldots,u_{k_1}^1;\ldots;u_1^{m-1},\ldots,u_{k_{m-1}}^{m-1};
u_1^m,\ldots,u_{k_m}^m\right)\\
&=\frac{2^{k_m}u_1^m\ldots u_{k_m}^m}{n^{k_1+\ldots+k_{m-1}}}\left(\frac{b(n)}{n}\right)^{2k_m}\\
&\times\varrho_{k_1,\ldots,k_m}\left(\frac{u_1^1}{n},\ldots,\frac{u_{k_1}^1}{n};\ldots;\frac{u_1^{m-1}}{n},\ldots,\frac{u_{k_{m-1}}^{m-1}}{n};
\frac{b^2(n)}{n^2}\left(u_1^m\right)^2,\ldots,\frac{b^2(n)}{n^2}\left(u_{k_m}^m\right)^2\right).
\end{split}
\end{equation}
If $K_{n,m}\left(r,u;s,v\right)$ is the correlation kernel of the initial determinantal process, and
$\widehat{K}_{n,m}\left(r,u;s,v\right)$ is the correlation kernel of the scaled determinantal process,
then relation (\ref{RelationsBetweenCorrelationFunctions}) between the correlation functions implies
\begin{equation}\label{k0}
\begin{split}
&\widehat{K}_{n,m}\left(r,u;s,v\right)\\
&=\left\{
    \begin{array}{ll}
      \frac{1}{n}K_{n,m}\left(r,\frac{u}{n};s,\frac{v}{n}\right), & 1\leq r,s\leq m-1, \\
      \frac{2^{\frac{1}{2}}b(n)v^{\frac{1}{2}}}{n^{\frac{3}{2}}}K_{n,m}\left(r,\frac{u}{n};m,\frac{b^2(n)}{n^2}v^2\right), & 1\leq r\leq m-1, s=m, \\
      \frac{2^{\frac{1}{2}}b(n)u^{\frac{1}{2}}}{n^{\frac{3}{2}}}K_{n,m}\left(m,\frac{b^2(n)}{n^2}u^2;s,\frac{v}{n}\right), & r=m, 1\leq s\leq m-1, \\
      \frac{2b^2(n)u^{\frac{1}{2}}v^{\frac{1}{2}}}{n^2}K_{n,m}\left(m,\frac{b^2(n)}{n^2}u^2;m,\frac{b^2(n)}{n^2}v^2\right), & r=m, s=m.
    \end{array}
  \right.
\end{split}
\end{equation}
Also, let us take into account that if we have two correlation
kernels, say $K_{n,m}\left(r,x;s,y\right)$ and $K_{n,m}'\left(r,x;s,y\right)$
living on $\left\{1,\ldots,m\right\}\times\R_{>0}$, and related as
$$
K_{n,m}'\left(r,x;s,y\right)=\frac{\varphi_r(x)}{\varphi_s(y)}K_{n,m}\left(r,x;s,y\right),
$$
where $\varphi_1$, $\ldots$, $\varphi_m$ are certain non-vanishing functions defined on $\R_{>0}$, then
both $K_{n,m}\left(r,x;s,y\right)$ and $K_{n,m}'\left(r,x;s,y\right)$ give
the same correlation functions, and therefore define the same determinantal point process
on $\left\{1,\ldots,m\right\}\times\R_{>0}$. The two kernels are therefore called equivalent kernels. In other words, we have from \eqref{rhokscale}
\begin{equation}
\begin{split}
&\det\left[\begin{array}{ccc}
             \left(K_{n,m}(1,x_i^1;1,x_j^1)\right)_{i=1,\ldots,k_1}^{j=1,\ldots,k_1}
              & \ldots & \left(K_{n,m}(1,x_i^1;m,x_j^m)\right)_{i=1,\ldots,k_1}^{j=1,\ldots,k_m} \\
              \vdots &  &  \\
 \left(K_{n,m}(m,x_i^m;1,x_j^1)\right)_{i=1,\ldots,k_m}^{j=1,\ldots,k_1}
  & \ldots & \left(K_{n,m}(m,x_i^m;m,x_j^m)\right)_{i=1,\ldots,k_m}^{j=1,\ldots,k_m}
            \end{array}
\right]\\
&=\det\left[\begin{array}{ccc}
             \left(K_{n,m}'(1,x_i^1;1,x_j^1)\right)_{i=1,\ldots,k_1}^{j=1,\ldots,k_1}
              & \ldots & \left(K_{n,m}'(1,x_i^1;m,x_j^m)\right)_{i=1,\ldots,k_1}^{j=1,\ldots,k_m} \\
              \vdots &  &  \\
 \left(K_{n,m}'(m,x_i^m;1,x_j^1)\right)_{i=1,\ldots,k_m}^{j=1,\ldots,k_1}
  & \ldots & \left(K_{n,m}'(m,x_i^m;m,x_j^m)\right)_{i=1,\ldots,k_m}^{j=1,\ldots,k_m}
            \end{array}
\right].\\
\end{split}
\nonumber
\end{equation}
Therefore, the correlation kernel $\widehat{K}'_{n,m}\left(r,x;s,y\right)$ defined by
\begin{equation}
\begin{split}
&\widehat{K}'_{n,m}\left(r,u;s,v\right)\\
&=\left\{
    \begin{array}{ll}
      \frac{1}{n}K_{n,m}\left(r,\frac{u}{n};s,\frac{v}{n}\right), & 1\leq r,s\leq m-1, \\
      \frac{2^{\frac{1}{2}}b(n)v^{\frac{1}{2}}}{n^{\frac{3}{2}}}K_{n,m}\left(r,\frac{u}{n};m,\frac{b^2(n)}{n^2}v^2\right)
      \frac{e^{2\frac{b^2(n)}{n}v}}{2^{\frac{1}{2}}\pi^{\frac{1}{2}}}, & 1\leq r\leq m-1, s=m, \\
      \frac{2^{\frac{1}{2}}b(n)u^{\frac{1}{2}}}{n^{\frac{3}{2}}}K_{n,m}\left(m,\frac{b^2(n)}{n^2}u^2;s,\frac{v}{n}\right)
      \frac{2^{\frac{1}{2}}\pi^{\frac{1}{2}}}{e^{2\frac{b^2(n)}{n}u}}, & r=m, 1\leq s\leq m-1, \\
      \frac{2b^2(n)u^{\frac{1}{2}}v^{\frac{1}{2}}}{n^2}K_{n,m}\left(m,\frac{b^2(n)}{n^2}u^2;m,\frac{b^2(n)}{n^2}v^2\right)
      \frac{e^{2\frac{b^2(n)}{n}v}}{e^{2\frac{b^2(n)}{n}u}}, & r=m, s=m,
    \end{array}
  \right.
\end{split}
\end{equation}
gives the same scaled determinantal process as the correlation kernel  $\widehat{K}_{n,m}\left(r,x;s,y\right)$ defined by
equation (\ref{k0}). We conclude that the limiting relations (\ref{SKR2})-(\ref{SKR4})
imply that the scaled determinantal process on $\left\{1,\ldots,m\right\}\times\R_{>0}$
formed by the scaled point configurations
\begin{equation}
\begin{split}
\left(\frac{n}{b(n)}\left(y_1^m\right)^{\frac{1}{2}},\ldots,\frac{n}{b(n)}\left(y_n^m\right)^{\frac{1}{2}};
ny_1^{m-1},\ldots,ny_n^{m-1};\ldots;ny_1^{m-1},\ldots,ny_n^{m-1}\right)
\end{split}
\nonumber
\end{equation}
converges as $n\rightarrow\infty$ to the determinantal point process $\Pc^{\Ginibre}_{\infty,m-1}$ on
$\left\{1,\ldots,m-1\right\}\times\R_{>0}$
with the identification of levels $m$ and $m-1$ as given in Theorem \ref{TheoremHardEdgeGinibreCouplingProcess} {(C)}.
\qed
\section{Proof of Theorem \ref{TheoremInterpolation}}\label{SectionProofTheoremInterpolation}
Consider the determinantal point process on $\left\{1,\ldots,m\right\}\times\R_{>0}$
defined by the correlation kernel $K_{\infty,m}^{\interpol}(r,x;s,y;\alpha)$, see
equation (\ref{KernelZKExtended}). From equation (\ref{PHI(x,y,b)}) we obtain that
$$
\underset{\alpha\rightarrow 0}{\lim}\phi_{r,s}(x,y;\alpha)=\frac{1}{x}
G_{0,s-r}^{s-r,0}\left(\begin{array}{c}
                         - \\
                         \nu_{r+1},\ldots,\nu_s
                       \end{array}
\biggl|\frac{y}{x}\right)\textbf{1}_{s>r}.
$$
Here, we have used the fact that $\nu_m=0$, the integration formula
\begin{equation}
\begin{split}
\int\limits_0^{\infty}G_{0,m-r-1}^{m-r-1,0}\left(\begin{array}{c}
                         - \\
                         \nu_{r+1},\ldots,\nu_{m-1}
                       \end{array}
\biggl|\frac{y}{t}\right)
t^{\nu_m}e^{-t}
\frac{dt}{t}
=G_{0,m-r}^{m-r,0}\left(\begin{array}{c}
                         - \\
                         \nu_{r+1},\ldots,\nu_{m}
                       \end{array}
\biggl|{y}\right),
\end{split}
\end{equation}
together with the representation of the exponential function in terms of the Meijer $G$-function \eqref{Gexp}.
Thus, as $\alpha\rightarrow 0$, the first term in the right-hand side
of equation (\ref{KernelZKExtended}) for the correlation kernel $K_{\infty,m}^{\interpol}(r,x;s,y;\alpha)$
turns into the first term in the right-hand side
of equation (\ref{KernelInfiniteGinibreProductProcess}) for the correlation kernel $K_{\infty,m}^{\Ginibre}(r,x;s,y)$.
Moreover, taking into account that
$$
\underset{\alpha\rightarrow 0}{\lim}p_m(t,x;\alpha)=1,\;\; \underset{\alpha\rightarrow 0}{\lim}q_m(u,y;\alpha)=1,
$$
from equations (\ref{ISmallb}) and (\ref{KSmallb}),
we see that the second term in formula (\ref{KernelZKExtended}) for $K_{\infty,m}^{\interpol}(r,x;s,y;\alpha)$
turns into the second term in formula (\ref{KernelInfiniteGinibreProductProcess}) for  $K_{\infty,m}^{\Ginibre}(r,x;s,y)$. In these calculations the procedure of taking the limit inside
the double integral can be justified
using the dominated convergence theorem.
Thus Theorem \ref{TheoremInterpolation} {(A)} is proved.

The proof of Theorem \ref{TheoremInterpolation} {(B)} is very similar to that of Theorem \ref{TheoremHardEdgeGinibreCouplingProcess} {(C)}.
For this reason we present the sketch of the proof of  Theorem \ref{TheoremInterpolation} {(B)} only.
For $1\leq r,s\leq m-1$ the kernel
does not depend on $\alpha$ and agrees with that of the Ginibre point process given by  equation (\ref{KernelInfiniteGinibreProductProcess}).
 In order to check equations (\ref{I2})-(\ref{I4}) we use the known asymptotic formulae
for the modified Bessel functions of the first and of the second kind (see equations (\ref{A1}) and (\ref{A2})) to deduce that
\begin{equation}
p_m\left(t,\alpha^2x^2;\alpha\right)\sim\frac{\Gamma(t+1)}{\left(\alpha^2x\right)^t}
\frac{e^{2\alpha^2x}}{2\pi^{\frac{1}{2}}\alpha x^{\frac{1}{2}}},\;\;\alpha\rightarrow\infty,
\end{equation}
and that
\begin{equation}
q_m\left(u+1,\alpha^2y^2;\alpha\right)\sim\frac{\pi^{\frac{1}{2}}}{\Gamma(u+1)}
\frac{\left(\alpha^2y\right)^{u+1}}{\alpha y^{\frac{1}{2}}}e^{-2\alpha^2y},\;\;\alpha\rightarrow\infty.
\end{equation}
Similar arguments as in the proof of Theorem \ref{TheoremHardEdgeGinibreCouplingProcess} {(C)},
can be applied to show that equations (\ref{I2})-(\ref{I4}) ensure the convergence of the scaled interpolating process to
$\mathcal{P}^{\Ginibre}_{\infty,m-1}$.\qed


\end{document}